\documentclass[journal,onecolumn,12pt]{IEEEtran}
\usepackage{times,amsmath,amsthm,epsfig,amssymb,graphicx,amsfonts}
\usepackage{hyphenat}
\usepackage{psfrag}
\usepackage{ifpdf}
\usepackage{cite}
\usepackage{array}
\usepackage{multirow}
\usepackage{graphicx}
\usepackage{color}
\usepackage{subfigure}
\usepackage{epsfig}
\usepackage{citesort}
\usepackage{setspace}
\usepackage{latexsym}
\usepackage{amssymb}
\usepackage{amsmath}

\title{On the Capacity and Degrees of Freedom Regions of MIMO Interference Channels with Limited Receiver Cooperation}
\author{Mehdi Ashraphijuo, Vaneet Aggarwal and Xiaodong Wang \thanks{ M. Ashraphijuo and X. Wang are with the Electrical Engineering Department, Columbia University, New York, NY 10027, email: \{mehdi,wangx\}@ee.columbia.edu. V. Aggarwal is with AT\&T Labs-Research, Florham Park, NJ 07932, email: vaneet@research.att.com }}
\doublespacing
\newtheorem{theorem}{Theorem}
\newtheorem{definition}{Definition}

\newtheorem{lemma}{Lemma}
\newtheorem{corollary}{Corollary}

\begin{document}
\maketitle
\begin{abstract}

This paper gives the approximate capacity region of a two-user MIMO interference channel with limited receiver cooperation, where the gap between the inner and outer bounds is in terms of the total number of receive antennas  at the two receivers and is independent of the actual channel values. The approximate capacity region is then used to find the degrees of freedom region. For the special case of symmetric interference channels, we also find the amount of receiver cooperation in terms of the backhaul capacity beyond which the degrees of freedom do not improve. Further, the generalized degrees of freedom are found for MIMO interference channels with equal number of antennas at all nodes. It is shown that the generalized degrees of freedom improve gradually from a ``W'' curve to a ``V'' curve with increase in cooperation in terms of the backhaul capacity.
\end{abstract}

\newpage
\section{Introduction}

Wireless networks with multiple users are interference-limited rather than noise-limited. Interference channel (IC) is a good starting point for understanding the performance limits of interference-limited communications. In spite of research spanning over three decades, the capacity of the IC has been characterized only for some special cases \cite{Etkin,araja,sason,Varanasi,gdof,gdof.tarokh,Jafar}.

ICs model practical cellular networks. However, since the cellular base stations are connected via backhaul links, making efficient use of the backhaul is an important practical problem. This backhaul can lead to cooperation between transmitters in the downlink and cooperation between the receivers in the uplink \cite{C1,C2,C3,C4,C5,C6}. Cooperation between transmitters or receivers can help mitigate interference by forming a distributed MIMO system which provides performance gain. The rate at which they cooperate, however, is limited, due to physical constraints. In this paper, we tackle the fundamental problem of efficient use of limited-capacity backhaul for multiple-input multiple-output (MIMO) uplink ICs (with receiver cooperation).  Recently, many results have shown that transmitter and receiver cooperation can be employed in ICs to achieve an improvement in data rates \cite{Tse,Jafarr,TC,Maric,Prabha2,Prabha,CoMP1,CoMP2,CoMP3}. However, most of the existing works on ICs with cooperation are limited to discrete memoryless channels or to single-input single-output (SISO) channels. This paper analyzes two-user MIMO Gaussian ICs with limited receiver cooperation.

The authors of \cite{Tse} considered a two-user SISO Gaussian IC with limited receiver cooperation where there are links with fixed capacities between the two receivers and they found the capacity region of the channel within two bits. In this paper, we find an outer bound and an inner bound for the capacity region  that are within $N_1+N_2$ bits for a general MIMO IC with limited receiver cooperation (i.e., limited backhaul capacity), where $N_1$ and $N_2$ are the numbers of receive antennas at the two receivers, respectively. We use an achievability scheme based on that for the discrete memoryless channel in \cite{Tse}. In this scheme, receivers do not decode the messages immediately upon receiving the signals from the transmitters. One of the receivers quantizes its received signal at an appropriate distortion, bins the quantization codeword  and sends the bin index to the other receiver. For quantizing the received signal, a novel distortion function for MIMO IC is given in this paper. The other receiver decodes its own information based on its own received signal and the received bin index. After decoding, it bin-and-forwards the decoded common messages back to the other receiver and helps it decode. This paper uses the signal distributions and auxiliary variables that are different from those in \cite{Tse} and in such a way that can be used for a MIMO IC to achieve a constant gap to the capacity region.

We note that the achievability strategy for the SISO IC in \cite{Tse} is to split the transmit signal into public and private messages using the Han-Kobayashi message splitting where the private message is received at the unintended receiver below the noise floor. For a MIMO IC with limited receiver cooperation, we proposed a counterpart of Han-Kobayashi splitting in \cite{J1} where the covariance matrices for the public and the private messages were properly designed. In this paper, we give an achievability scheme based on the splitting scheme in \cite{J1}. Further, the authors of \cite{Tse} proposed different choices of power splits between the public and the private messages for  three different regions of SISO IC corresponding to: weak, mixed and strong interferences. In this paper, for MIMO IC, we propose a single choice of covariance matrices for the public and private messages for all regimes rather than considering different regimes separately. For the special case of SISO IC, the achievability scheme used in this paper reduces to a different one from that given in \cite{Tse}. The achievability scheme uses the convex hull of the regions formed by two strategies corresponding to different decoding orders. The convex hull of the two regions eliminates two constraints in each region resulting in a constant gap between the inner and the outer bounds. 


Having characterized the outer and inner bounds within a constant gap, and as a result having the approximate capacity region, we also find the degrees of freedom (DoF) region for the two-user MIMO IC with limited receiver cooperation. We find that the DoF region improves with the increase in cooperation in terms of the backhaul capacity. For the case of symmetric number of antennas in both the transmitters and the receivers, we find that the DoF improves up to a certain point in terms of the backhaul capacity, and beyond which the DoF does not improve anymore.

The symmetric DoF region formed when both each transmitter has $M$ antennas and each receiver has $N$ antennas, is a pentagon with bounds only on individual DoF $(d_1,d_2)$ and sum DoF $(d_1+d_2)$ for all cases except when $N<M<2N$. Thus, when the number of antennas at all the nodes are the same i.e., $M=N$, the DoF region is a pentagon. However, when $N<M<2N$, the DoF region also has constraints on $2d_1+d_2$ and $d_1+2d_2$. These constraints are known to not hold for IC with no cooperation \cite{gdof}, and for ICs with infinite cooperation which corresponds to a multiple-access channel (MAC)\cite{Tse.book}. In this paper, we find that the extra bounds on $2d_1+d_2$ and $d_1+2d_2$ are dominant for a finite non-zero limited cooperation (when the backhaul capacity is less than a certain value) for $N<M<2N$. We note that this result shows that the role of transmit and receive antennas cannot be interchanged to get the reciprocity result which exists in the case of no cooperation \cite{gdof}.

Finally, we also characterize the generalized degrees of freedom (GDoF) for a MIMO IC with limited receiver cooperation, when the cooperation links are of the same capacity which is increasing with a  base signal-to-noise ratio (SNR) parameter, say $\mathsf{SNR}$ (as $\beta\log \mathsf{SNR}$). Note that even though the DoF region is found in general, we find the GDoF only in a limited setting  when the number of antennas at all the nodes are the same (say $M$). We assume that the channel strengths of the direct links have values of the order $\mathsf{SNR}$ while the cross links have channel strengths of the order $\mathsf{SNR}^\alpha$.  We find that the increase in the cooperation leads to improvement in GDoF. For a given $M$ and $\alpha$, the GDoF increases till $\beta = M\alpha$ at which point the GDoF with limited cooperation is the same as that with full cooperation. Without any receiver cooperation, the GDoF  is a ``W" curve (a curve between the GDoF and $\alpha$). We note that the  ``W" curve changes to a ``V" curve and then to an increasing function as the backhaul capacity increases.


The remainder of the paper is organized as follows. Section II introduces the model for a MIMO IC with limited receiver cooperation and the capacity region. Sections III describes our results on the capacity region. Section IV describes our results on the DoF region and the GDoF. Section V concludes the paper. The detailed proofs of various results are given in Appendices A-D.

\section{Channel Model and Preliminaries}
In this section, we describe the channel model that is used in this paper. A two-user MIMO IC consists of two transmitters and two receivers. Transmitter $i$ is labeled as $\mathsf{T}_i$ and receiver $j$ is labeled as $\mathsf{D}_j$ for $i,j \in \{1,2\}$. Further, we assume $\mathsf{T}_i$ has $M_i$ antennas and $\mathsf{D}_j$ has $N_j$ antennas. Henceforth, such a MIMO IC will be referred to  as the $(M_1,N_1,M_2,N_2)$ MIMO IC. The channel matrix between transmitter $\mathsf{T}_i$ and receiver $\mathsf{D}_j$ is denoted by  $H_{ij}\in \mathbb{C}^{N_j\times M_i}$. We shall consider a time-invariant channel where the channel matrices remain fixed for the entire duration of communication.  At time $t$, transmitter $\mathsf{T}_i$ transmits a vector $X_{i}(t)\in \mathbb{C}^{M_i\times 1}$ over the channel with a power constraint ${{\rm tr}(\mathbb{E}(X_{i}X^{\dagger }_{i}))}\le 1$ ($A^\dagger$ is the conjugate transpose of the matrix $A$).

Let $Q_{ij}=\mathbb{E}(X_{i}X^{\dagger }_{j})$ for $i,j\in \{1,2\}$. We say $A\preceq B$ if $B-A$ is a positive semi-definite (p.s.d.) matrix and we say $A\succeq B$ if $B\preceq A$. The identity matrix of size $s\times s$ is denoted by $I_{s}$. Further, we define $x^+\triangleq \max\{x,0\}$. We also note that $0 \preceq Q_{ii}\preceq I$, and $0\preceq Q_{ij}Q_{ij}^\dagger\preceq I$.

We also incorporate a non-negative power attenuation factor, denoted as ${\rho}_{ij}$, for the signal transmitted from $\mathsf{T}_i$ to $\mathsf{D}_j$. The received signal at receiver $\mathsf{D}_i$ at time $t$ is denoted as $Y_{i}(t)$ for $i\in \{1,2\}$, and can be written as
\begin{eqnarray}
Y_{1}(t)&=&\sqrt{{\rho }_{11}}H_{11}X_{1}(t)+\sqrt{{\rho }_{21}}H_{21}X_{2}(t)+Z_{1}(t),\\
Y_{2}(t)&=&\sqrt{{\rho }_{12}}H_{12}X_{1}(t)+\sqrt{{\rho }_{22}}H_{22}X_{2}(t)+Z_{2}(t),
\end{eqnarray}
where $Z_{i}(t)\in \mathbb{C}^{N_i\times 1}\sim\mathsf{CN}(0,I_{N_i})$ is the i.i.d. complex Gaussian noise, ${\rho }_{ii}$ is the received SNR at receiver $\mathsf{D}_i$ and ${\rho }_{ij}$ is the received interference-to-noise-ratio at receiver $\mathsf{D}_j$ for $i, j\in \{1,2\},$ $i \ne j$. A MIMO IC with limited receiver cooperation is fully described by four parameters. The first is the numbers of antennas at each transmitter and receiver, namely $(M_1,N_1,M_2,N_2)$. The second is the set of channel gains, $\overline{H}=\{H_{11},H_{12},H_{21},H_{22}\}$. The third is the set of average link qualities, ${\overline{\rho }}=\{{\rho }_{11}{,{\rho }_{12},\rho }_{21},{\rho }_{22}\}$. The fourth parameter is ${\overline{C}}=\{{C }_{12},{C}_{21}\}$ where ${C}_{ji}$ is the capacity of the cooperation (backhaul) link from receiver $\mathsf{D}_j$ to $\mathsf{D}_i$. We assume that these parameters are known to all transmitters and receivers. Also, the cooperation channels are orthogonal to each other and they are orthogonal to the data channels.

The receiver-cooperation links are noiseless with finite capacities. Encoding is causal in the sense that the signal transmitted from  $\mathsf{D}_j$ at time $t$ is a function of whatever is received over the data channel, or on the cooperation link up to time $t-1$.
%
%
In addition, the decoded signal at $\mathsf{D}_i$, $\widehat{m_i}$, is a function of the received signal from the channel, $Y_{i}(t)$, and the cooperation signal transmitted from $\mathsf{D}_j$ to $\mathsf{D}_i$, $\mathsf{\Gamma}_{ji}$, for $i \in \{1,2\}$. Thus, the decoding functions of the two receivers are given as
\begin{eqnarray}
\hat{m_i}&=&f_{it}(\mathsf{\Gamma}_{ji},Y_{i}(t)),
\end{eqnarray}
where $f_{it}$ is the decoding function of $\mathsf{D}_i$ at time $t$. Let us assume that $\mathsf{T}_i$ transmits information at a rate of $R_i$ to receiver $\mathsf{D}_i$ using the codebook ${\mathcal C}_{i,n}$ of length-$n$ codewords with $|{\mathcal C}_{i,n}|=2^{nR_i}$. Given a message $m_i\in \{1,\dots ,2^{nR_i}\}$, the corresponding codeword $X^n_i(m_i)\in {\mathcal C}_{i,n}$ satisfies the power constraint mentioned before. From the received signal $Y^n_i$ and the cooperation message from $\mathsf{D}_j$, i.e. $\Gamma_{ji}$, $\mathsf{D}_i$ obtains an estimate $\hat{m_i}$ of the transmitted message $m_i$ using a decoding function. Denote the average probability of decoding error by $e_{i,n}={\rm Pr}({\rm \ }\hat{m_i}\ne m_i)$.

A rate pair $(R_1,R_2)$ is achievable if there exists a family of codebooks ${\mathcal C}_{i,n}$ and decoding functions such that ${{\rm max}}_i\{e_{i,n}\}$ goes to zero as the block length $n$ goes to infinity. The capacity region ${\mathbb C}(\overline{H},\overline{\rho },{\overline{C}})$ of the IC with parameters $\overline{H}$, $\overline{\rho }$ and ${\overline{C}}$ is defined as the closure of the set of all achievable rate pairs.

Consider a two-dimensional rate region ${\mathbb C}(\overline{H},\overline{\rho },{\overline{C}})$. Then, the region ${\mathbb C}(\overline{H},\overline{\rho },{\overline{C}})\ominus ([0,a]\times [0,b])$ denotes the region formed by $\{(R_1,R_2): R_1, R_2\ge 0, (R_1+a, R_2+b)\in {\mathbb C}(\overline{H},\overline{\rho },{\overline{C}})\}$  for some $a,b\ge 0$. Further, we define the notion of an achievable rate region that is within a constant number of bits of the capacity region as follows.

\begin{definition}
An achievable rate region ${\mathbb A}(\overline{H},\overline{\rho },{\overline{C}})$ is said to be within $b$ bits of the capacity region if ${\mathbb A}(\overline{H},\overline{\rho },{\overline{C}}) \subseteq {\mathbb C}(\overline{H},\overline{\rho },{\overline{C}})$ and ${\mathbb A}(\overline{H},\overline{\rho },{\overline{C}})\oplus ([0,b]\oplus[0,b]) \supseteq {\mathbb C}(\overline{H},\overline{\rho },{\overline{C}})$.
\end{definition}

\section{Inner and Outer Bounds on Capacity Region}\label{main_results}

In this section, we give the outer and inner bounds on the capacity region of a two-user MIMO IC with limited receiver cooperation. Let $\mathcal{R}_o$ be the convex hull of the region formed by $(R_1,R_2)$ satisfying the following constraints.

\begin{eqnarray}
R_1&\le& {\log  {\det  \left(I_{N_1}+{\rho}_{11}H_{11}H^{\dagger}_{11}\right)}}+\min \{\log \det \left(I_{N_2}+{\rho }_{12}H_{12}H^{\dagger }_{12}-{\rho }_{12}{\rho }_{11}H_{12}H^{\dagger }_{11}\right. \nonumber\\
&&\left.{\left(I_{N_1}+{\rho}_{11}H_{11}H^{\dagger}_{11}\right)^{-1}}H_{11}H^{\dagger }_{12}\right),C_{21} \},\label{ro0eq1}\\
R_2&\le& {\log  {\det  \left(I_{N_2}+{\rho}_{22}H_{22}H^{\dagger}_{22}\right)}}+\min\{\log \det \left(I_{N_1}+{\rho }_{21}H_{21}H^{\dagger }_{21}-{\rho }_{21}{\rho }_{22}H_{21}H^{\dagger }_{22}\right.\nonumber\\
&&\left.{\left(I_{N_2}+{\rho}_{22}H_{22}H^{\dagger}_{22}\right)^{-1}}H_{22}H^{\dagger }_{21}\right),C_{12}\},\label{ro0eq2}\\
R_1+R_2 &\le& \log  \det \left(I_{N_1}+{\rho}_{11}H_{11}H^{\dagger}_{11}+
{\rho}_{21}H_{21}H^{\dagger}_{21}-
{\rho}_{11}{\rho}_{12}H_{11}H^{\dagger}_{12}
{\left(I_{N_2}+{\rho}_{12}H_{12}H^{\dagger}_{12}\right)^{-1}}H_{12}H^{\dagger}_{11}\right)+\nonumber\\
&&\log  \det \left(I_{N_2}+{\rho}_{22}H_{22}H^{\dagger}_{22}+
{\rho}_{12}H_{12}H^{\dagger}_{12}-
{\rho}_{22}{\rho}_{21}H_{22}H^{\dagger}_{21}
{\left(I_{N_1}+{\rho}_{21}H_{21}H^{\dagger}_{21}\right)^{-1}}H_{21}H^{\dagger}_{22}\right)+\nonumber\\
&&C_{12}+C_{21},\label{ro0eq3}\\
R_1+R_2 &\le& \log  \det \left(I_{N_1}+{\rho}_{11}H_{11}H^{\dagger}_{11}-
{\rho}_{11}{\rho}_{12}H_{11}H^{\dagger}_{12}
{\left(I_{N_2}+{\rho}_{12}H_{12}H^{\dagger}_{12}\right)^{-1}}H_{12}H^{\dagger}_{11}\right)+\nonumber\\
&&\log  \det \left(I_{N_2}+{\rho}_{22}H_{22}H^{\dagger}_{22}+
{\rho}_{12}H_{12}H^{\dagger}_{12}\right)+C_{12},\label{ro0eq4}\\
R_1+R_2 &\le& \log  \det \left(I_{N_2}+{\rho}_{22}H_{22}H^{\dagger}_{22}-
{\rho}_{22}{\rho}_{21}H_{22}H^{\dagger}_{21}
{\left(I_{N_1}+{\rho}_{21}H_{21}H^{\dagger}_{21}\right)^{-1}}H_{21}H^{\dagger}_{22}\right)+\nonumber\\
&&\log  \det \left(I_{N_1}+{\rho}_{11}H_{11}H^{\dagger}_{11}+
{\rho}_{21}H_{21}H^{\dagger}_{21}\right)+C_{21},\label{ro0eq5}\\
R_1+R_2 &\le& \log  \det \left(I_{N_1+N_2}+
\left[\begin{array}{c}
\sqrt{\rho_{11}}H_{11} \\
\sqrt{\rho_{12}}H_{12} \end{array}\right]
\left[\sqrt{\rho_{11}}H_{11}^{\dagger}\ \sqrt{\rho_{12}}H_{12}^{\dagger}\right]+\right.\nonumber\\
&&\left.\left[\begin{array}{c}
\sqrt{\rho_{21}}H_{21} \\
\sqrt{\rho_{22}}H_{22} \end{array}\right]
\left[\sqrt{\rho_{21}}H_{21}^{\dagger}\ \sqrt{\rho_{22}}H_{22}^{\dagger}\right]\right),\label{ro0eq6}
\end{eqnarray}

\begin{eqnarray}
2R_1+R_2 &\le& \log  \det \left(I_{N_1}+{\rho}_{11}H_{11}H^{\dagger}_{11}-
{\rho}_{11}{\rho}_{12}H_{11}H^{\dagger}_{12}
{\left(I_{N_2}+{\rho}_{12}H_{12}H^{\dagger}_{12}\right)^{-1}}H_{12}H^{\dagger}_{11}\right)+\nonumber\\
&&\log  \det \left(I_{N_2}+{\rho}_{22}H_{22}H^{\dagger}_{22}+
{\rho}_{12}H_{12}H^{\dagger}_{12}-
{\rho}_{22}{\rho}_{21}H_{22}H^{\dagger}_{21}
{\left(I_{N_1}+{\rho}_{21}H_{21}H^{\dagger}_{21}\right)^{-1}}H_{21}H^{\dagger}_{22}\right)+\nonumber\\
&&\log  \det \left(I_{N_1}+{\rho}_{11}H_{11}H^{\dagger}_{11}+
{\rho}_{21}H_{21}H^{\dagger}_{21}\right)+C_{12}+C_{21},\label{ro0eq7}\\
R_1+2R_2 &\le& \log  \det \left(I_{N_2}+{\rho}_{22}H_{22}H^{\dagger}_{22}-
{\rho}_{22}{\rho}_{21}H_{22}H^{\dagger}_{21}
{\left(I_{N_1}+{\rho}_{21}H_{21}H^{\dagger}_{21}\right)^{-1}}H_{21}H^{\dagger}_{22}\right)+\nonumber\\
&&\log  \det \left(I_{N_1}+{\rho}_{11}H_{11}H^{\dagger}_{11}+
{\rho}_{21}H_{21}H^{\dagger}_{21}-
{\rho}_{11}{\rho}_{12}H_{11}H^{\dagger}_{12}
{\left(I_{N_2}+{\rho}_{12}H_{12}H^{\dagger}_{12}\right)^{-1}}H_{12}H^{\dagger}_{11}\right)+\nonumber\\
&&\log  \det \left(I_{N_2}+{\rho}_{22}H_{22}H^{\dagger}_{22}+
{\rho}_{12}H_{12}H^{\dagger}_{12}\right)+C_{21}+C_{12},\label{ro0eq8}\\
2R_1+R_2 &\le& \log  \det \left(I_{N_1+N_2}+
\left[\begin{array}{c}
\sqrt{\rho_{22}}H_{22} \\
\sqrt{\rho_{21}}H_{21} \end{array}\right]
\left(I_{M_2}-{\rho}_{21}H^{\dagger}_{21}{\left(I_{N_1}+{\rho}_{21}H_{21}H^{\dagger}_{21}\right)^{-1}}H_{21}\right)\right.\nonumber\\
&&\left.\left[\sqrt{\rho_{22}}H_{22}^{\dagger}\ \sqrt{\rho_{21}}H_{21}^{\dagger}\right]
+\left[\begin{array}{c}
\sqrt{\rho_{12}}H_{12} \\
\sqrt{\rho_{11}}H_{11}\end{array}\right]
\left[\sqrt{\rho_{12}}H_{12}^{\dagger}\ \sqrt{\rho_{11}}H_{11}^{\dagger}\right]\right)+\nonumber\\
&&\log  \det \left(I_{N_1}+{\rho}_{11}H_{11}H^{\dagger}_{11}+
{\rho}_{21}H_{21}H^{\dagger}_{12}\right)+C_{21},\label{ro0eq9}\\
R_1+2R_2 &\le& \log  \det \left(I_{N_1+N_2}+
\left[\begin{array}{c}
\sqrt{\rho_{11}}H_{11} \\
\sqrt{\rho_{12}}H_{12} \end{array}\right]
\left(I_{M_1}-{\rho}_{12}H^{\dagger}_{12}{\left(I_{N_2}+{\rho}_{12}H_{12}H^{\dagger}_{12}\right)^{-1}}H_{12}\right)\right.\nonumber\\
&&\left.\left[\sqrt{\rho_{11}}H_{11}^{\dagger}\ \sqrt{\rho_{12}}H_{12}^{\dagger}\right]
+\left[\begin{array}{c}
\sqrt{\rho_{21}}H_{21} \\
\sqrt{\rho_{22}}H_{22} \end{array}\right]
\left[\sqrt{\rho_{21}}H_{21}^{\dagger}\ \sqrt{\rho_{22}}H_{22}^{\dagger}\right]\right)+\nonumber\\
&&\log  \det \left(I_{N_2}+{\rho}_{22}H_{22}H^{\dagger}_{22}+
{\rho}_{12}H_{12}H^{\dagger}_{12}\right)+C_{12}.\label{ro0eq10}
\end{eqnarray}
The following theorem shows that the capacity region of a two-user MIMO IC with limited receiver cooperation is within $N_1+N_2$ bits of ${\mathcal R}_o$.

\begin{theorem}
The capacity region for an $(M_1,N_1,M_2,N_2)$ two-user MIMO IC with limited receiver cooperation, ${\mathbb C}_{RC}$, is bounded from outside and inside as
\begin{equation}
\mathcal{R}_o \ominus ([0,N_1+N_2]\times [0,N_1+N_2]) \subseteq {\mathbb C}_{RC}\subseteq \mathcal{R}_o.
\end{equation}
Thus, the inner and outer bounds are within $N_1+N_2$ bits. \label{outer_inner_capacity_reciprocal}
\end{theorem}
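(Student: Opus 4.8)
The plan is to establish the two inclusions separately: the outer bound $\mathbb{C}_{RC}\subseteq \mathcal{R}_o$ and the inner bound $\mathcal{R}_o \ominus ([0,N_1+N_2]\times[0,N_1+N_2]) \subseteq \mathbb{C}_{RC}$.

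\emph{Outer bound.} For $\mathbb{C}_{RC}\subseteq \mathcal{R}_o$, I would start from Fano's inequality applied to a length-$n$ code with vanishing error probability, so that $nR_i \le I(m_i; Y_i^n, \Gamma_{ji}) + n\epsilon_n$. The individual-rate bounds \eqref{ro0eq1}--\eqref{ro0eq2} follow by noting that $\Gamma_{ji}$ carries at most $nC_{ji}$ bits, giving a term $C_{ji}$, while the remaining mutual information $I(m_i; Y_i^n)$ is bounded by providing $\mathsf{D}_i$ with the interfering codeword as genie side information and using the worst-case-noise / Gaussian-maximizes-entropy argument; the resulting single-letter expression, maximized over input covariances with the constraint $0\preceq Q_{ii}\preceq I$, is exactly the log-det term in \eqref{ro0eq1}. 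The sum-rate and weighted-sum-rate bounds \eqref{ro0eq3}--\eqref{ro0eq10} require more careful genie arguments: for each bound one introduces appropriate side information at one or both receivers (e.g., giving $\mathsf{D}_1$ a degraded version of $Y_2$, or the public part of $X_2$), bounds the cooperation-link contributions by $C_{12}$ and/or $C_{21}$, and then telescopes the entropy differences. The MMSE/Schur-complement identity $\log\det(I + A + B) = \log\det(I+B) + \log\det(I + A - AB^\dagger(I+B)^{-1}\cdots)$-type manipulation is what produces the conditional-covariance log-det expressions appearing throughout \eqref{ro0eq1}--\eqref{ro0eq10}. Finally, since $\mathbb{C}_{RC}$ is convex (time-sharing), it is contained in the convex hull $\mathcal{R}_o$. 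I expect the $2R_1+R_2$ and $R_1+2R_2$ bounds \eqref{ro0eq7}--\eqref{ro0eq10} to be the delicate part of the converse, since they involve combining a sum-rate-style genie step with an additional single-rate step and keeping the antenna-count bookkeeping consistent.

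\emph{Inner bound.} For the achievability side I would invoke the quantize-bin-forward scheme described in the introduction (adapted from \cite{Tse} and the Han-Kobayashi-style covariance splitting of \cite{J1}): each transmitter splits its signal into a public and a private message with the covariance matrices designed so that the private signal arrives at the unintended receiver at roughly the noise level; one receiver Wyner-Ziv-quantizes its observation with the novel MIMO distortion function, bins it, and forwards the bin index over the backhaul; the other receiver decodes using its own observation plus the received bin, then bins-and-forwards its decoded common messages back. Writing out the resulting rate region yields, for each of the two decoding orders, a polytope described by many constraints; taking the convex hull of the two polytopes eliminates two of the constraints in each (as claimed in the introduction), and the surviving constraints match those of $\mathcal{R}_o$ up to additive terms. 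The crux is to show that each of these additive gap terms is at most $N_1+N_2$ bits \emph{independent of} $\overline{H}$ and $\overline{\rho}$: this comes from bounding quantities like $\log\det(I + \text{(something p.s.d.)} \preceq 2I)$ by the rank (which is at most the relevant number of receive antennas), and from the chosen quantization distortion contributing a bounded penalty per receive antenna. The main obstacle, and the technical heart of the theorem, is exactly this uniform gap control: one must verify that after taking the convex hull, every defining inequality of the inner region differs from the corresponding inequality of $\mathcal{R}_o$ by a constant that telescopes to at most $N_1+N_2$, and that the $\ominus$ operation then captures this precisely. I would organize the gap analysis bound-by-bound, mirroring \eqref{ro0eq1}--\eqref{ro0eq10}, and defer the detailed covariance-splitting and distortion calculations to the appendices as the paper indicates.
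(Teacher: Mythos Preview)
Your proposal is correct and follows essentially the same route as the paper: the outer bound is obtained by first proving an entropy-form lemma via genie-aided arguments (the paper imports this directly from \cite{Tse}, replacing scalars by matrices) and then upper-bounding each entropy term using Gaussian-maximizes-entropy together with the monotonicity lemma $L(K_1,S)\preceq L(K_2,S)$ from \cite{J1} and the block-determinant identity; the inner bound is obtained from the two-round quantize-bin-and-forward protocol with the covariance split $Q_{ip},Q_{ic}$ of \cite{J1} and the distortion $\Delta=I_{N_2}+\rho_{22}H_{22}Q_{2p}H_{22}^\dagger$, followed by a bound-by-bound gap analysis and the convex-hull argument that removes two constraints per strategy.

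One small correction to your description of the gap analysis: it is \emph{not} the case that each surviving inequality of the achievable region is within $N_1+N_2$ of the matching inequality in $\mathcal{R}_o$. The paper actually shows gaps as large as $N_1+N_2+\max(N_1,N_2)$ on the $R_1+R_2$ constraints and $2N_1+3N_2$ on the $R_1+2R_2$ constraints (see Lemma~\ref{achiv}); what makes the final statement true is that these gaps, once divided by the sum of the coefficients in the linear combination ($2$ for $R_1+R_2$, $3$ for $2R_1+R_2$), never exceed $N_1+N_2$ per user, which is exactly what the $\ominus$ operation requires.
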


\textbf{Outer Bound:} The complete proof that $\mathcal{R}_o$ is an outer bound for the capacity region of the two-user MIMO IC with limited receiver cooperation is given in Appendix A.

\eqref{ro0eq1}, \eqref{ro0eq2} and \eqref{ro0eq6} are cut-set based upper bounds. The other bounds are obtained based on genie-aided strategies making use of Fano's inequality, the data processing inequality, the fact that the cooperation messages are functions of $(Y_1,Y_2)$, and the fact that Gaussian distribution maximizes the entropy. The detailed derivations are given in Appendix A.


\textbf{Inner Bound:} Here,  we will give a brief description of the achievability strategy. The complete proof can be found in Appendix B.

The achievability scheme is based on a two-round strategy, similar to that used in \cite{Tse} for SISO interference channels. It consists of two parts: 1) the transmission scheme and 2) the cooperation protocol.

\noindent 1) Transmission Scheme:

Each transmitter $\mathsf{T}_i$ splits its own message into private and common sub-messages and sends
\begin{eqnarray}
X_i=X_{ip}+X_{ic},\label{1t}
\end{eqnarray}
where $X_{ip}\sim \mathsf{CN}\left(0,Q_{{ip}}\right)$ denotes the private message, and $X_{ic}\sim \mathsf{CN}\left(0,Q_{{ic}}\right)$ denotes the common message. We assume that $X_{ip}$ and $X_{ic}$ are independent with
\begin{eqnarray}
Q_{{ip}}
&=&I_{M_i}-
\sqrt{{\rho }_{ij}}H^{\dagger }_{ij}
(I_{N_j}+{\rho }_{ij}H_{ij}H^{\dagger }_{ij})^{-1}
\sqrt{{\rho }_{ij}}H_{ij},\label{2t}\\
{\rm and}\ \ \ Q_{{ic}}&=&I_{M_i}-Q_{{ip}},\label{3t}
\end{eqnarray}
for $i\in \{1,2\}$.

It is shown in Appendix B of \cite{J1} that $Q_{{ip}}\succeq 0$ and $Q_{{ic}}\succeq 0$. Further, this message split is such that a private signal is received at the other receiver with constant power. More specifically, we have $\rho_{ij}H_{ij}Q_{{ip}} H_{ij}^\dagger \preceq I_{N_j}$ thus the received signal at receiver $\mathsf{D}_j$ corresponding to the private signal from transmitter $\mathsf{T}_i$ is below the noise floor.

{\em Remark:} Note that the power allocation in \eqref{2t}-\eqref{3t} is different from that given in \cite{Tse} even for a SISO channel. In \cite{Tse} different achievability schemes were given for weak, mixed, and strong interference regimes. Here what we propose is a single choice of parameters for all interference regimes. For the special case of SISO IC, the above scheme constitutes an alternative choice of variances to those proposed in \cite{Tse}.

\noindent 2) Cooperation Protocol:

We use a two-round cooperation protocol similar to that in \cite{Tse}. In the first round, $\mathsf{D}_j$ quantizes its received signal and sends out the bin index. And then in the second round, $\mathsf{D}_i$ $i\ne j$ $(i,j)\in\{1,2\}$ receives this side information and decodes its desired messages (its own message plus the other's public message). After decoding, $\mathsf{D}_i$ randomly bins the decoded public messages, and sends the bin indices to $\mathsf{D}_j$. Finally, $\mathsf{D}_j$ decodes its message. In this two-round strategy, ${STG}_{j\rightarrow i\rightarrow j}$, the processing order is: $\mathsf{D}_j$ quantize-and-bins, $\mathsf{D}_i$ decode-and-bins and finally $\mathsf{D}_j$ decodes. Its achievable rate region is denoted by ${\mathcal R}_{j\rightarrow i\rightarrow j}$. By time-sharing, the rate region ${\mathcal R}\triangleq conv\{{{\mathcal R}_{2\rightarrow 1\rightarrow 2}}\cup {{\mathcal R}_{1\rightarrow 2\rightarrow 1}}\}$, i.e. the convex hull of the union of the two rate regions is achievable.

For simplicity, we consider strategy ${\mathcal R}_{2\rightarrow 1\rightarrow 2}$. $\mathsf{D}_2$ does not decode messages immediately upon receiving its signal. It first quantizes its signal by a pre-generated Gaussian quantization codebook with certain distortion and then sends out a bin index determined by a pre-generated binning function. It sets the distortion level equal to the aggregate power level of the noise and $\mathsf{T}_2$'s private signal. $\mathsf{D}_1$ decodes the two common messages and its own private message, by searching in transmitters' codebooks for a codeword triplet that is jointly typical with its received signal and some quantization point (codeword) in the given bin after retrieving the receiver-cooperative side information (the bin index). After $\mathsf{D}_1$ decodes, it uses two pre-generated binning functions to bin the two common messages and sends out these two bin indices to $\mathsf{D}_2$. After receiving these two bin indices, $\mathsf{D}_2$ decodes the two common messages and its own private message, by searching in the corresponding bins and $\mathsf{T}_2$'s private codebook for a codeword triplet that is jointly typical with its received signal.

Although the cooperation protocol is similar to that in \cite{Tse}, the distortion function used for the quantization of the received signal needs to be extended to the case of multiple antennas. We here describe the distortion function for ${STG}_{2\rightarrow 1\rightarrow 2}$. For the quantization, we use the quantization codebook satisfying
\begin{equation}\label{hat}
\hat{Y}_2 \triangleq Y_2 + \hat{Z}_2,
\end{equation}

\noindent where the distortion $\hat{Z}_2 \sim \mathsf{CN}(0, \Delta)$ with

\begin{equation}\label{delt}
\Delta = I_{N_2}+\rho_{22}H_{22}Q_{2p}H^{\dagger}_{22}.
\end{equation}

\noindent $\mathsf{D}_2$ then sends the bin index to $\mathsf{D}_1$. The rate loss due to this quantization, $\xi$, is given as
\begin{eqnarray}
\xi &\triangleq& I(\hat{Y_2};Y_2|X_{1c},X_1,X_{2c},Y_1)\nonumber
\end{eqnarray}

\begin{eqnarray}\label{xi}
&=&h\left(\sqrt{{\rho}_{22}}H_{22}X_{2p}+Z_2+\hat{Z}_2|\sqrt{{\rho}_{21}}H_{21}X_{2p}+Z_1\right)-h\left(\hat{Z}_2\right)\nonumber\\
&=&h\left(\sqrt{{\rho}_{22}}H_{22}X_{2p}+Z_2+\hat{Z}_2,\sqrt{{\rho}_{21}}H_{21}X_{2p}+Z_1\right)-h\left(\sqrt{{\rho}_{21}}H_{21}X_{2p}+Z_1\right)-h\left(\hat{Z}_2\right)\nonumber\\
&=& \log \det
\left[\begin{array}{cc}
I_{N_2}+\Delta+{\rho}_{22}H_{22}Q_{{2p}}H^{\dagger}_{22} & \sqrt{{\rho}_{21}{\rho}_{22}}H_{22}Q_{{2p}}H^{\dagger}_{21}\\
\sqrt{{\rho}_{21}{\rho}_{22}}H_{ji}Q_{{2p}}H^{\dagger}_{22} & I_{N_1}+{\rho}_{21}H_{21}Q_{{2p}}H^{\dagger}_{21}\end{array}
\right]\nonumber\\
&&-\log \det\left(I_{N_1}+{\rho}_{21}H_{21}Q_{{2p}}H^{\dagger}_{21}\right)-\log\det\left(\Delta\right)\nonumber\\
&=& \log \det
\left(I_{N_2}+\Delta+{\rho}_{22}H_{22}Q_{{2p}}H^{\dagger}_{22}
-\sqrt{{\rho}_{21}{\rho}_{22}}H_{22}Q_{{2p}}H^{\dagger}_{21}
{\left(I_{N_1}+{\rho}_{21}H_{21}Q_{{2p}}H^{\dagger}_{21}\right)^{-1}}\right.\nonumber\\
&&\left.\sqrt{{\rho}_{21}{\rho}_{22}}H_{21}Q_{{2p}}H^{\dagger}_{22}
\right)-\log\det\left(\Delta\right)\nonumber\\
&=& \log \det
\left(I_{N_2}+\Delta+
{\rho}_{22}H_{22}
\left(Q_{{2p}}-{\rho}_{21}Q_{{2p}}H^{\dagger}_{21}
{\left(I_{N_1}+{\rho}_{21}H_{21}Q_{{2p}}H^{\dagger}_{21}\right)^{-1}}
H_{21}Q_{{2p}}\right)
H^{\dagger}_{22}
\right)\nonumber\\
&&-\log\det\left(\Delta\right)\nonumber\\
&\le& \log \det
\left(I_{N_2}+\Delta+
{\rho}_{22}H_{22}
Q_{{2p}}
H^{\dagger}_{22}
\right)-\log\det\left(\Delta\right)\nonumber\\
&=& \log \det\left(2\Delta\right)-\log\det\left(\Delta\right)\nonumber\\
&=& N_{2}.
\end{eqnarray}

Thus, we see that the rate loss $\xi$ is upper bounded by the constant $N_{2}$. That is, replacing $\hat{Y_2}$ by $Y_2$ incurs at most $N_{2}$ bits.

{\em Remark:} The distortion specified in \eqref{delt} may not be optimal. The achievable rates can be further improved if we optimize over all possible distortions. For instance, if the cooperative link capacity is relatively large, we could lower the distortion level to achieve a better description of the received signals. With the expression of $\Delta$ in \eqref{delt}, however, we can show that the achievable rate region is within a constant number of bits to the capacity region for any channel parameters.

\begin{figure}[htbp]
\centering
	\includegraphics[width=8.5cm]{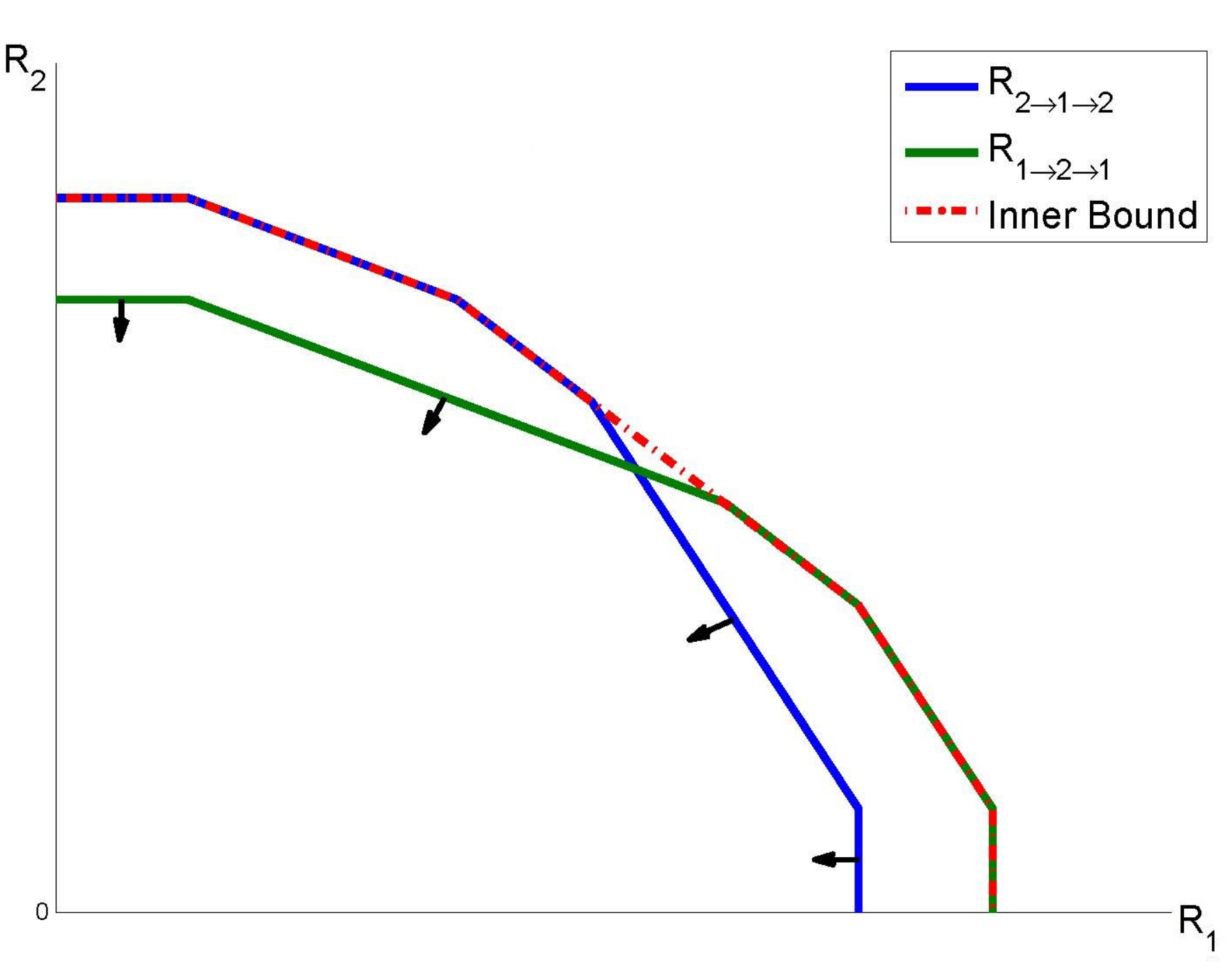}
    \label{fig:subfig1}
\caption{Time sharing of two regions ${\mathcal R}_{2\rightarrow 1\rightarrow 2}$ and ${\mathcal R}_{1\rightarrow 2\rightarrow 1}$. The four lines with arrow marks indicate that the corresponding bounds are not active when the convex hull of the two regions is taken.}
\label{fig:Example111}
\end{figure}

Considering the convex hull of the union of the achievable rate regions by the strategies $STG_{2\rightarrow 1\rightarrow 2}$ and $STG_{1\rightarrow 2\rightarrow 1}$ for MIMO IC, we show in Appendix B that we can get the achievable rate region for the general MIMO IC. Moreover, we will show in Appendix B that two of the bounds in each region will not play a role in the convex hull. This is because if any of these bounds is active, the bound on $R_1+R_2$ is active and thus following the arguments in \cite{Tse} we get that these bounds will not be active when we take the convex hull of the two regions. This is illustrated in Figure \ref{fig:Example111}, where it is seen that two of the bounds in each region are not dominant when a convex hull of the regions is taken.

Having considered the inner and outer bounds for the capacity region of the two-user IC with limited receiver cooperation, we have shown that the inner bound and the outer bound are within $N_1+N_2$ bits, thus finding the capacity region of the two-user IC with limited receiver cooperation, approximately.
%

\begin{figure}[htbp]
\centering
\subfigure[Weak Interference]{
	\includegraphics[width=3.1in]{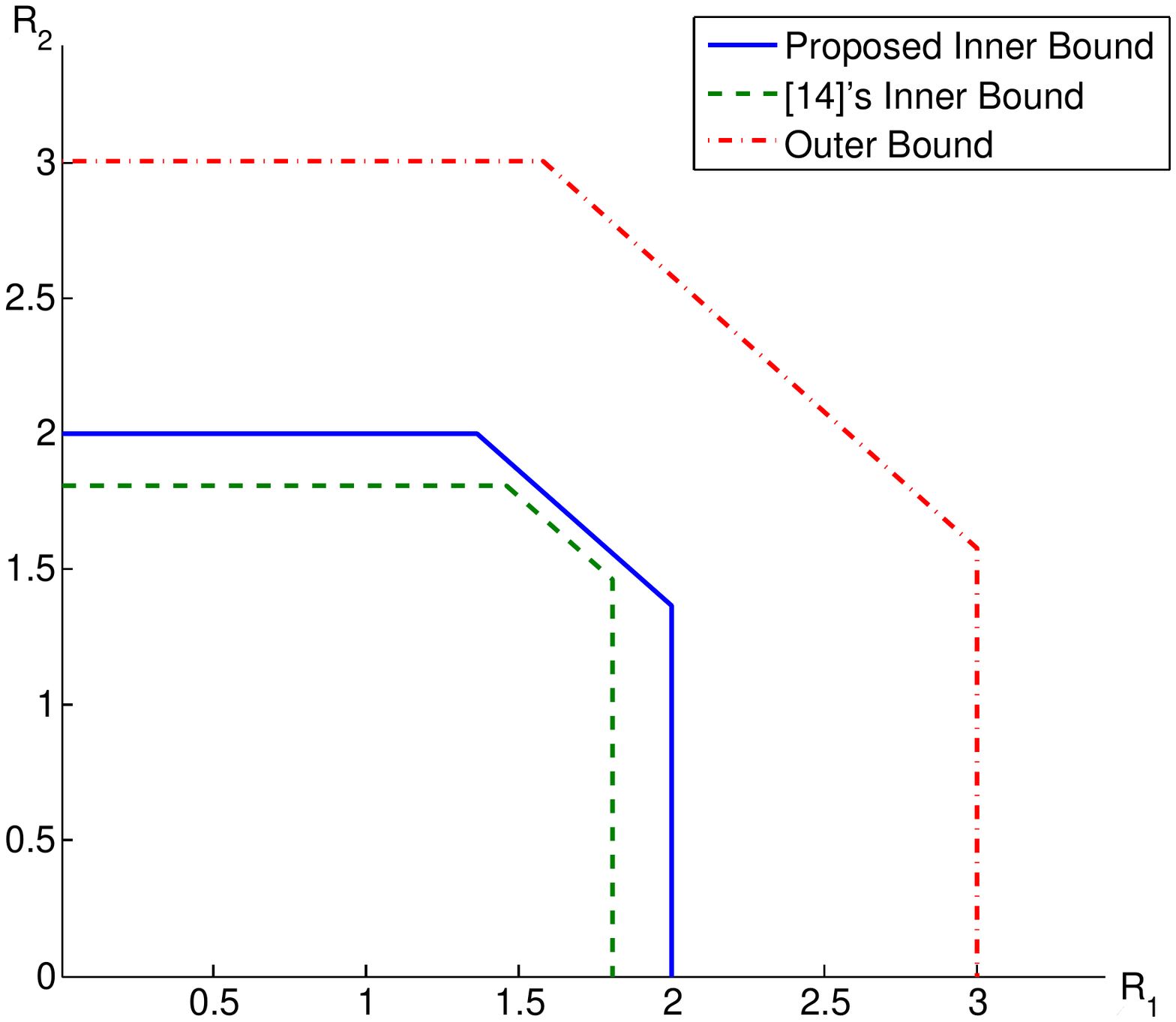}
    \label{fig:subfig3}
}
\subfigure[Strong Interference]{
	\includegraphics[width=3.1in]{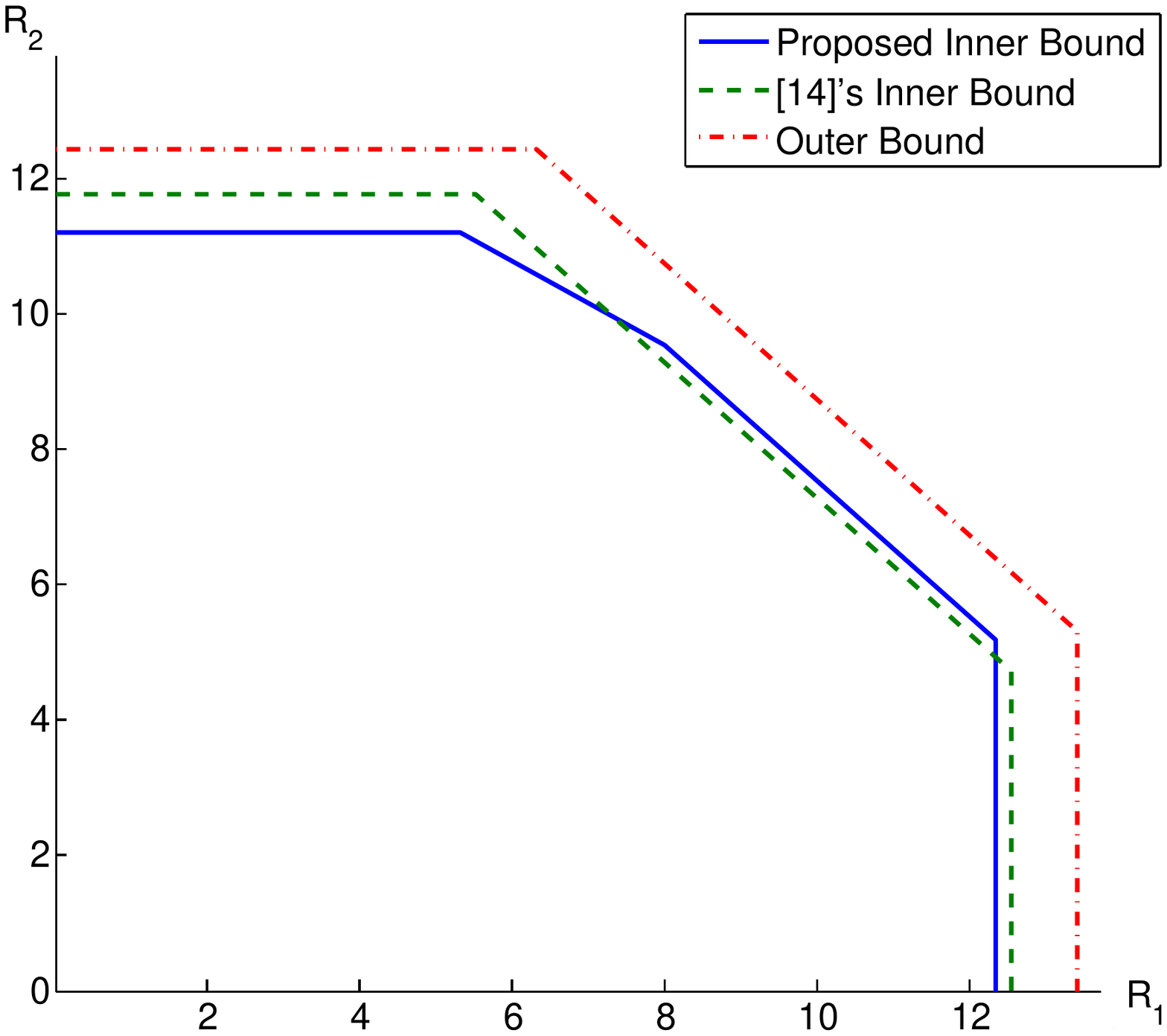}
    \label{fig:subfig2}
}
\subfigure[Mixed Interference]{
	\includegraphics[width=3.1in]{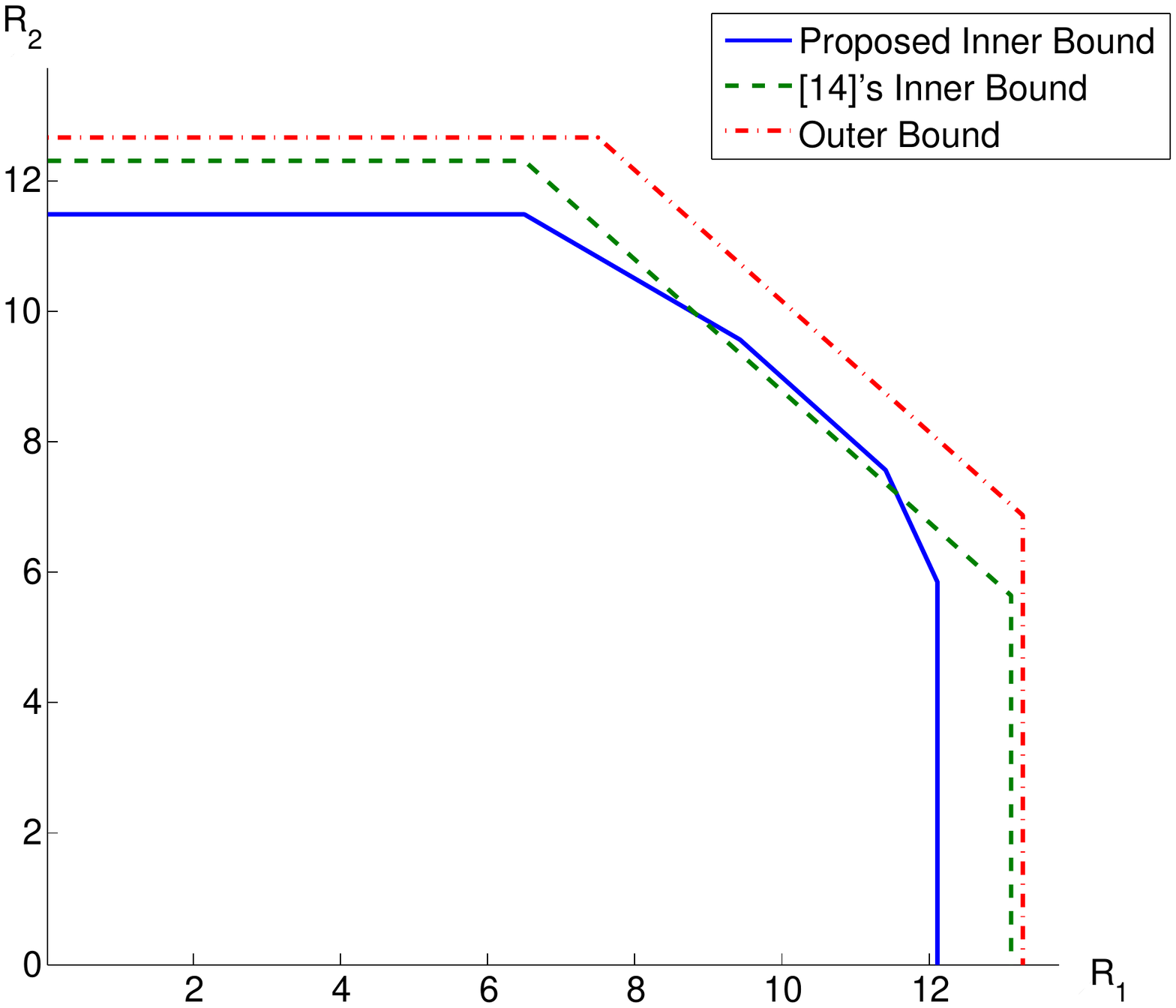}
    \label{fig:subfig1}
}
\caption{Comparison of the inner bound in this paper with that in \cite{Tse} for SISO interference channels. }
\label{fig:Example11}
\end{figure}

The authors of \cite{Tse} found the capacity region for the SISO IC with limited receiver cooperation within 2 bits. Theorem 1 generalizes the result to find the capacity region of MIMO IC with limited receiver cooperation within $N_1+N_2$ bits.

In Figure \ref{fig:Example11}, we compare the  inner bound given in Section V of \cite{Tse} for a SISO IC to that obtained in Lemma \ref{achiv} through some numerical examples. Since the outer bounds are the same for SISO, we only plot one outer bound. Let ${\mathsf {SNR}}_{i}=\rho_{ii}|H_{ii}|^2$ and ${\mathsf {INR}}_{i}=\rho_{ji}|H_{ji}|^2$ for $j\ne i$. In Figure 2(a), we consider a weak interference regime (${\mathsf {SNR}}_{1}\ge{\mathsf {INR}}_{2}$ and ${\mathsf {SNR}}_{2}\ge{\mathsf {INR}}_{1}$) with $C_{21}=1.1$, $C_{12}=1.1$, ${\mathsf {SNR}}_{1}=5$, ${\mathsf {SNR}}_{2}=5$, ${\mathsf {INR}}_{1}=2$ and ${\mathsf {INR}}_{2}=2$. In Figure 2(b), we consider strong interference regime (${\mathsf {SNR}}_{1}\le{\mathsf {INR}}_{2}$ and ${\mathsf {SNR}}_{2}\le{\mathsf {INR}}_{1}$) with $C_{21}=6$, $C_{12}=11$, ${\mathsf {SNR}}_{1}=1000$, ${\mathsf {SNR}}_{2}=1500$, ${\mathsf {INR}}_{1}=4000$ and ${\mathsf {INR}}_{2}=10000$. In Figure 2(c), we consider a  mixed interference regime (${\mathsf {SNR}}_{1}\ge{\mathsf {INR}}_{2}$ and ${\mathsf {SNR}}_{2}\le{\mathsf {INR}}_{1}$) with $C_{21}=6$, $C_{12}=11$, ${\mathsf {SNR}}_{1}=9000$, ${\mathsf {SNR}}_{2}=1500$, ${\mathsf {INR}}_{1}=5000$ and ${\mathsf {INR}}_{2}=1000$. We see from Figure \ref{fig:Example11} that the inner bounds are comparable. In the above example for weak interference channel, the strategy in this paper gives better achievable region than that in Section V.C. of \cite{Tse}.

\begin{figure}[htbp]
\centering{
	\includegraphics[width=3in]{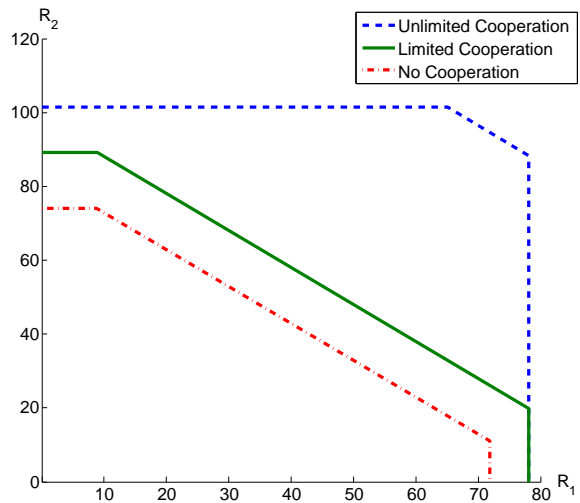}
    \label{fig:subfig3}
}
\caption{The outer bounds for a MIMO IC with no cooperation, limited cooperation and unlimited cooperation.}
\label{fig:Example12}
\end{figure}

In Figure \ref{fig:Example12}, we see the improvement in the capacity region (outer bound) for a MIMO IC with limited receiver cooperation. The parameters chosen for limited cooperation are $M_1=N_2=3$, $M_2=N_1=4$, $\rho_{11}=\rho_{22}=\rho_{12}=\rho_{21}=10^8$, $C_{21}=21$, $C_{12}=15$,
\begin{eqnarray}
H_{11}=\left[\begin{array}{ccc}
0.3096 & 0.1974 & 0.1080\\
0.3066 & 0.4470 & 0.3885\\
0.3595 & 0.6582 & 0.9854\\
0.4595 & 0.6582 & 0.4566
\end{array}\right],
H_{22}=\left[\begin{array}{cccc}
0.9070 & 0.6690 & 0.6854 & 0.6565\\
0.6067 & 0.9480 & 0.6585 & 0.6645\\
0.4465 & 0.6167 & 0.6845 & 0.3685
\end{array}\right],\nonumber\\
H_{21}=\left[\begin{array}{cccc}
0.8660 & 0.9767 & 0.4595 & 0.6582\\
0.8603 & 0.5850 & 0.6582 & 0.9854\\
0.3066 & 0.4470 & 0.6585 & 0.3885\\
0.3066 & 0.6167 & 0.4470 & 0.3885
\end{array}\right], \text{ and }
H_{12}=\left[\begin{array}{ccc}
0.1890 & 0.7650 & 0.3864\\
0.6678 & 0.2880 & 0.3867\\
0.4886 & 0.7904 & 0.2684
\end{array}\right].
\end{eqnarray}


%
%
%
%

\section{DoF and GDoF Regions}
In this section, we will use the DoF and GDoF regions to characterize the capacity region of the MIMO IC with limited receiver cooperation in the limit of high $\mathsf{SNR}$. We first describe our results on the DoF region of the two-user MIMO IC with limited receiver cooperation, and then proceed to the results on GDoF.

\subsection{DoF Region}
The DoF characterizes the simultaneously accessible fractions of spatial and signal-level dimensions (per channel use) by the two users when all the average channel parameters are an exponent of a nominal $\mathsf{SNR}$ parameter. Thus, we assume that
\begin{eqnarray}
&&{\mathop{{\lim}}_{{\log \mathsf{SNR}\to\infty}} \frac{{\log {C }_{ij}}}{{\log \mathsf{SNR}}}\ }={\beta_{ij}}, \text{ and }\nonumber\\
&&{\mathop{{\lim}}_{{\log \mathsf{SNR}\to\infty}} \frac{{\log {\rho }_{ij}}}{{\log \mathsf{SNR}}}}=1,
\end{eqnarray}
where ${\beta_{12}},{\beta_{21}}\in {\mathbb R}^+$.

The DoF region is defined as the region formed by the set of all $(d_1,d_2)$ such that $(d_1 \log\mathsf{SNR}-o(\log\mathsf{SNR}), d_2 \log\mathsf{SNR}-o(\log\mathsf{SNR}))$\footnote{$a=o(\mathsf{\log SNR})$ indicates that $\lim_{\mathsf{SNR}\to\infty}\frac{a}{\log\mathsf{SNR}}=0$.} is inside the capacity region. Further, the DoF is the maximum $d$ such that $(d,d)$ is in the DoF region. We note that since the channel matrices are of full ranks with probability 1, we will have the DoF and GDoF (next subsection) regions with probability 1 over the randomness of channel matrices.

In this subsection, we find the DoF region for the two-user MIMO IC with limited receiver cooperation. We use the approximate capacity region characterization in Theorem 1 to get the DoF region for the two-user MIMO IC as follows.

\begin{theorem}\label{thm_dof}
The DoF region for a general MIMO IC with limited receiver cooperation is given as follows:
\begin{eqnarray}
d_1&\le& {\min \left(M_1,N_1\right)}+\min  \{
\min\{N_2,(M_1-N_1)^+\},\beta_{21}\},\label{dofeq1}\\
d_2&\le& {\min \left(M_2,N_2\right)}+\min  \{
\min\{N_1,(M_2-N_2)^+\},\beta_{12}\},\label{dofeq2}\\
d_1+d_2 &\le&
\min\{N_1,(M_1-N_2)^++M_2\}
+\min\{N_2,(M_2-N_1)^++M_1\}
+\beta_{12}
+\beta_{21},\label{dofeq3}\\
d_1+d_2 &\le&
\min\{N_1,(M_1-N_2)^+\}
+\min\{N_2,M_1+M_2\}
+\beta_{12},\label{dofeq4}\\
d_1+d_2 &\le&
\min\{N_2,(M_2-N_1)^+\}
+\min\{N_1,M_1+M_2\}
+\beta_{21},\label{dofeq5}\\
d_1+d_2 &\le&
\min\{N_1+N_2,M_1+M_2\}
,\label{dofeqx}\\
2d_1+d_2 &\le&
\min\{N_2,(M_2-N_1)^++M_1\}+\min\{N_1,(M_1-N_2)^+\}+\nonumber\\
&&\min\{N_1,M_1+M_2\}+\beta_{12}+\beta_{21}, \label{dofeq6}\\
d_1+2d_2 &\le&
\min\{N_1,(M_1-N_2)^++M_2\}+\min\{N_2,(M_2-N_1)^+\}+\nonumber\\
&&\min\{N_2,M_2+M_1\}
+\beta_{12}+\beta_{21}, \label{dofeq7}\\
2d_1+d_2 &\le&
\min\{N_1+N_2,M_1\}+\min\{N_1,M_1+M_2\}+
\beta_{21}, \label{dofeqy}\\
d_1+2d_2 &\le&
\min\{N_1+N_2,M_2\}+\min\{N_2,M_1+M_2\}+
\beta_{12}. \label{dofeqz}
\end{eqnarray}
\end{theorem}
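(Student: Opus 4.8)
The plan is to read off Theorem~\ref{thm_dof} from the approximate capacity region of Theorem~\ref{outer_inner_capacity_reciprocal} by letting $\log\mathsf{SNR}\to\infty$. Since the gap between $\mathcal{R}_o\ominus([0,N_1+N_2]\times[0,N_1+N_2])$ and $\mathcal{R}_o$ is the fixed constant $N_1+N_2$, which is $o(\log\mathsf{SNR})$, the inner and outer bounds coincide after dividing coordinates by $\log\mathsf{SNR}$ and passing to the limit; hence the DoF region equals $\lim_{\mathsf{SNR}\to\infty}\tfrac1{\log\mathsf{SNR}}\,\mathcal{R}_o$, with achievability requiring no separate argument. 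It therefore suffices, for each of \eqref{ro0eq1}--\eqref{ro0eq10}, to compute $\lim_{\mathsf{SNR}\to\infty}(\text{right-hand side})/\log\mathsf{SNR}$: each $C_{ij}$ contributes $\beta_{ij}$ by the defining normalization of the DoF region, and since the constraints are linear in $(R_1,R_2)$ with right-hand sides that are $\log\det$ expressions or minima of such with a $C_{ij}$, and $\lim\tfrac{\min\{A,C\}}{\log\mathsf{SNR}}=\min\{\lim\tfrac{A}{\log\mathsf{SNR}},\lim\tfrac{C}{\log\mathsf{SNR}}\}$, everything reduces to finding the pre-log (multiplexing gain) of each $\log\det(\cdot)$ term.

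The pre-log computations I would organize around a few facts, stated as small lemmas. First, for generic $H\in\mathbb{C}^{N\times M}$ and any $\rho=\mathsf{SNR}^{1+o(1)}$, $\log\det(I_N+\rho HH^\dagger)=\min(M,N)\log\mathsf{SNR}+o(\log\mathsf{SNR})$, and more generally $\log\det\big(I_N+\sum_k\rho_k H_kQ_kH_k^\dagger\big)$ has pre-log $\min\big(N,\sum_k\mathrm{rank}^{\infty}(Q_k)\big)$ when all $\rho_k$ scale like $\mathsf{SNR}$ and the $H_k$ are mutually generic, where $\mathrm{rank}^{\infty}(Q_k)$ is the rank of the limiting matrix. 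Second, the identity $\rho_{ii}H_{ii}H_{ii}^\dagger-\rho_{ii}\rho_{ij}H_{ii}H_{ij}^\dagger(I_{N_j}+\rho_{ij}H_{ij}H_{ij}^\dagger)^{-1}H_{ij}H_{ii}^\dagger=\rho_{ii}H_{ii}Q_{ip}H_{ii}^\dagger$ with $Q_{ip}$ as in \eqref{2t}, together with the fact that $Q_{ip}\to$ the orthogonal projection onto $\ker H_{ij}$ as $\rho_{ij}\to\infty$ with residual $O(1/\rho_{ij})=O(1/\mathsf{SNR})$ on the complementary subspace (hence $O(1)$ inside the $\log\det$), so that $\mathrm{rank}^{\infty}(Q_{ip})=(M_i-N_j)^+$. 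Analogous simplifications reduce the inner factors $I_{M_2}-\rho_{21}H_{21}^\dagger(I_{N_1}+\rho_{21}H_{21}H_{21}^\dagger)^{-1}H_{21}$ in \eqref{ro0eq9}--\eqref{ro0eq10} (and their counterpart in the quantization computation \eqref{xi}) to $Q_{2p}$-type projections.

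With these in hand, \eqref{dofeq1}--\eqref{dofeq5}, \eqref{dofeq6}--\eqref{dofeq7} and the corresponding halves of \eqref{dofeqx}--\eqref{dofeqz} follow by routine bookkeeping: for example $\log\det(I_{N_1}+\rho_{11}H_{11}Q_{1p}H_{11}^\dagger+\rho_{21}H_{21}H_{21}^\dagger)$ has pre-log $\min\{N_1,(M_1-N_2)^++M_2\}$, giving the first term of \eqref{dofeq3}, and $\log\det(I_{N_2}+\rho_{22}H_{22}H_{22}^\dagger+\rho_{12}H_{12}H_{12}^\dagger)$ has pre-log $\min\{N_2,M_1+M_2\}$, giving the corresponding term of \eqref{dofeq4}.

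The hard part is the high-$\mathsf{SNR}$ analysis of the block-structured $\log\det$ terms in \eqref{ro0eq6}, \eqref{ro0eq9}, \eqref{ro0eq10}: there one must (i) perform the MMSE-type algebra (as in the derivation of $\xi$) to identify which blocks of the $(N_1+N_2)\times(N_1+N_2)$ matrix grow like $\mathsf{SNR}$ and which are $O(1)$, carefully tracking the ratios $\rho_{ij}/\rho_{i'j'}=\mathsf{SNR}^{o(1)}$ that decide whether the residual cross terms are asymptotically negligible, and (ii) compute the effective rank of the resulting matrix, using that $H_{11},H_{12},H_{21},H_{22}$ are independent and generic so the relevant column spaces are in general position. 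This is where the caps by $N_1+N_2$ (rather than by $N_1$ or $N_2$) and the $(M_i-N_j)^+$ terms in \eqref{dofeqx}, \eqref{dofeqy}, \eqref{dofeqz} come from, and it is also where one must verify that \eqref{dofeqy}/\eqref{dofeqz} is exactly the limit of \eqref{ro0eq9}/\eqref{ro0eq10}. The limiting region \eqref{dofeq1}--\eqref{dofeqz} is then automatically closed and convex, being a finite intersection of half-spaces, which completes the argument.
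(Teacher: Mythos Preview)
Your proposal is correct and takes essentially the same approach as the paper's Appendix~C: pass to the high-$\mathsf{SNR}$ limit of $\mathcal{R}_o$, use that the constant gap $N_1+N_2$ is $o(\log\mathsf{SNR})$ so achievability is automatic, and evaluate each $\log\det$ pre-log via rank/projection lemmas (the paper packages these as Lemmas~\ref{dgg}, \ref{dlemma-dof}, \ref{dlemma-dof2} rather than via your $Q_{ip}$ formulation, but the content is identical). One small slip: you write that the $(M_i-N_j)^+$ terms appear in \eqref{dofeqx}, \eqref{dofeqy}, \eqref{dofeqz}, but as stated in the theorem those three bounds contain only the $N_1+N_2$ caps and no $(M_i-N_j)^+$; the projection factor in \eqref{ro0eq9}--\eqref{ro0eq10} does not end up contributing a separate $(M_2-N_1)^+$ or $(M_1-N_2)^+$ summand in the final DoF expression, so when you carry out the block computation you should find the first $\log\det$ in \eqref{ro0eq9} has pre-log exactly $\min\{N_1+N_2,M_1\}$, matching \eqref{dofeqy}.
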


\begin{proof}
The proof can be found in Appendix C.
\end{proof}



\begin{corollary}\label{coll_gdof}
The symmetric DoF region where $\beta_{12}=\beta_{21}=\beta$, $N_1=N_2=N$, and $M_1=M_2=M$, is given as follows:

For $M\le N$:
\begin{eqnarray}\label{e1.1}
d_1&\le& M,\nonumber\\
d_2&\le& M,\nonumber\\
d_1+d_2&\le& N+\beta;
\end{eqnarray}

For $2N\le M$:
\begin{eqnarray}\label{e1.2}
d_1&\le& N+\beta,\nonumber\\
d_2&\le& N+\beta,\nonumber\\
d_1+d_2&\le& 2N;
\end{eqnarray}

For $N\le M\le 2N$:
\begin{eqnarray}
d_1&\le& \min\{M,N+\beta\},\nonumber\\
d_2&\le& \min\{M,N+\beta\},\nonumber\\
d_1+d_2&\le& \min\{M+\beta,2N\},\nonumber\\
2d_1+d_2&\le& N+M+\beta,\nonumber\\
d_1+2d_2&\le& N+M+\beta.
\end{eqnarray}
\end{corollary}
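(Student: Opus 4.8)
The plan is to obtain Corollary~\ref{coll_gdof} directly from Theorem~\ref{thm_dof} by substituting $M_1=M_2=M$, $N_1=N_2=N$ and $\beta_{12}=\beta_{21}=\beta$ into the ten inequalities \eqref{dofeq1}--\eqref{dofeqz}, evaluating every $(\cdot)^+$ and nested $\min$, and then checking that the simplified system is equivalent to the short list claimed in each of the three regimes. A preliminary remark roughly halves the work: under the symmetric substitution \eqref{dofeq1} and \eqref{dofeq2} are images of each other under $d_1\leftrightarrow d_2$, as are \eqref{dofeq6}--\eqref{dofeq7} and \eqref{dofeqy}--\eqref{dofeqz}, while \eqref{dofeq4} and \eqref{dofeq5} coincide, so it suffices to track one representative of each pair together with \eqref{dofeq3} and the cut-set bound \eqref{dofeqx}. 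Throughout, the inclusion ``the specialized Theorem~\ref{thm_dof} region $\subseteq$ the listed region'' is the easy direction (each listed inequality is one of \eqref{dofeq1}--\eqref{dofeqz} after picking a branch of a $\min$), and the content is the reverse inclusion, verified inequality-by-inequality, usually via the decomposition $2d_1+d_2=d_1+(d_1+d_2)$.

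For $M\le N$ we have $(M-N)^+=0$ and $\min(M,N)=M$, so \eqref{dofeq1}--\eqref{dofeq2} give $d_1\le M$, $d_2\le M$; \eqref{dofeq4} gives $d_1+d_2\le\min\{N,2M\}+\beta$; and \eqref{dofeqx} gives $d_1+d_2\le 2M$. Since $\min\{N,2M\}+\beta\le N+\beta$, the claimed region $\{d_1\le M,\ d_2\le M,\ d_1+d_2\le N+\beta\}$ drops out, the only nontrivial checks for the reverse inclusion being the weighted-sum bounds \eqref{dofeq3}, \eqref{dofeq6}--\eqref{dofeq7}, \eqref{dofeqy}--\eqref{dofeqz}, handled by $2d_1+d_2\le M+\min\{N+\beta,2M\}$ together with $\min\{N+\beta,2M\}\le\min\{N,2M\}+2\beta$ (resp.\ $+\beta$). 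For $2N\le M$ we have $(M-N)^+=M-N\ge N$, so $\min\{N,(M-N)^+\}=N$, $\min(M,N)=N$, and \eqref{dofeq1} becomes $d_1\le N+\min\{N,\beta\}$, which combined with $d_1\le d_1+d_2\le\min\{2N,2M\}=2N$ from \eqref{dofeqx} is exactly $d_1\le\min\{N+\beta,2N\}$; the sum bounds \eqref{dofeq3}, \eqref{dofeq4} collapse to $2N+2\beta$, $2N+\beta$, both dominated by \eqref{dofeqx}, and the weighted-sum bounds \eqref{dofeqy}, \eqref{dofeq6} to $3N+\beta$, $3N+2\beta$, both holding on $\{d_i\le N+\beta,\ d_1+d_2\le 2N\}$. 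In both regimes one should note that some of the listed constraints (e.g.\ $d_1+d_2\le N+\beta$ when $2M<N$, or $d_i\le N+\beta$ when $\beta>N$) are non-binding but remain valid, so the region identities hold for all $\beta\ge 0$.

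The regime $N\le M\le 2N$ is where the extra constraints appear and where care is needed. Now $(M-N)^+=M-N\in[0,N]$, so \eqref{dofeq1} becomes $d_1\le N+\min\{M-N,\beta\}=\min\{M,N+\beta\}$, \eqref{dofeq4} becomes $d_1+d_2\le(M-N)+N+\beta=M+\beta$, and \eqref{dofeqx} gives $d_1+d_2\le 2N$, i.e.\ together $d_1+d_2\le\min\{M+\beta,2N\}$. The crucial comparison is between the two weighted-sum inequalities: \eqref{dofeqy} becomes $2d_1+d_2\le\min\{2N,M\}+\min\{N,2M\}+\beta=M+N+\beta$, whereas \eqref{dofeq6} only gives the weaker $2d_1+d_2\le M+N+2\beta$; hence it is \eqref{dofeqy}--\eqref{dofeqz} that supply the bounds $2d_1+d_2\le N+M+\beta$ and $d_1+2d_2\le N+M+\beta$, and \eqref{dofeq3}, \eqref{dofeq6}, \eqref{dofeq7} are redundant (dominated by \eqref{dofeqx}, \eqref{dofeqy}, \eqref{dofeqz} respectively). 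Collecting the survivors yields exactly the five inequalities listed for this regime, and the reverse inclusion is again a routine check using $2d_1+d_2=d_1+(d_1+d_2)$.

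The main obstacle is bookkeeping rather than any new idea: one must evaluate each of the three nested $\min$'s inside the ten inequalities correctly in each regime, and in the regime $N\le M\le 2N$ recognize that the active $2d_1+d_2$ bound comes from \eqref{dofeqy}--\eqref{dofeqz} (a single $+\beta$) rather than from \eqref{dofeq6}--\eqref{dofeq7} (a $+2\beta$), so that the latter drop out; the residual subtlety is keeping the short region descriptions honest for the sub-ranges of $\beta$ in which some of their constraints are slack.
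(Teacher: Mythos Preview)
Your proposal is correct and follows exactly the approach implicit in the paper: the corollary is stated there without proof as an immediate specialization of Theorem~\ref{thm_dof}, and you carry out precisely that specialization --- substituting $M_1=M_2=M$, $N_1=N_2=N$, $\beta_{12}=\beta_{21}=\beta$ into \eqref{dofeq1}--\eqref{dofeqz}, simplifying the $(\cdot)^+$ and $\min$ terms in each regime, and then verifying that the omitted inequalities are redundant. Your identification that in the regime $N\le M\le 2N$ the active $2d_1+d_2$ bound is \eqref{dofeqy} (with a single $+\beta$) rather than \eqref{dofeq6} (with $+2\beta$) is the one point requiring care, and you handle it correctly.
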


\begin{figure}[ht]
\centering
\subfigure[$M\le N$.]{
	\includegraphics[width=3.1in]{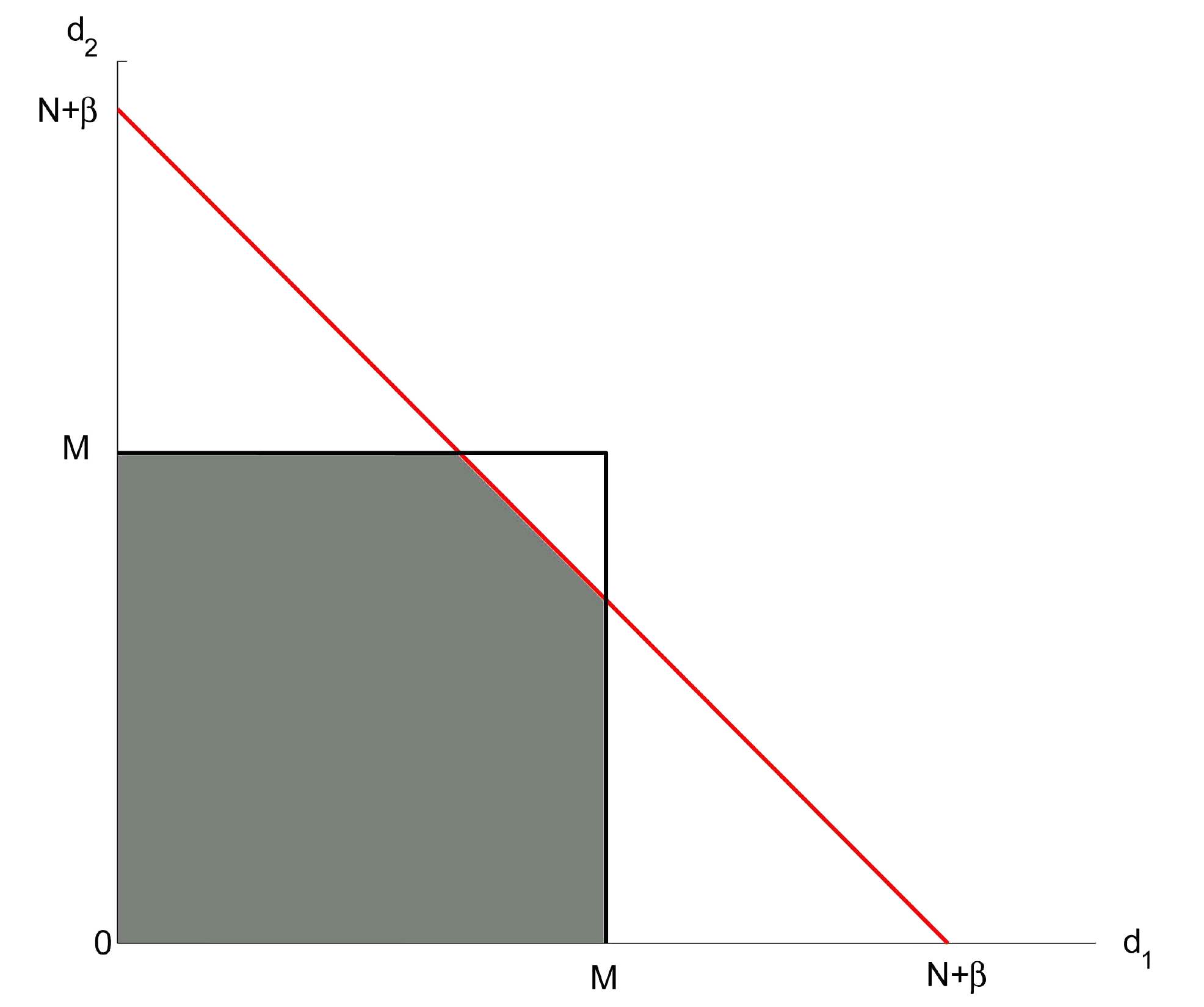}
    \label{fig:subfig1}
}
\subfigure[$2N\le M$.]{
	\includegraphics[width=3.1in]{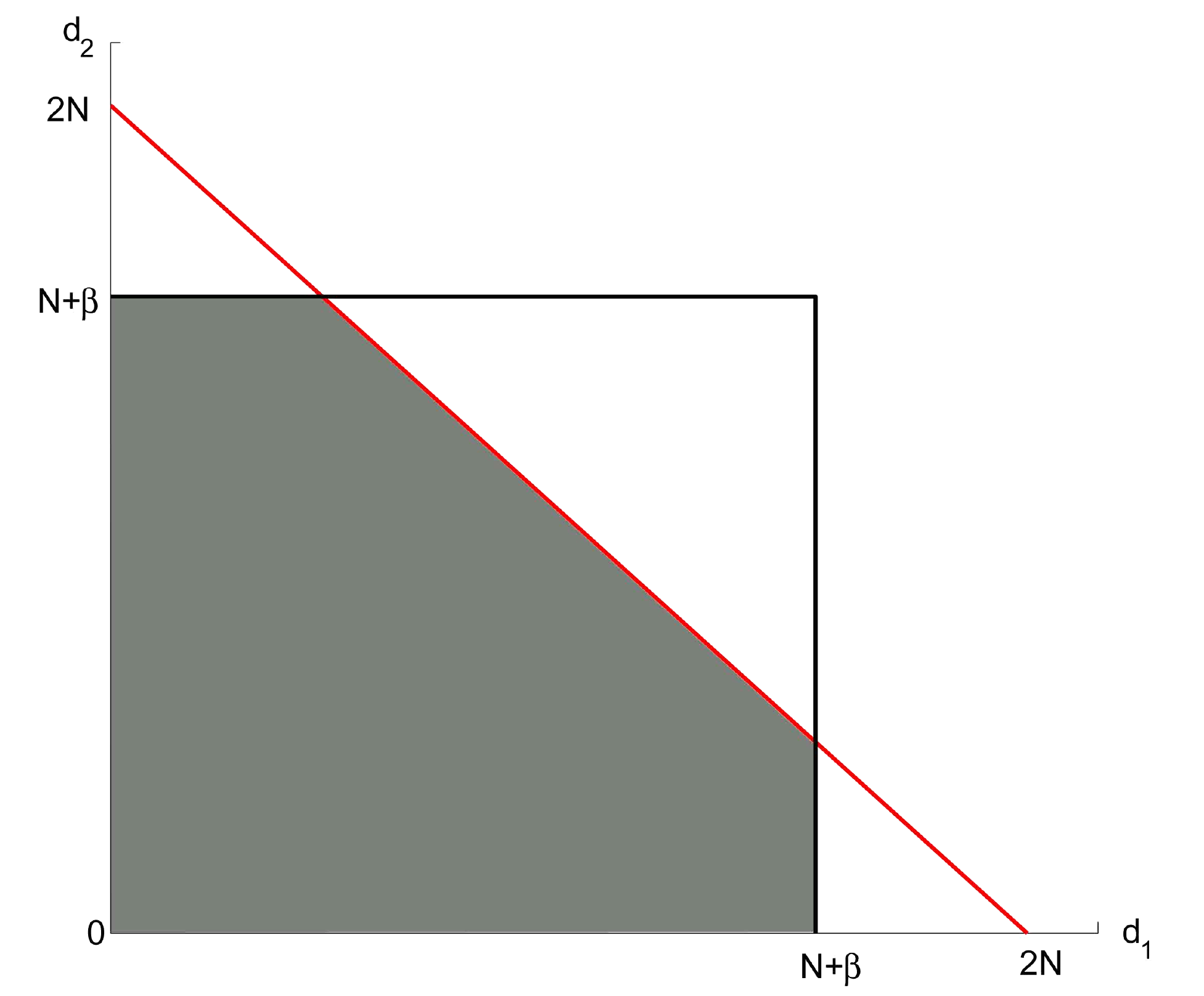}
    \label{fig:subfig2}
}
\subfigure[$N\le M\le 2N$, where $D=\min(M,N+\beta)$, $E=\frac{M+N+\beta}{2}$, and $F=\min(M+\beta,2N)$.]{
	\includegraphics[width=3.1in]{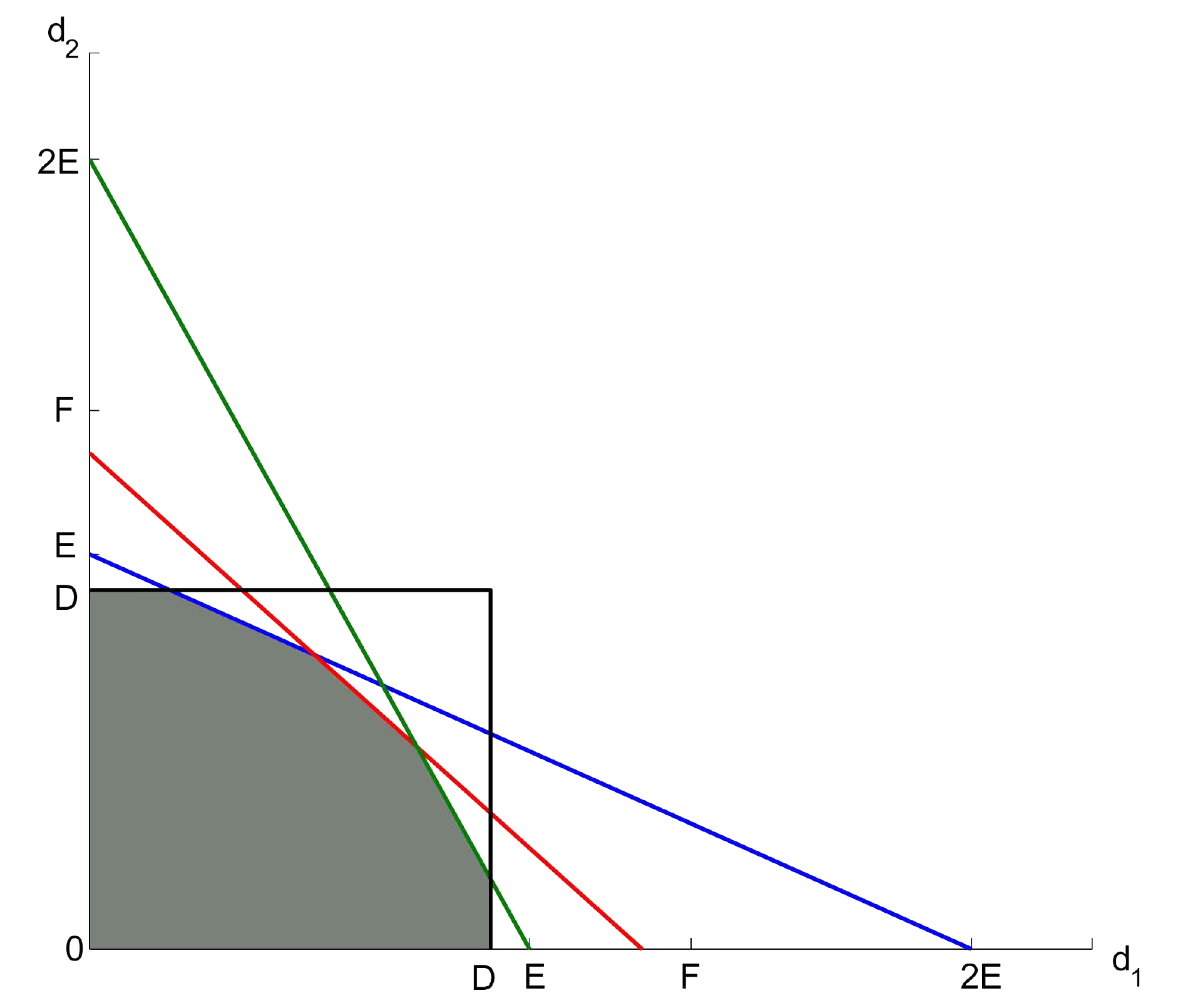}
    \label{fig:subfig3}
}
\caption[Optional caption for list of figures]{The DoF region for symmetric MIMO IC with limited receiver cooperation (grey areas).}
\label{fig:Example}
\end{figure}

These three cases are illustrated in Figure \ref{fig:Example}.

\begin{corollary}
For the symmetric DoF region where $\beta_{12}=\beta_{21}=\beta$, $N_1=N_2=N$, and $M_1=M_2=M$, cooperation improve the DoF region for $\beta \le \min\{N,(2M-N)^+\}$.
\end{corollary}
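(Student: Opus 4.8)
The plan is to compare the symmetric DoF region of Corollary~\ref{coll_gdof} at cooperation level $\beta$ with the region at $\beta=0$, and show the regions coincide precisely when $\beta > \min\{N,(2M-N)^+\}$, so that strict improvement occurs for all smaller $\beta$. Since the three regimes $M\le N$, $2N\le M$, and $N\le M\le 2N$ have structurally different bounds, I would simply treat them one at a time; in each case the region is an explicit polytope whose facets are affine functions of $\beta$, so the analysis reduces to tracking which bounds are active.

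First, for $M\le N$: the only $\beta$-dependent bound is $d_1+d_2\le N+\beta$, while $d_i\le M$ are fixed. At $\beta=0$ the sum bound is $d_1+d_2\le N$, which is strictly inside the box $d_i\le M$ iff $2M>N$, i.e. iff $\min\{N,(2M-N)^+\}=\min\{N,2M-N\}>0$; and increasing $\beta$ strictly enlarges the region exactly as long as $N+\beta<2M$, i.e. $\beta<2M-N=\min\{N,2M-N\}$ (using $M\le N\Rightarrow 2M-N\le N$). So in this regime cooperation helps precisely for $\beta\le\min\{N,(2M-N)^+\}$. Second, for $2N\le M$: here $(2M-N)^+\ge 3N\ge N$, so $\min\{N,(2M-N)^+\}=N$, and the claim is that cooperation helps for $\beta\le N$. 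The relevant bounds are $d_i\le N+\beta$ and $d_1+d_2\le 2N$; the individual bounds bind (i.e. the symmetric point $d_1=d_2=N+\beta$ is cut off by the sum constraint) iff $2(N+\beta)>2N$, i.e. $\beta>0$, and the corner of the region moves with $\beta$ until $N+\beta\ge 2N$, i.e. $\beta\ge N$. So again the threshold is exactly $\min\{N,(2M-N)^+\}=N$.

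The main work — and the step I expect to be the real obstacle — is the regime $N\le M\le 2N$, where the region has five facets: $d_i\le\min\{M,N+\beta\}$, $d_1+d_2\le\min\{M+\beta,2N\}$, and $2d_i+d_j\le N+M+\beta$. I would argue that the $\beta$-monotone bounds $N+\beta$, $M+\beta$, and $N+M+\beta$ each become inactive (dominated by the $\beta$-free bounds $M$, $2N$, or by the others) exactly when $\beta$ is large enough, and one must check that the last bound to saturate does so at $\beta=\min\{N,2M-N\}$. Concretely: $N+\beta\ge M\iff\beta\ge M-N$; $M+\beta\ge 2N\iff\beta\ge 2N-M$; and one checks that for $\beta\ge 2N-M$ the constraint $2d_1+d_2\le N+M+\beta$ is already implied by $d_1\le M$ together with $d_1+d_2\le 2N$ (since $M+2N\le N+M+\beta$). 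Since $N\le M\le 2N$ gives $M-N\le 2N-M\le N$ when... (careful: $2N-M\le N\iff M\ge N$, true; and $2N-M$ vs $2M-N$: $2N-M\le 2M-N\iff N\le M$, true), the binding threshold is $\max\{M-N,2N-M\}$, and one must verify this equals $\min\{N,2M-N\}$ on the range $N\le M\le 2N$. Indeed for $N\le M\le 3N/2$ we have $2M-N\le 2N$ forcing... I would split at $M=3N/2$: for $M\le 3N/2$, $2N-M\ge M-N$ and $2N-M=\min\{N,2M-N\}$ fails unless one checks $2N-M$ against $2M-N$ — here $2M-N\le 2N-M$... hence the delicate point is reconciling $\max\{M-N,2N-M\}$ with $\min\{N,2M-N\}$, which holds because on $[N,2N]$ both equal $2N-M$ on the lower half reindexed appropriately; I would verify this identity by direct case analysis at the breakpoint $M=3N/2$ and confirm monotonicity of each side. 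Once this identity is established, strict enlargement of the polytope for every $\beta$ below the threshold follows because at least one active facet is of the form (const)$+\beta$, and the symmetric corner strictly moves outward; at $\beta=\min\{N,(2M-N)^+\}$ and beyond, all $\beta$-terms are dominated and the region is frozen, completing the proof.
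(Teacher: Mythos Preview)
Your treatment of the first two regimes ($M\le N$ and $2N\le M$) is correct and matches the paper's argument. The gap is in the regime $N\le M\le 2N$, and it is a concrete error, not just an incomplete sketch.

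You claim that for $\beta\ge 2N-M$ the constraint $2d_1+d_2\le N+M+\beta$ is implied by $d_1\le M$ and $d_1+d_2\le 2N$, justifying this with ``since $M+2N\le N+M+\beta$.'' But $M+2N\le N+M+\beta$ simplifies to $\beta\ge N$, not $\beta\ge 2N-M$. This arithmetic slip is the source of everything that follows: your proposed threshold $\max\{M-N,\,2N-M\}$ is simply wrong, and the identity you then attempt to verify,
\[
\max\{M-N,\,2N-M\}\;=\;\min\{N,\,2M-N\},
\]
is false on $N\le M\le 2N$. On this range $\min\{N,2M-N\}=N$ (since $M\ge N\Rightarrow 2M-N\ge N$), whereas $\max\{M-N,2N-M\}$ equals $N/2$ at $M=3N/2$. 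No amount of case-splitting at $M=3N/2$ will rescue this.

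What actually happens is that even after the $d_i$ and $d_1+d_2$ bounds have saturated (i.e.\ once $\beta\ge\max\{M-N,2N-M\}$), the facet $2d_1+d_2\le N+M+\beta$ continues to cut the corner $(d_1,d_2)=(M,\,2N-M)$ of the pentagon $\{d_i\le M,\ d_1+d_2\le 2N\}$: at that corner $2d_1+d_2=M+2N$, which exceeds $N+M+\beta$ precisely when $\beta<N$. So the region keeps strictly enlarging until $\beta=N$. The paper establishes this by partitioning into four cases according to the signs of $\beta-(M-N)$ and $\beta-(2N-M)$; in the final case it checks exactly this corner to pin the threshold at $\beta=N=\min\{N,(2M-N)^+\}$. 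Your outline needs the same check: the $2d_1+d_2$ (and symmetrically $d_1+2d_2$) constraints, not the $d_i$ or sum constraints, are what set the threshold in this regime.
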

\begin{proof}
For $M\le N$ it can be seen from \eqref{e1.1} that the cooperation improves the DoF region for $\beta\le (2M-N)^+=\min\{N,(2M-N)^+\}$.

Also, for $2N\le M$ it can be seen from \eqref{e1.2} that the cooperation improves the DoF region for $\beta \le N=\min\{N,(2M-N)^+\}$.

For $N\le M\le 2N$, we consider the following four cases.

Case 1 - $\beta\le M-N$, $\beta \le 2N-M$: In this case, the symmetric DoF region reduces to
\begin{eqnarray}
d_1&\le& N+\beta,\nonumber\\
d_2&\le& N+\beta,\nonumber\\
d_1+d_2&\le& \beta,\nonumber\\
2d_1+d_2&\le& N+M+\beta,\nonumber\\
d_1+2d_2&\le& N+M+\beta.
\end{eqnarray}
In this region, $\beta$ is always less than $\min\{N,(2M-N)^+\}$ because $\beta\le M-N\le N=\min\{N,(2M-N)^+\}$. Hence increasing $\beta$ always enlarges the region.

Case 2 - $\beta \ge M-N$, $\beta\le 2N-M$: In this case, the symmetric DoF region reduces to
\begin{eqnarray}
d_1&\le& M,\nonumber\\
d_2&\le& M,\nonumber\\
d_1+d_2&\le& M+\beta,\nonumber\\
2d_1+d_2&\le& N+M+\beta,\nonumber\\
d_1+2d_2&\le& N+M+\beta.
\end{eqnarray}
In this region, $\beta$ is always less than $\min\{N,(2M-N)^+\}$ because $\beta\le 2N-M\le N=\min\{N,(2M-N)^+\}$. In this case, increasing $\beta$ always enlarges the region. According to Figure 3(c), while $\beta\le 2N-M$, we get $2E\le 3N$ and $F\le 2N$ which indicates none of the red, green and blue lines could include the point $(d_1,d_2)=(M,M)$ below them. Also, increasing $\beta$ leads to the increase of $E$ and $F$ in Figure 3(c) and as a result, enlarges the symmetric DoF region.

Case 3 - $\beta \le M-N$, $\beta \ge 2N-M$: In this case, the symmetric DoF region reduces to
\begin{eqnarray}
d_1&\le& N+\beta,\nonumber\\
d_2&\le& N+\beta,\nonumber\\
d_1+d_2&\le& 2N,\nonumber\\
2d_1+d_2&\le& N+M+\beta,\nonumber\\
d_1+2d_2&\le& N+M+\beta.
\end{eqnarray}
In this region, $\beta$ is always less than $\min\{N,(2M-N)^+\}$ because $\beta\le M-N\le N=\min\{N,(2M-N)^+\}$. In this case, increasing $\beta$ always enlarges the region. According to Figure 3(c), when $\beta\le M-N$, we get $D,E\le M\le 2N=F$ and also, increasing $\beta$ leads to the increase of $D$ and $E$ in Figure 3(c) and as a result, enlarges the symmetric DoF region.

Case 4 - $\beta \ge M-N$, $\beta\ge 2N-M$: In this case, the symmetric DoF region reduces to
\begin{eqnarray}
d_1&\le& M,\nonumber\\
d_2&\le& M,\nonumber\\
d_1+d_2&\le& 2N,\nonumber\\
2d_1+d_2&\le& N+M+\beta,\nonumber\\
d_1+2d_2&\le& N+M+\beta.
\end{eqnarray}
In this region, changing $\beta$ only changes $E$ in Figure 3(c). Also, we can easily see that the black line and red line intersects at $(d_1,d_2)=(M,2N-M)$. The green line includes this intersection point when $\beta\ge N$ and will be below this point when $\beta\le N$ which means increasing $\beta$ improves the DoF region until $\beta\le N=\min\{N,(2M-N)^+\}$.
\end{proof}



\subsection{GDoF Region}
The notion of GDoF generalizes the DoF metric by additionally emphasizing the signal level as a signaling dimension. It characterizes the simultaneously accessible fractions of spatial and signal-level dimensions (per channel use) by the two users when all the average channel parameters vary as exponents of a nominal SNR parameter as follows
\begin{eqnarray}
&&{\mathop{{\lim}}_{{\log \mathsf{SNR}\to\infty}} \frac{{\log {C }_{ij}}}{{\log  \mathsf{SNR}}}\ }=\beta_{ij},\nonumber\\
&&{\mathop{{\lim}}_{{\log \mathsf{SNR}\to\infty}} \frac{{\log {\rho }_{ij}}}{{\log \mathsf{SNR}}}\ }=\begin{cases}
1, & \text{ if } i=j\\
\alpha, & \text{ if } i\neq j
\end{cases},
\end{eqnarray}

where ${\alpha },{\beta}_{12},{\beta}_{21}\in {\mathbb R}^+$.

The GDoF region is defined as the region formed by the set of all $(d_1,d_2)$ such that $(d_1 \log\mathsf{SNR}-o(\log\mathsf{SNR}), d_2 \log\mathsf{SNR}-o(\log\mathsf{SNR}))$ is inside the capacity region. Further, the GDoF is the maximum $d$ such that $(d,d)$ is in the GDoF region. Thus, both the GDoF region and GDoF are functions of link quality scaling exponent $\alpha$.

Next we present our results on the GDoF region for the two-user MIMO IC with limited receiver cooperation. For the general case, the computation of GDoF region is hard and thus we will only consider the case that $M_1=M_2=N_1=N_2=M$. We also assume that $\beta_{21}=\beta_{12}=\beta$. With these assumptions, the GDoF region for the two user MIMO IC with limited receiver cooperation is given in the following Theorem.




\begin{theorem}\label{thm_gdof}
The GDoF region for a two-user symmetric MIMO IC with limited receiver cooperation is equivalent to the convex hull of the:
\begin{eqnarray}
d_1&\le& M+ \min  \{({\alpha}-{1})^+M,\beta\},\label{gdofeq1}\\
d_2&\le& M+ \min  \{({\alpha}-{1})^+M,\beta\},\label{gdofeq2}\\
d_1+d_2&\le& 2M\max\{({1}-{\alpha})^+,{\alpha}\}+2\beta,\label{gdofeq3}\\
d_1+d_2&\le& ({1}-{\alpha})^+M + M\max\{{1},{\alpha}\}+\beta,\label{gdofeq4}\\
d_1+d_2&\le&
{2M}\max\{{1},{\alpha}\},\label{gdofeq6}
\end{eqnarray}

\begin{eqnarray}
d_1+2d_2 &\le&
M\max\{({1}-{\alpha})^+,{\alpha}\}+\nonumber\\
&&({1}-{\alpha})^+M
+M\max\{{1},{\alpha}\}+2\beta,\label{gdofeq7}\\
2d_1+d_2 &\le&
M\max\{({1}-{\alpha})^+,{\alpha}\}+\nonumber\\
&&({1}-{\alpha})^+M
+M\max\{{1},{\alpha}\}+2\beta,\label{gdofeq8}\\
d_1+2d_2 &\le&
M\max\{({2}-{\alpha})^+,{\alpha}\}+ M\max\{{1},{\alpha}\}+\beta,\label{gdofeq9}\\
2d_1+d_2 &\le&
M\max\{({2}-{\alpha})^+,{\alpha}\}+ M\max\{{1},{\alpha}\}+\beta\label{gdofeq10}.
\end{eqnarray}
\end{theorem}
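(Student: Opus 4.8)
The plan is to obtain Theorem~\ref{thm_gdof} as the high-$\mathsf{SNR}$ shadow of the approximate capacity region of Theorem~\ref{outer_inner_capacity_reciprocal}. Since $\mathcal{R}_o\ominus([0,N_1+N_2]\times[0,N_1+N_2])\subseteq\mathbb{C}_{RC}\subseteq\mathcal{R}_o$ and here $N_1+N_2=2M$ is a fixed constant, translating a region by a bounded box does not change the set of pairs $(d_1,d_2)$ for which $(d_1\log\mathsf{SNR}-o(\log\mathsf{SNR}),d_2\log\mathsf{SNR}-o(\log\mathsf{SNR}))$ is eventually inside; hence the GDoF region of $\mathbb{C}_{RC}$ equals that of $\mathcal{R}_o$. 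Because $\mathcal{R}_o$ is simply the polyhedron cut out by the linear constraints \eqref{ro0eq1}--\eqref{ro0eq10}, its GDoF region is the polyhedron obtained by replacing each right-hand side $f(\overline{H},\overline{\rho},\overline{C})$ by its exponent $\lim_{\log\mathsf{SNR}\to\infty}f/\log\mathsf{SNR}$, evaluated under the scaling $M_i=N_i=M$, $\rho_{ii}\doteq\mathsf{SNR}$, $\rho_{ij}\doteq\mathsf{SNR}^{\alpha}$ for $i\neq j$, and $C_{ij}\doteq\mathsf{SNR}^{\beta}$ (here $\doteq$ denotes equality of leading exponents in $\log\mathsf{SNR}$). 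So the whole proof reduces to ten exponent computations, followed by the observation that two of the pairs collapse under the $1\leftrightarrow2$ symmetry, leaving the nine listed bounds.

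First I would record the elementary GDoF building blocks, valid generically over the (full-rank, with probability one) channel matrices: for $M\times M$ full-rank $G$, $\log\det(I_M+\mathsf{SNR}^{a}GG^{\dagger})\doteq\mathsf{SNR}^{Ma^{+}}$; for a sum of such terms the exponent is $M\max\{a^{+},b^{+},\dots\}$; and the MMSE-type corrections collapse via the push-through identity $H^{\dagger}(I+\rho HH^{\dagger})^{-1}=(I+\rho H^{\dagger}H)^{-1}H^{\dagger}$, e.g. $\rho_{11}H_{11}H_{11}^{\dagger}-\rho_{11}\rho_{12}H_{11}H_{12}^{\dagger}(I+\rho_{12}H_{12}H_{12}^{\dagger})^{-1}H_{12}H_{11}^{\dagger}=\rho_{11}H_{11}(I+\rho_{12}H_{12}^{\dagger}H_{12})^{-1}H_{11}^{\dagger}$, whose exponent is $M(1-\alpha)^{+}$, and similarly $I+\rho_{11}H_{11}H_{11}^{\dagger}+\rho_{21}H_{21}H_{21}^{\dagger}-(\text{MMSE}_{12})=I+\rho_{21}H_{21}H_{21}^{\dagger}+\rho_{11}H_{11}(I+\rho_{12}H_{12}^{\dagger}H_{12})^{-1}H_{11}^{\dagger}$ has exponent $M\max\{\alpha,(1-\alpha)^{+}\}$. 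With these, \eqref{ro0eq1}--\eqref{ro0eq2} immediately give \eqref{gdofeq1}--\eqref{gdofeq2} (the $\min$ with $\beta$ coming from the $\min\{\cdot,C_{ji}\}$); \eqref{ro0eq3} gives \eqref{gdofeq3}; \eqref{ro0eq4} (and \eqref{ro0eq5}, which becomes identical to it under $1\leftrightarrow2$) gives \eqref{gdofeq4}; \eqref{ro0eq6} (the MAC bound, a $2M\times2M$ determinant whose two rank-$M$ summands have squared singular values of exponent $\max\{1,\alpha\}$) gives \eqref{gdofeq6}; and \eqref{ro0eq7}--\eqref{ro0eq8}, a symmetric pair, give \eqref{gdofeq8}--\eqref{gdofeq7}.

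The main obstacle will be the two block bounds \eqref{ro0eq9}--\eqref{ro0eq10}, which must reproduce \eqref{gdofeq9}--\eqref{gdofeq10}. In \eqref{ro0eq9} the $2M\times2M$ determinant contains a rank-$M$ term $B\,(I_M+\rho_{21}H_{21}^{\dagger}H_{21})^{-1}B^{\dagger}$, with $B=\left[\begin{array}{c}\sqrt{\rho_{22}}H_{22}\\ \sqrt{\rho_{21}}H_{21}\end{array}\right]$, added to a second rank-$M$ term $B'B'^{\dagger}$ with $B'=\left[\begin{array}{c}\sqrt{\rho_{12}}H_{12}\\ \sqrt{\rho_{11}}H_{11}\end{array}\right]$. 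To get its exponent I would write $\det\!\big(I+B(I_M+\rho_{21}H_{21}^{\dagger}H_{21})^{-1}B^{\dagger}+B'B'^{\dagger}\big)=\det\!\big(I+D\,[B\ B']^{\dagger}[B\ B']\,D\big)$ with $D=\mathrm{diag}\big((I_M+\rho_{21}H_{21}^{\dagger}H_{21})^{-1/2},\,I_M\big)$, and then split the $2M$ eigenvalues of the resulting $2M\times2M$ p.s.d.\ matrix into two $M$-clusters via a Schur complement: the lower-right block has exponent $\max\{1,\alpha\}$, and its Schur complement has exponent $\max\{1,\alpha\}-\alpha=(1-\alpha)^{+}$, so this term contributes $M\big(\max\{1,\alpha\}+(1-\alpha)^{+}\big)\log\mathsf{SNR}=M\max\{(2-\alpha)^{+},\alpha\}\log\mathsf{SNR}$. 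Adding the remaining $\log\det(I+\rho_{11}H_{11}H_{11}^{\dagger}+\rho_{21}H_{21}H_{21}^{\dagger})\doteq\mathsf{SNR}^{M\max\{1,\alpha\}}$ and the $C_{21}\doteq\mathsf{SNR}^{\beta}$ term yields \eqref{gdofeq10}, and \eqref{ro0eq10} yields \eqref{gdofeq9} by symmetry. The one point requiring care throughout is genericity: all these simplifications and Schur-complement rank counts use that the channel matrices are full rank and the ranges of the various rank-$M$ terms are in general position, which holds with probability one and matches the caveat in the statement. The detailed bookkeeping would be placed in Appendix~C next to the DoF computation of Theorem~\ref{thm_dof}, of which this is the $\alpha$-refinement.
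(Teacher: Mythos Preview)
Your proposal is correct and follows the paper's own strategy (Appendix~D): invoke the constant-gap result of Theorem~\ref{outer_inner_capacity_reciprocal} so that the GDoF region coincides with the high-$\mathsf{SNR}$ shadow of $\mathcal{R}_o$, then evaluate the leading $\log\mathsf{SNR}$ exponent of each of the ten right-hand sides \eqref{ro0eq1}--\eqref{ro0eq10} under the symmetric scaling using the push-through identity and standard $\log\det$ asymptotics (the paper packages these as Lemmas~\ref{gg}--\ref{sym-lemma}). The only noteworthy difference is in the block bound \eqref{ro0eq9}: the paper performs a long explicit expansion via Lemma~\ref{sym-lemma} and repeated block-determinant manipulations, whereas your Sylvester-plus-single-Schur-complement reduction is more compact; your claimed Schur-complement exponent $\max\{1,\alpha\}-\alpha=(1-\alpha)^{+}$ is correct, but when you write it up you should add the one-line check that the cross-term correction in the Schur complement is sub-dominant, since that is precisely where the genericity of the channel matrices is used.
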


\begin{proof}
The proof can be found in Appendix D.
\end{proof}

\begin{corollary}\label{coll_gdof}
The GDoF for a two-user MIMO IC  with limited receiver cooperation, when $M_1=M_2=N_1=N_2=M$ and $\beta_{21}=\beta_{12}=\beta$ is given as
\begin{eqnarray}
\text{GDOF}_{RC}&=& \min\{ M+ \min  \{({\alpha}-{1})^+M,\beta\}, M\max\{({1}-{\alpha})^+,{\alpha}\}+\beta,\nonumber\\
&& \frac{1}{2}({1}-{\alpha})^+M + \frac{1}{2}M\max\{{1},{\alpha}\}+\frac{1}{2}\beta,
{M}\max\{{1},{\alpha}\},\nonumber\\
&&\frac{1}{3}M\max\{({1}-{\alpha})^+,{\alpha}\}+\frac{1}{3}({1}-{\alpha})^+M
+\frac{1}{3}M\max\{{1},{\alpha}\}+\frac{2}{3}\beta,\nonumber\\
&&\frac{1}{3}M\max\{({2}-{\alpha})^+,{\alpha}\}+ \frac{1}{3}M\max\{{1},{\alpha}\}+\frac{1}{3}\beta\}. \label{eq_gdof}
\end{eqnarray}
\end{corollary}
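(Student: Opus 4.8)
The plan is to reduce the computation of the symmetric GDoF to a one-dimensional optimization by exploiting the symmetry of the region in Theorem~\ref{thm_gdof}. By definition $\mathrm{GDOF}_{RC}$ is the largest $d$ for which $(d,d)$ lies in the GDoF region. Since the ten inequalities \eqref{gdofeq1}--\eqref{gdofeq10} all have right-hand sides that are constants (depending only on $M,\alpha,\beta$), the set they define is already a convex polytope, so the ``convex hull'' in Theorem~\ref{thm_gdof} is that polytope itself; call it $\mathcal{D}$. Hence $(d,d)\in\mathcal{D}$ if and only if $d$ satisfies every one of \eqref{gdofeq1}--\eqref{gdofeq10} after the substitution $d_1=d_2=d$, and $\mathrm{GDOF}_{RC}$ is the minimum of the resulting upper bounds on $d$.

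Next I would carry out that substitution term by term. The two single-rate constraints \eqref{gdofeq1} and \eqref{gdofeq2} both give $d\le M+\min\{(\alpha-1)^+M,\beta\}$. The three sum constraints \eqref{gdofeq3}, \eqref{gdofeq4}, \eqref{gdofeq6} become $2d\le(\cdot)$, i.e. $d\le \frac12(\cdot)$, producing the bounds $M\max\{(1-\alpha)^+,\alpha\}+\beta$, $\frac12(1-\alpha)^+M+\frac12 M\max\{1,\alpha\}+\frac12\beta$, and $M\max\{1,\alpha\}$. Finally, in the weighted-sum constraints the left-hand sides $2d_1+d_2$ and $d_1+2d_2$ both become $3d$; one checks by inspection that the right-hand side of \eqref{gdofeq7} equals that of \eqref{gdofeq8}, and the right-hand side of \eqref{gdofeq9} equals that of \eqref{gdofeq10}, so these four constraints collapse to just two, yielding $d\le\frac13 M\max\{(1-\alpha)^+,\alpha\}+\frac13(1-\alpha)^+M+\frac13 M\max\{1,\alpha\}+\frac23\beta$ and $d\le\frac13 M\max\{(2-\alpha)^+,\alpha\}+\frac13 M\max\{1,\alpha\}+\frac13\beta$. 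Collecting the six distinct bounds gives exactly the expression in \eqref{eq_gdof}.

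To finish, I would note the two directions of the identity. For the upper bound, each of \eqref{gdofeq1}--\eqref{gdofeq10} is among the defining constraints of $\mathcal{D}$, so any $(d,d)\in\mathcal{D}$ forces $d$ below each listed quantity, hence below their minimum. For achievability, letting $d^\star$ denote that minimum, the point $(d^\star,d^\star)$ satisfies all ten inequalities by construction and therefore lies in the polytope $\mathcal{D}$; thus $d^\star$ is attained. I do not expect a genuine obstacle here --- the argument is a mechanical specialization of Theorem~\ref{thm_gdof} --- the only points requiring care are (i) justifying that taking the convex hull does not change the value at the symmetric point, which is immediate since the defining set is convex, and (ii) verifying that the $2d_1+d_2$ and $d_1+2d_2$ constraints really do coincide under $d_1=d_2$, which is precisely where the symmetry of the channel ($M_1=M_2=N_1=N_2$, $\beta_{12}=\beta_{21}$) enters.
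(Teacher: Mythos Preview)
Your proposal is correct and takes essentially the same approach as the paper: the corollary is treated as an immediate specialization of Theorem~\ref{thm_gdof}, obtained by substituting $d_1=d_2=d$ into the constraints \eqref{gdofeq1}--\eqref{gdofeq10} and reading off the resulting six distinct upper bounds on $d$. Your additional remark that the convex hull is redundant here (the ten linear inequalities already cut out a convex polytope) is a useful clarification that the paper leaves implicit.
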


Since the GDoF in Corollary \ref{coll_gdof} is the minimum of many terms, we evaluate the minimum in \eqref{eq_gdof} to reduce the expression of GDoF as follows.

For $0\le \beta \le \frac{M}{2}$:

\begin{eqnarray}
\text{GDoF}_{RC}=\left\{ \begin{array}{ll}
M,& \text{ if }0\le \alpha \le \frac{\beta}{M}, \\
M{(1-\alpha)^+}+\beta,& \text{ if }\frac{\beta}{M}\le \alpha \le \frac{1}{2}, \\
M{\alpha}+\beta,& \text{ if }\frac{1}{2}\le \alpha \le {{\frac{2}{3}}-{\frac{\beta}{3M}}}, \\
{\frac{1}{2}}(M{(2-\alpha)^+}+\beta),& \text{ if }{{\frac{2}{3}}-{\frac{\beta}{3M}}}\le \alpha \le 1, \\
{\frac{1}{2}}(M{\alpha}+\beta),& \text{ if } 1 \le \alpha \le {2+\frac{\beta}{M}}, \\
M+\beta,& \text{ if } {2+\frac{\beta}{M}} \le \alpha.
\end{array}
\right.
\end{eqnarray}

For $\frac{M}{2}\le \beta\le M$:

\begin{eqnarray}
\text{GDoF}_{RC}=\left\{ \begin{array}{ll}
M,& \text{ if }0\le \alpha \le \frac{\beta}{M}, \\
{\frac{1}{2}}(M{(2-\alpha)^+}+\beta),& \text{ if }\frac{\beta}{M}\le \alpha \le 1, \\
{\frac{1}{2}}(M{\alpha}+\beta),& \text{ if } 1 \le \alpha \le {2+\frac{\beta}{M}}, \\
M+\beta,& \text{ if } {2+\frac{\beta}{M}} \le \alpha.
\end{array}
\right.
\end{eqnarray}

For $M\le \beta$:

\begin{eqnarray}
\text{GDoF}_{RC}=\left\{ \begin{array}{ll}
M,& \text{ if }0\le \alpha \le 1, \\
M\alpha,& \text{ if }1\le \alpha \le \frac{\beta}{M}, \\
{\frac{1}{2}}(M{\alpha}+\beta),& \text{ if } \frac{\beta}{M} \le \alpha \le {2+\frac{\beta}{M}}, \\
M+\beta,& \text{ if } {2+\frac{\beta}{M}} \le \alpha.
\end{array}
\right.
\end{eqnarray}

The authors of  \cite{Etkin} found the GDoF for the two-user symmetric MIMO IC without cooperation as follows
\begin{eqnarray}
\text{GDoF}_{NRC}=\left\{ \begin{array}{ll}
M{(1-\alpha)^+},& \text{ if }0\le \alpha \le \frac{1}{2}, \\
M{\alpha},& \text{ if }\frac{1}{2}\le \alpha \le {{\frac{2}{3}}}, \\
{\frac{1}{2}}(M{(2-\alpha)^+}),& \text{ if }{{\frac{2}{3}}}\le \alpha \le 1, \\
{\frac{1}{2}}{M{\alpha}},& \text{ if } 1 \le \alpha \le {2}, \\
M,& \text{ if } {2} \le \alpha.
\end{array}
\right.
\end{eqnarray}

\begin{figure}[ht]
\centering
\subfigure[$0\le \beta_0\le \frac{M}{2}$.]{
	\includegraphics[width=4.3in]{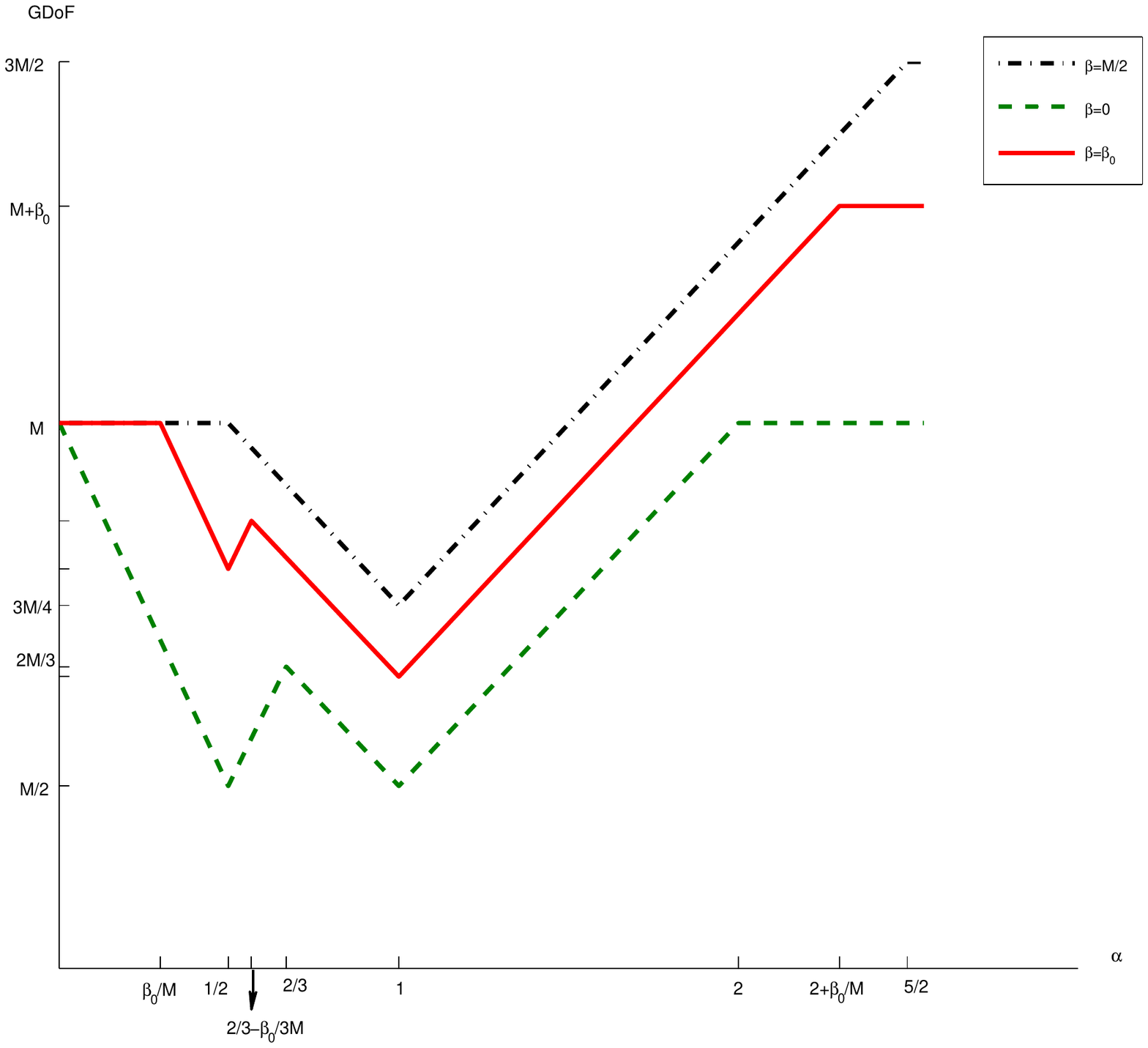}
    \label{fig:subfig1}
}
\subfigure[$\frac{M}{2}\le \beta_0\le M$.]{
	\includegraphics[width=3.1in]{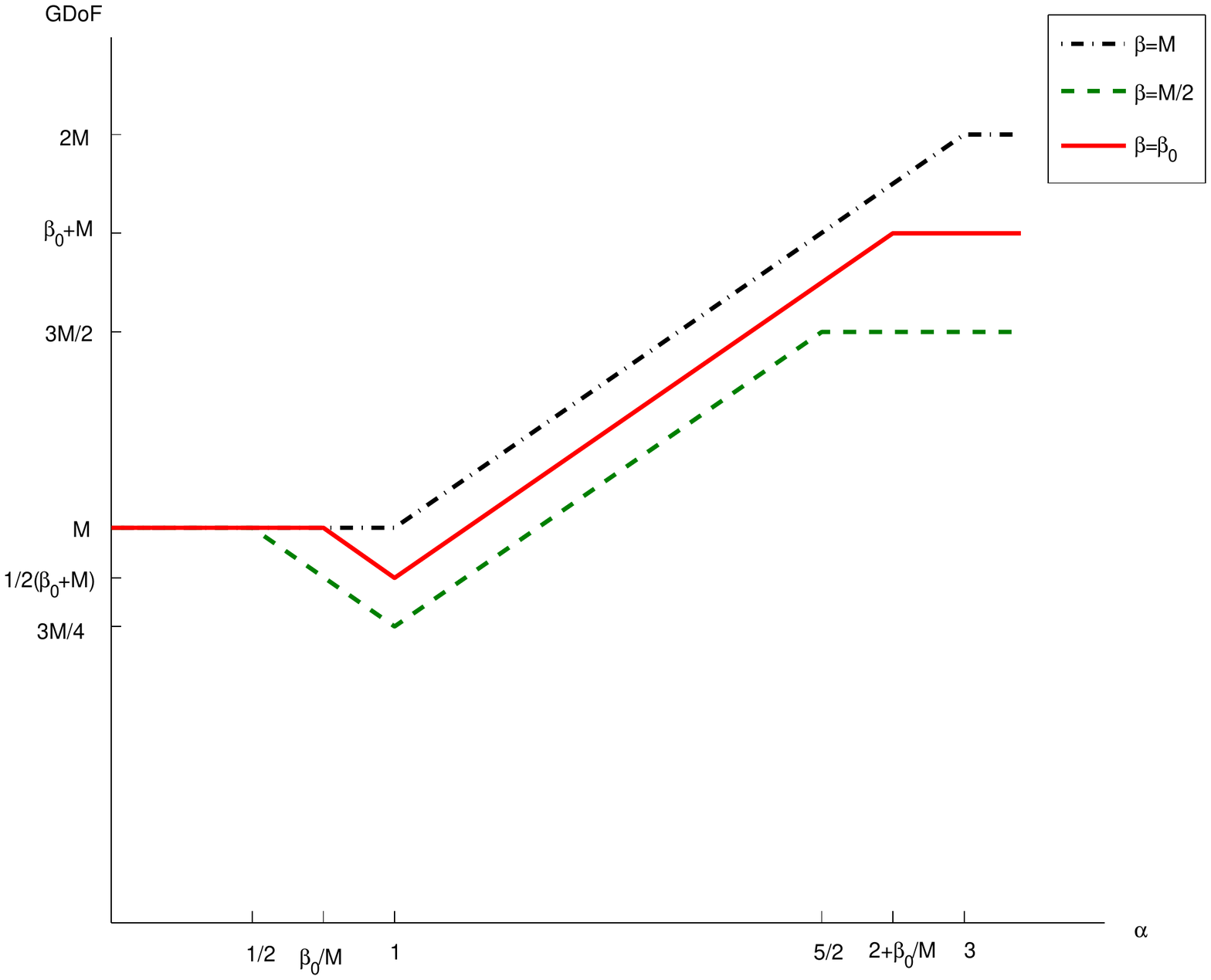}
    \label{fig:subfig2}
}
\subfigure[$\beta_0\ge M$.]{
	\includegraphics[width=3.1in]{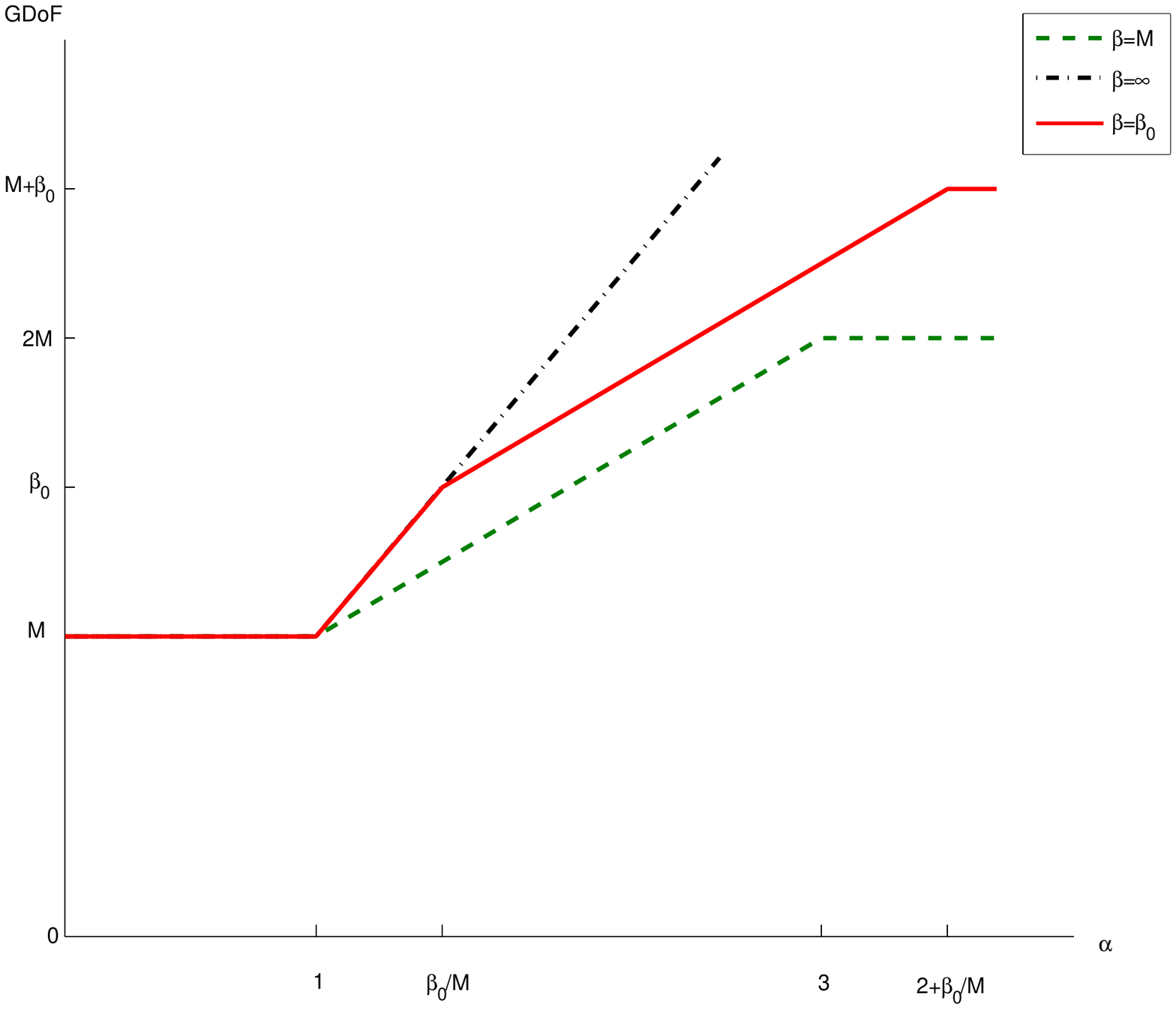}
    \label{fig:subfig3}
}
\caption[Optional caption for list of figures]{GDoF for MIMO IC with limited receiver cooperation when all nodes have the same number of antennas $M$.}
\label{fig:subfigureExample}
\end{figure}

Figure \ref{fig:subfigureExample} compares the GDoF for the two-user symmetric MIMO IC with and without receiver cooperation. In Figure \ref{fig:subfigureExample}(a), the ``W"-curve obtained without cooperation delineates the very weak ($0\le \alpha \le\frac{1}{2}$), weak ($\frac{1}{2}\le\alpha\le\frac{2}{3}$), moderate ($\frac{2}{3}\le\alpha\le1$), strong ($1\le\alpha\le 2$) and very strong ($\alpha\ge 2$) interference regimes. In the presence of weak collaboration ($0\le \beta \le{{\frac{M}{2}}}$), the ``W"-curve improves to another  ``W"-curve which delineates to extremely weak ($0\le\alpha\le\frac{\beta}{M}$), very weak ($\frac{\beta}{M}\le\alpha\le\frac{1}{2}$), weak ($\frac{1}{2}\le\alpha\le{\frac{2}{3}-\frac{\beta}{3M}}$), moderate (${\frac{2}{3}-\frac{\beta}{3M}}\le\alpha\le1$), strong ($1\le\alpha\le{2+{\frac{\beta}{M}}}$) and very strong ($2+{\frac{\beta}{M}}\le\alpha$) interference regimes. In the presence of weak collaboration ($ 0 \le \beta \le {\frac{M}{2}}$), we see that the GDoF  is strictly greater than that without collaboration for every $\alpha>0$. The GDoF improvement indicates an unbounded gap in the corresponding capacity regions as the SNR goes to infinity.

For moderate collaboration (${{\frac{M}{2}}}\le \beta\le M$), the ``W"-curve improves to a ``V"-curve which delineates to the very weak ($0\le\alpha\le\frac{\beta}{M}$), weak ($\frac{\beta}{M}\le\alpha\le 1$), strong ($1\le\alpha\le{2+{\frac{\beta}{M}}}$) and very strong ($2+{\frac{\beta}{M}}\le\alpha$) interference regimes, and we see that the GDoF with collaboration is strictly greater than that without collaboration for $\alpha>0$ similar to the weak collaboration.

For strong collaboration ($ \beta \ge M$), the ``W"-curve improves to an increasing curve which delineates to the very weak ($0\le\alpha\le 1$), weak ($1\le\alpha\le \frac{\beta}{M}$), strong ($\frac{\beta}{M}\le\alpha\le{2+{\frac{\beta}{M}}}$) and very strong ($2+{\frac{\beta}{M}}\le\alpha$) interference regimes. The slopes of increase of GDoF with $\alpha$ changes at the border of these regimes.

We note that for a given $M$ and $\alpha$, increasing $\beta$ improves the GDoF till $\beta = M\alpha$, after which there is no improvement in the GDoF since the GDoF at this point is the same as that with full cooperation. This can be seen also in the following corollary.

\begin{corollary}\label{colg}
The symmetric GDoF for a two-user MIMO IC  with limited receiver cooperation, when $M_1=M_2=N_1=N_2=M$ and $\beta_{21}=\beta_{12}=\beta = M\alpha$ is equal to $M\max(1,\alpha)$ which is the same as that with full cooperation.
\end{corollary}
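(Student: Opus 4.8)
The plan is to derive the corollary by evaluating, at $\beta=M\alpha$, the closed-form GDoF already established in \eqref{eq_gdof} and the piecewise expressions that follow it. Setting $\beta=M\alpha$ makes $\beta/M=\alpha$, so in each of the three regimes the first listed branch is the active one, with its defining inequality met with equality: for $0\le\beta\le M/2$ and for $M/2\le\beta\le M$ --- i.e. for $\alpha\le\tfrac12$ and for $\tfrac12\le\alpha\le 1$ --- the branch ``$0\le\alpha\le\beta/M$'' applies and the formula returns $M$; for $\beta\ge M$ --- i.e. $\alpha\ge 1$ --- the branch ``$1\le\alpha\le\beta/M$'' applies and the formula returns $M\alpha$. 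Hence $\text{GDoF}_{RC}\big|_{\beta=M\alpha}=M\max\{1,\alpha\}$. The evaluation at a branch boundary is unambiguous because adjacent branches agree there; e.g. for $\alpha\le 1$ one has $M(1-\alpha)^++\beta=M(1-\alpha)+M\alpha=M$ at $\alpha=\beta/M$.

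For a self-contained argument straight from Theorem \ref{thm_gdof}, I would split into an outer and an inner part. The outer part is immediate: \eqref{gdofeq1} gives $d_1\le M+\min\{(\alpha-1)^+M,\beta\}\le M+(\alpha-1)^+M=M\max\{1,\alpha\}$ for \emph{every} $\beta\ge0$, so $\text{GDoF}_{RC}\le M\max\{1,\alpha\}$ regardless of $\beta$. The inner part amounts to checking that the symmetric point $d_1=d_2=M\max\{1,\alpha\}$ obeys all of \eqref{gdofeq1}--\eqref{gdofeq10} when $\beta=M\alpha$; after splitting on $\alpha\le 1$ versus $\alpha\ge 1$ (with auxiliary splits at $\alpha=\tfrac12$ and $\alpha=2$ for the terms containing $(1-\alpha)^+$ and $(2-\alpha)^+$), every inequality collapses to an elementary numerical comparison, and several of them --- in particular \eqref{gdofeq4}, \eqref{gdofeq6}, \eqref{gdofeq9} and their symmetric counterparts --- hold with equality, confirming that no smaller value is forced.

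It then remains to match this value with the full-cooperation GDoF. Since any scheme valid for backhaul exponent $\beta$ is also valid for every larger exponent, $\text{GDoF}_{RC}$ is nondecreasing in $\beta$; combined with $\text{GDoF}_{RC}\big|_{\beta=M\alpha}=M\max\{1,\alpha\}$ and the uniform outer bound $\text{GDoF}_{RC}\le M\max\{1,\alpha\}$ noted above, this pins $\text{GDoF}_{RC}=M\max\{1,\alpha\}$ for all $\beta\ge M\alpha$, including the limit of unbounded cooperation. For the matching bound in the full-cooperation model itself, one observes that with full receiver cooperation the two receivers form a single node with $2M$ antennas, so $R_1$ is bounded by the point-to-point capacity of a $2M\times M$ channel whose $M$ direct rows scale as $\mathsf{SNR}$ and whose $M$ cross rows scale as $\mathsf{SNR}^\alpha$, which has GDoF $M\max\{1,\alpha\}$; hence the full-cooperation GDoF equals $M\max\{1,\alpha\}$ as well.

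There is essentially no conceptual obstacle here: the corollary is a boundary evaluation of results already proved, and the only work is the bookkeeping of the $\max\{\cdot\}$ and $(\cdot)^+$ terms across the regimes $\alpha\le\tfrac12$, $\tfrac12\le\alpha\le1$, $1\le\alpha\le2$, $\alpha\ge2$. The one point worth highlighting is that \eqref{gdofeq1} already caps the GDoF at $M\max\{1,\alpha\}$ uniformly in $\beta$, which is precisely why the GDoF saturates exactly at $\beta=M\alpha$ and cannot improve thereafter.
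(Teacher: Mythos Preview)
Your proposal is correct and follows essentially the same approach as the paper: verify that the symmetric point $d_1=d_2=M\max\{1,\alpha\}$ satisfies all the GDoF constraints at $\beta=M\alpha$ (achievability), and invoke the full-cooperation outer bound $M\max\{1,\alpha\}$ for the converse. Your additional use of the piecewise formulas and the explicit observation that \eqref{gdofeq1} already caps $d_1$ at $M\max\{1,\alpha\}$ uniformly in $\beta$ are nice touches that make the saturation at $\beta=M\alpha$ more transparent, but the underlying logic is the same as the paper's.
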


\begin{proof}
We only need to compare $M\max(1,\alpha)$ with all the bounds of the Corollary \ref{coll_gdof} and see that it is smaller or equal to all of them in Corollary \ref{coll_gdof}, or
\begin{eqnarray}
M\max\{1,\alpha\} &\le& M+\min\{({\alpha}-{1})^+M,\beta\},\nonumber\\
M\max\{1,\alpha\} &\le& M\max\{({1}-{\alpha})^+,{\alpha}\}+\beta,\nonumber\\
M\max\{1,\alpha\} &\le& \frac{1}{2}({1}-{\alpha})^+M+\frac{1}{2}M\max\{{1},{\alpha}\}+\frac{1}{2}\beta,\nonumber\\
M\max\{1,\alpha\} &\le& {M}\max\{{1},{\alpha}\},\nonumber\\
M\max\{1,\alpha\} &\le& \frac{1}{3}M\max\{({1}-{\alpha})^+,{\alpha}\}+\frac{1}{3}({1}-{\alpha})^+M
+\frac{1}{3}M\max\{{1},{\alpha}\}+\frac{2}{3}\beta,\nonumber\\
M\max\{1,\alpha\} &\le& \frac{1}{3}M\max\{({2}-{\alpha})^+,{\alpha}\}+ \frac{1}{3}M\max\{{1},{\alpha}\}+\frac{1}{3}\beta.
\end{eqnarray}
Since all these expressions can be shown to hold, $M\max(1,\alpha)$ is achievable. Further, since $M\max(1,\alpha)$ is also an outer bound, the Corollary \ref{colg} holds.

\end{proof}


\section{Conclusions}
This paper characterizes the approximate capacity region of the two-user MIMO interference channels with limited receiver cooperation within $N_1+N_2$ bits. This approximate capacity region is used to find the DoF region for the two user MIMO interference channels with limited receiver cooperation. We also find the maximum amount of cooperation needed to achieve the outer bound of unlimited receiver cooperation. Further, the GDoF region is found for a two-user MIMO interference channel with equal antennas at all the nodes. With the GDoF region,  we find that the ``W'' curve without cooperation changes gradually to ``V'' curve with full cooperation. The cooperation improves the GDoF till the capacity of the cooperation link is of the order of $\alpha M \log \mathsf{SNR}$ when the GDoF reaches the GDoF with full cooperation.

Finally we note that the GDoF results for general number of transmit and receive antennas remains as an open problem.

\begin{appendices}
\section{Proof of Outer Bound for Theorem \ref{outer_inner_capacity_reciprocal}} \label{apdx_outer}

In this Appendix, we will show that ${\mathbb C}_{RC}\subseteq \mathcal{R}_o$. The set of upper bounds to the capacity region will be derived in two steps. First, the capacity region is outer-bounded by a region defined in terms of the differential entropy of the random variables associated with the signals. These outer-bounds use genie-aided information at the receivers. Second, we outer-bound this region to prove the outer-bound as described in the statement of Theorem \ref{outer_inner_capacity_reciprocal}.

The following result outer-bounds the capacity region of a two-user MIMO IC with limited receiver cooperation.

\begin{lemma}
Let $S_i$ and $\tilde{S_i}$ be defined as $S_i\triangleq \sqrt{{\rho }_{ij}}H_{ij}X_i+Z_j$ and $\tilde{S_i}\triangleq \sqrt{{\rho }_{ij}}H_{ij}X_i+\tilde{Z_j}$, respectively, where $\tilde{Z_i}\sim \mathsf{CN}(0,I_{M_i})$ is independent of everything else. Then, the capacity region of a two-user MIMO IC with limited receiver cooperation is outerbounded by the region formed by $(R_1,R_2)$ satisfying
\begin{eqnarray}
R_1&\le& h(H_{11}X_1+Z_1)-h(Z_1)+\min \{h(H_{12}X_1+Z_2|H_{11}X_1+Z_1)-h(Z_2),C_{21} \},\label{o1}\\
R_2&\le& h(H_{22}X_2+Z_2)-h(Z_2)+\min \{h(H_{21}X_2+Z_1|H_{22}X_2+Z_2)-h(Z_1),C_{12} \}\label{o2},\\
R_1+R_2&\le& h(Y_1|\tilde{S_1})+h(Y_2|\tilde{S_2})-h(\tilde{Z_1})-h(\tilde{Z_1})+C_{21}+C_{12},\label{o3}\\
R_1+R_2&\le& h(H_{11}X_1+Z_1|{S_1})+h(Y_2)-h(Z_1,Z_2)+C_{12},\label{o4}\\
R_1+R_2&\le& h(H_{22}X_2+Z_2|{S_2})+h(Y_1)-h(Z_1,Z_2)+C_{21},\label{o5}\\
R_1+R_2&\le& h(Y_1,Y_2)-h(Z_1,Z_2),\label{o6}\\
2{R_1}+R_2&\le& h(H_{11}X_1+Z_1|{S_1})+h(Y_1)+h(Y_2|{S_2})-h(Z_1,Z_2)-h({Z_1})+C_{21}+C_{12},\label{o7}\\
{R_1}+2{R_2}&\le& h(H_{22}X_2+Z_2|{S_2})+h(Y_2)+h(Y_1|{S_1})-h(Z_1,Z_2)-h({Z_2})+C_{21}+C_{12},\label{o8}\\
2{R_1}+R_2&\le& h(Y_1,Y_2|\tilde{S_2})+h(Y_1)-h(Z_1,Z_2)-h({Z_1})+C_{21},\label{o9}\\
{R_1}+2{R_2}&\le& h(Y_1,Y_2|\tilde{S_1})+h(Y_2)-h(Z_1,Z_2)-h({Z_2})+C_{12}.\label{o10}
\end{eqnarray}\label{lemma_conditional}
\end{lemma}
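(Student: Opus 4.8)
I would run a genie-aided converse of the Etkin--Tse--Wang type, modified to account for the noiseless backhaul links, and then single-letterize. The starting point is Fano's inequality: since $\widehat m_i$ is a function of $(Y_i^n,\Gamma_{ji})$, any reliable code obeys $nR_i\le I(m_i;Y_i^n,\Gamma_{ji})+n\epsilon_n$ with $\epsilon_n\to0$. The whole argument is powered by three elementary observations about the cooperation messages, all available from the channel model: (i) by causality of the cooperation protocol each backhaul symbol is a function of past channel outputs, so $\Gamma_{ji}$ is a deterministic function of $(Y_1^n,Y_2^n)$ and can be dropped once those outputs have been revealed to a genie; (ii) $H(\Gamma_{ji})\le nC_{ji}$ because the link has capacity $C_{ji}$; and (iii) $m_1\perp m_2$ and $X_i^n$ is a function of $m_i$, which permits the interference-removal step $I(X_1^n;\cdot)\le I(X_1^n;\cdot\mid X_2^n)$. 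Observation (i) makes a $\Gamma$ free once the relevant outputs are on the table, (ii) turns any remaining $\Gamma$ into an additive $nC_{ji}$, and (iii) reduces cut-type mutual informations to interference-free MIMO channels. After each bound is reduced to a time-sum of single-letter differential entropies, I would introduce a uniform time-sharing variable $Q\in\{1,\dots,n\}$ and set $X_i\triangleq X_{i,Q}$, which yields exactly the entropy-form inequalities \eqref{o1}--\eqref{o10}; the later passage from these to the $\log\det$ form of $\mathcal R_o$ (using that a Gaussian maximizes entropy for a fixed covariance together with $\mathrm{tr}(Q_{ii})\le1$) is deferred to the second step of Appendix~A.

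\textbf{The cut-set bounds \eqref{o1}, \eqref{o2}, \eqref{o6}.} For \eqref{o6} I would write $n(R_1+R_2)\le I(m_1;Y_1^n,\Gamma_{21})+I(m_2;Y_2^n,\Gamma_{12})+n\epsilon_n$, reveal all channel outputs to both decoders so the $\Gamma$'s vanish by (i), and use (iii) to get $\le I(X_1^n,X_2^n;Y_1^n,Y_2^n)+n\epsilon_n\le\sum_t\big[h(Y_{1,t},Y_{2,t})-h(Z_{1,t},Z_{2,t})\big]+n\epsilon_n$, then time-share. For \eqref{o1} I would split $nR_1\le I(X_1^n;Y_1^n)+I(X_1^n;\Gamma_{21}\mid Y_1^n)+n\epsilon_n$; the second term is at most $H(\Gamma_{21})\le nC_{21}$ and, since $\Gamma_{21}$ is a function of $(Y_1^n,Y_2^n)$ by (i), hence of $Y_2^n$ once $Y_1^n$ is conditioned on, also at most $I(X_1^n;Y_2^n\mid Y_1^n)$, so it is bounded by the minimum of the two. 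Conditioning on $X_2^n$ kills the interference (observation (iii)), the chain rule $h(\tilde Y_1,\tilde Y_2)=h(\tilde Y_1)+h(\tilde Y_2\mid\tilde Y_1)$ separates the two terms, and time-sharing gives \eqref{o1}; \eqref{o2} is the mirror image.

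\textbf{The genie-aided bounds \eqref{o3}--\eqref{o5}, \eqref{o7}--\eqref{o10}, and the main obstacle.} These are the heart of the matter and follow the ETW converse and its cooperative refinement in \cite{Tse}, here in matrix form. For \eqref{o3} I would hand receiver~$i$ the genie $\tilde S_i^n$, write $nR_i\le I(m_i;Y_i^n,\Gamma_{ji},\tilde S_i^n)+n\epsilon_n$, expand both mutual informations, pay $nC_{12}+nC_{21}$ for the two backhaul messages via (ii), condition on the other user's input where needed via (iii), and exploit that the interference component of $Y_i$ has the law of $\tilde S_j$ up to the noise, so that the terms $h(\text{direct part of }Y_i\mid\cdots)$ and $h(\tilde S_j^n\mid X_j^n)$ cancel pairwise; what survives is $\sum_t\big[h(Y_{1,t}\mid\tilde S_{1,t})+h(Y_{2,t}\mid\tilde S_{2,t})-h(\tilde Z_{1,t})-h(\tilde Z_{2,t})\big]$, i.e.\ \eqref{o3} after time-sharing. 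Bounds \eqref{o4}, \eqref{o5} are the asymmetric single-genie sum-rate bounds: reveal $S_1^n$ to $\mathsf{D}_1$ and all of $Y_1^n$ to $\mathsf{D}_2$ (or vice versa), so only one backhaul message is genuinely new and only one of $C_{12},C_{21}$ appears, after which a chain-rule expansion and one use of ``conditioning reduces entropy'' produce the stated right-hand side. Bounds \eqref{o7}--\eqref{o10} come from adding one further copy of the $R_1$ (resp.\ $R_2$) Fano bound to one of the sum-rate bounds and repeating the genie manipulation; the choice of $S_i^n$ versus $\tilde S_i^n$ and of how many backhaul messages are spent is what makes the right-hand side carry $C_{12}+C_{21}$ in \eqref{o7}, \eqref{o8} but only a single $C_{21}$ or $C_{12}$ in \eqref{o9}, \eqref{o10}. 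I expect the main obstacle to be precisely this bookkeeping, carried out consistently across all ten bounds: for each bound one must identify which of $\Gamma_{12},\Gamma_{21}$ is truly new information (costing a full $C_{ji}$) versus redundant given the outputs already revealed, and at the same time pick the genie signals so that the differential-entropy terms telescope cleanly; threading both needles simultaneously, while carrying the matrix dimensions of the MIMO case, is where the real work sits, whereas the final time-sharing step is routine.
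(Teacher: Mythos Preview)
Your proposal is correct and is essentially the same approach the paper takes: the paper's own proof simply states that ``the proof follows the same lines as the proof of Lemma~5.1 in \cite{Tse}, replacing SISO channel gains by MIMO channel matrices,'' and the ingredients you list---Fano's inequality, the data-processing inequality, the fact that the cooperation messages are functions of $(Y_1^n,Y_2^n)$, and the entropy bound $H(\Gamma_{ji})\le nC_{ji}$---are precisely the ones the paper identifies in its overview of the outer bound. Your sketch is in fact more detailed than what the paper provides, and the bookkeeping you flag as the main obstacle is exactly the content of the converse in~\cite{Tse} that the paper invokes.
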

\begin{proof}
The proof follows the same lines as the proof of Lemma 5.1 in \cite{Tse}, replacing SISO channel gains by MIMO channel matrices and is thus omitted here.
\end{proof}

The rest of the section outer-bounds this region to get the outer bound in Theorem \ref{outer_inner_capacity_reciprocal}. For this, we will introduce some useful Lemmas.

The next result outer-bounds the entropies and the conditional entropies of two random variables by their corresponding Gaussian random variables.

\begin{lemma}[\cite{Gaussian}]
Let $X$ and $Y$ be two random vectors, and let $X^G$ and $Y^G$ be Gaussian vectors with covariance matrices satisfying
\begin{eqnarray}
Cov\left[ \begin{array}{c}
X \\
Y \end{array}
\right]=Cov\left[ \begin{array}{c}
X^G \\
Y^G \end{array}
\right],
\end{eqnarray}

Then, we have
\begin{eqnarray}
h(Y)&\le& h(Y^G),\\
h\left(Y\mathrel{\left|\vphantom{Y X}\right.\kern-\nulldelimiterspace}X\right)&\le& h\left(Y^G\mathrel{\left|\vphantom{Y^G X^G}\right.\kern-\nulldelimiterspace}X^G\right).\label{gauss_cond}
\end{eqnarray}\label{gauss_ob}
\end{lemma}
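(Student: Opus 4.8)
The plan is to derive both inequalities from one classical fact --- that among all random vectors with a prescribed covariance matrix a Gaussian vector maximizes the differential entropy --- together with the ``conditioning reduces entropy'' inequality. Throughout we may take every vector to be zero-mean, since differential entropy is invariant under deterministic translations and only covariances are constrained. For the unconditional bound $h(Y)\le h(Y^G)$, I would start from non-negativity of the Kullback--Leibler divergence, $D(p_Y\,\|\,p_{Y^G})\ge 0$, which rearranges to $h(Y)\le -\mathbb{E}_{p_Y}[\log p_{Y^G}(Y)]$. Since $\log p_{Y^G}$ is an affine function of a quadratic form in its argument, the right-hand side depends on the law of $Y$ only through $\mathrm{Cov}[Y]$; because $\mathrm{Cov}[Y]=\mathrm{Cov}[Y^G]$ by hypothesis, it equals $-\mathbb{E}_{p_{Y^G}}[\log p_{Y^G}(Y^G)]=h(Y^G)$, giving the claim (when $\mathrm{Cov}[Y]$ is singular the same argument runs on its range space).

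For the conditional bound \eqref{gauss_cond}, let $\Sigma_X,\Sigma_{YX},\Sigma_{XY},\Sigma_Y$ denote the blocks of the common covariance, set $L\triangleq\Sigma_{YX}\Sigma_X^{+}$ (the linear MMSE matrix, $\Sigma_X^{+}$ a Moore--Penrose pseudoinverse), and put $W\triangleq Y-LX$. Since $LX$ is a deterministic function of $X$, translating by it does not change conditional differential entropy, so $h(Y\mid X)=h(W\mid X)\le h(W)$, where the inequality is ``conditioning reduces entropy.'' A direct computation gives $\mathrm{Cov}[W]=\Sigma_Y-\Sigma_{YX}\Sigma_X^{+}\Sigma_{XY}$, the Schur complement, which is exactly $\mathrm{Cov}[Y^G\mid X^G]$: for jointly Gaussian $(X^G,Y^G)$ the conditional law of $Y^G$ given $X^G$ is Gaussian with this (constant) covariance, so $h(Y^G\mid X^G)$ equals the differential entropy of a Gaussian with covariance $\mathrm{Cov}[W]$. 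Applying the unconditional bound of the previous paragraph to $W$ (and a Gaussian with the same covariance) then yields $h(W)\le h(Y^G\mid X^G)$, and chaining these inequalities proves \eqref{gauss_cond}.

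I do not expect a genuine obstacle here --- this is a textbook statement, which is why the paper simply cites it --- but the point that needs care is the bookkeeping when $\Sigma_X$ (or $\Sigma_Y$) is rank-deficient, where densities must be read on the appropriate affine subspace and $\Sigma_X^{-1}$ replaced by the pseudoinverse; all identities above persist verbatim under that convention. The other mildly delicate point, that the conditioning is with respect to the possibly non-Gaussian $X$ rather than $X^G$, is precisely what the ``shift by $LX$ then drop the conditioning'' maneuver in the second paragraph is designed to handle.
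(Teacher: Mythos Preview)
Your argument is correct and is the standard textbook route: the maximum-entropy property of Gaussians via nonnegativity of relative entropy for the unconditional part, and the linear-MMSE shift $W=Y-LX$ followed by ``conditioning reduces entropy'' for the conditional part. The paper itself does not supply a proof of this lemma; it is simply quoted from the cited reference, so there is no in-paper proof to compare against. Your treatment of the rank-deficient case via the Moore--Penrose pseudoinverse is appropriate and more careful than the lemma's statement strictly requires for the applications in the paper (where the added identity matrices keep all relevant covariances nonsingular).
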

The next result gives the determinant of a block matrix, which will be used extensively in the sequel.
\begin{lemma}[\cite{det}]
For block matrix $M=\left[ \begin{array}{cc}
A & B \\
C & D \end{array}
\right]$ with submatrices A, B, C, and D, we have:
\begin{eqnarray}
\det  M= \begin{cases}\det  A \det (D-CA^{-1}B), & \text{if A is invertible,}\label{eq1_block}\\
\det D \det (A-BD^{-1}C), & \text{if D is invertible.} \label{eq2_block}
\end{cases}
\end{eqnarray}\label{lem_block}
\end{lemma}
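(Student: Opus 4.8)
The plan is to prove the identity by a block LDU-type factorization and then apply multiplicativity of the determinant together with the elementary fact that a block-triangular matrix has determinant equal to the product of the determinants of its diagonal blocks.

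First, suppose $A$ is invertible. I would verify by a direct block multiplication that
\begin{equation*}
\left[\begin{array}{cc} A & B \\ C & D \end{array}\right]
=
\left[\begin{array}{cc} I & 0 \\ CA^{-1} & I \end{array}\right]
\left[\begin{array}{cc} A & 0 \\ 0 & D-CA^{-1}B \end{array}\right]
\left[\begin{array}{cc} I & A^{-1}B \\ 0 & I \end{array}\right].
\end{equation*}
This is the only computation, and it is routine: multiplying out, the $(1,1)$ block is $A$, the $(1,2)$ block is $AA^{-1}B=B$, the $(2,1)$ block is $CA^{-1}A=C$, and the $(2,2)$ block is $CA^{-1}B+(D-CA^{-1}B)=D$. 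Taking determinants of both sides and using $\det(XYZ)=\det X\,\det Y\,\det Z$, the two unit upper/lower block-triangular factors contribute $1$ each (their diagonal blocks are identity matrices), while the block-diagonal factor contributes $\det A\cdot\det(D-CA^{-1}B)$. Hence $\det M=\det A\cdot\det(D-CA^{-1}B)$.

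For the case in which $D$ is invertible I would repeat the argument with the mirrored factorization
\begin{equation*}
\left[\begin{array}{cc} A & B \\ C & D \end{array}\right]
=
\left[\begin{array}{cc} I & BD^{-1} \\ 0 & I \end{array}\right]
\left[\begin{array}{cc} A-BD^{-1}C & 0 \\ 0 & D \end{array}\right]
\left[\begin{array}{cc} I & 0 \\ D^{-1}C & I \end{array}\right],
\end{equation*}
which again checks out block by block, and taking determinants yields $\det M=\det D\cdot\det(A-BD^{-1}C)$.

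There is no substantive obstacle here; the one auxiliary fact worth isolating is that $\det\left[\begin{array}{cc} P & * \\ 0 & Q\end{array}\right]=\det P\,\det Q$ for block-triangular matrices, which I would either cite or prove by induction on the number of diagonal blocks via cofactor expansion (in the Leibniz formula only permutations respecting the block partition contribute). Since the statement is quoted from \cite{det}, one may alternatively simply defer to that reference; the short self-contained argument above is included for completeness.
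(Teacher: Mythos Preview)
Your proof is correct and is the standard block-LDU (Schur complement) argument for this classical identity. The paper does not actually prove this lemma; it is quoted from \cite{det} as a known result, so there is nothing to compare against beyond noting that your self-contained derivation is exactly the one a reader would expect.
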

The next result gives a monotonicity result for a function which will be used to upper bound some of the terms in Lemma \ref{lemma_conditional}.

\begin{lemma}[\cite{J1}]
Let $L(K,S)$ be defined as
\begin{eqnarray}
L\left(K, S\right)
\triangleq K-KS{(I_{N}+S^{\dagger }KS)}^{-1}S^{\dagger }K, \label{eqdefnl}
\end{eqnarray}
for some $M\times M$ p.s.d. Hermitian matrix $K$ and some $M\times N$ matrix $S$.
Then if $0 \preceq K_1\preceq K_2$ for some Hermitian matrices $K_1$ and $K_2$, we have
\begin{eqnarray}\label{tpl}
L\left(K_1, S\right)\preceq L\left(K_2, S\right)\label{ll}.
\end{eqnarray}
\label{L}
\end{lemma}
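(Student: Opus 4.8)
The plan is to realize $L(K,S)$ as a Schur complement of a block matrix that is assembled from $K$ in an order-preserving manner, and then to invoke the operator monotonicity of the Schur complement. For a p.s.d. Hermitian $M\times M$ matrix $K$, form the $(M+N)\times(M+N)$ Hermitian matrix
\begin{equation}
\Phi(K)\ \triangleq\ \begin{bmatrix} K & KS \\ S^{\dagger}K & I_{N}+S^{\dagger}KS\end{bmatrix}\ =\ \begin{bmatrix} I_{M} \\ S^{\dagger}\end{bmatrix}K\begin{bmatrix} I_{M} & S\end{bmatrix}\ +\ \begin{bmatrix} 0 & 0 \\ 0 & I_{N}\end{bmatrix}\ \succeq\ 0 .
\end{equation}
Its lower-right block satisfies $I_{N}+S^{\dagger}KS\succeq I_{N}\succ 0$, hence is invertible for \emph{every} p.s.d.\ $K$; by Lemma \ref{lem_block}, the Schur complement of that block is therefore well defined and, by \eqref{eqdefnl}, equals $K-KS(I_{N}+S^{\dagger}KS)^{-1}S^{\dagger}K=L(K,S)$.

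Next I would isolate two elementary facts. First, the map $K\mapsto\Phi(K)$ is Löwner-order preserving: if $0\preceq K_{1}\preceq K_{2}$, then congruence by $\begin{bmatrix} I_{M} \\ S^{\dagger}\end{bmatrix}$ (which preserves $\preceq$) followed by adding the common term $\mathrm{diag}(0,I_{N})$ yields $\Phi(K_{1})\preceq\Phi(K_{2})$. Second, for any Hermitian block matrix $\begin{bmatrix} A & B \\ B^{\dagger} & C\end{bmatrix}$ with $C\succ 0$, completing the square gives the variational identity
\begin{equation}
A-BC^{-1}B^{\dagger}\ =\ \min_{X}\ \begin{bmatrix} I & X\end{bmatrix}\begin{bmatrix} A & B \\ B^{\dagger} & C\end{bmatrix}\begin{bmatrix} I \\ X^{\dagger}\end{bmatrix},
\end{equation}
where the minimum is taken in the Löwner order over all $X$ of the appropriate size and is attained at $X=-BC^{-1}$. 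It follows that the Schur complement is operator monotone on block matrices with positive-definite lower-right block: if $M_{1}\preceq M_{2}$ are two such matrices and $X^{\star}$ denotes the minimizer for $M_{2}$, then $\mathrm{Schur}(M_{1})\preceq\begin{bmatrix} I & X^{\star}\end{bmatrix}M_{1}\begin{bmatrix} I \\ (X^{\star})^{\dagger}\end{bmatrix}\preceq\begin{bmatrix} I & X^{\star}\end{bmatrix}M_{2}\begin{bmatrix} I \\ (X^{\star})^{\dagger}\end{bmatrix}=\mathrm{Schur}(M_{2})$, the first inequality being the minimization above and the second being $M_{1}\preceq M_{2}$ passed through a congruence.

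Combining the two facts, $0\preceq K_{1}\preceq K_{2}$ gives $\Phi(K_{1})\preceq\Phi(K_{2})$, hence $L(K_{1},S)=\mathrm{Schur}(\Phi(K_{1}))\preceq\mathrm{Schur}(\Phi(K_{2}))=L(K_{2},S)$, which is precisely \eqref{ll}. I do not expect a genuine obstacle here; the only point needing care is that the lower-right block $I_{N}+S^{\dagger}KS$ stays positive definite for all p.s.d.\ $K$ (it does, being $\succeq I_{N}$), so the Schur complement and the variational formula are legitimate with no nondegeneracy hypothesis on $K$. As a sanity check, when $K$ is invertible the Woodbury identity gives the closed form $L(K,S)=(K^{-1}+SS^{\dagger})^{-1}$, from which \eqref{ll} also follows from the operator antitonicity of matrix inversion together with a limiting argument $K+\epsilon I\downarrow K$ for singular $K$; the Schur-complement route is preferable since it avoids this limiting step and covers singular $K$ directly.
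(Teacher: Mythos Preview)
Your argument is correct. The paper does not actually prove this lemma; it is quoted verbatim from the cited reference \cite{J1}, so there is no in-paper proof to compare against. Your Schur-complement route---writing $L(K,S)$ as the Schur complement of $I_N+S^\dagger KS$ in the congruence $\Phi(K)=\begin{bmatrix}I_M\\S^\dagger\end{bmatrix}K\begin{bmatrix}I_M&S\end{bmatrix}+\mathrm{diag}(0,I_N)$, observing that $K\mapsto\Phi(K)$ is L\"owner-monotone, and then invoking the variational characterization of the Schur complement---is clean and handles singular $K$ without any limiting argument. The alternative Woodbury route you sketch, $L(K,S)=(K^{-1}+SS^\dagger)^{-1}$ for invertible $K$ followed by $K+\epsilon I\downarrow K$, is likely closer in spirit to what one finds in \cite{J1}, but your primary argument is more direct.

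One very minor quibble: your appeal to Lemma~\ref{lem_block} to justify that the Schur complement is ``well defined'' is a slight misfire, since that lemma is a determinant identity rather than a statement about existence of Schur complements. The well-definedness follows immediately from $I_N+S^\dagger KS\succeq I_N\succ 0$, which you already established; simply drop the reference to Lemma~\ref{lem_block} there.
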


Define $X_1^G$ and $X_2^G$ as having a Gaussian distribution with the covariance matrix
\begin{eqnarray}
Cov\left[ \begin{array}{c}
X_1^G \\
X_2^G \end{array}
\right]=Cov\left[ \begin{array}{c}
X_1 \\
X_2 \end{array}
\right].
\end{eqnarray}
Define $S_i^G\triangleq \sqrt{{\rho }_{ij}}H_{ij}X_i^G+Z_j$, $\tilde{S_i}^G\triangleq \sqrt{{\rho }_{ij}}H_{ij}X_i^G+\tilde{Z_j}$ and $Y_i^G\triangleq\sqrt{{\rho }_{ii}}H_{ii}X_i^G+\sqrt{{\rho }_{ji}}H_{ji}X_j^G+Z_i$.

The rest of the section considers the 10 terms in Lemma \ref{lemma_conditional} and outer-bounds each of them to get the terms in the outer-bound of Theorem \ref{outer_inner_capacity_reciprocal}.

\noindent {\bf \eqref{o1}$\rightarrow$\eqref{ro0eq1}:} We can split the bound in \eqref{o1} into two upper bounds. The first bound is
\begin{eqnarray}
R_1&\le& h(H_{11}X_1+Z_1)-h(Z_1)+h(H_{12}X_1+Z_2|H_{11}X_1+Z_1)-h(Z_2)\nonumber\\
&=& h(H_{12}X_1+Z_2,H_{11}X_1+Z_1)-h(Z_1)-h(Z_2)\nonumber\\
&\stackrel{(a)}{\le}& \log \det\left[ \begin{array}{cc}
I_{N_2}+{\rho }_{12}H_{12}Q_{11}H^{\dagger }_{12} & \sqrt{{\rho }_{12}{\rho }_{11}}H_{12}Q_{11}H^{\dagger }_{11} \\
\sqrt{{\rho }_{12}{\rho }_{11}}H_{11}Q_{11}H^{\dagger }_{12} & I_{N_1}+{\rho }_{11}H_{11}Q_{11}H^{\dagger }_{11} \end{array}
\right]\nonumber\\
&\stackrel{(b)}{=}& {\log  {\det  (I_{N_1}+{\rho}_{11}H_{11}Q_{11}H^{\dagger}_{11})}}+\log \det (I_{N_2}+{\rho }_{12}H_{12}Q_{11}H^{\dagger }_{12}\nonumber\\
&&-{\rho }_{12}{\rho }_{11}H_{12}Q_{11}H^{\dagger }_{11}{(I_{N_1}+{\rho}_{11}H_{11}Q_{11}H^{\dagger}_{11})^{-1}}H_{11}Q_{11}H^{\dagger }_{12})\nonumber\\
&\stackrel{(c)}{\le}& {\log  {\det  (I_{N_1}+{\rho}_{11}H_{11}H^{\dagger}_{11})}}+\log \det (I_{N_2}+{\rho }_{12}H_{12}Q_{11}H^{\dagger }_{12}\nonumber\\
&&-{\rho }_{12}{\rho }_{11}H_{12}Q_{11}H^{\dagger }_{11}{(I_{N_1}+{\rho}_{11}H_{11}Q_{11}H^{\dagger}_{11})^{-1}}H_{11}Q_{11}H^{\dagger }_{12})\nonumber\\
&{=}& {\log  {\det  (I_{N_1}+{\rho}_{11}H_{11}H^{\dagger}_{11})}}+\log \det (I_{N_2}\nonumber\\
&&+{\rho }_{12}H_{12}(Q_{11}-{\rho }_{11}Q_{11}H^{\dagger }_{11}{(I_{N_1}+{\rho}_{11}H_{11}Q_{11}H^{\dagger}_{11})^{-1}}H_{11}Q_{11})H^{\dagger }_{12})\nonumber\\
&\stackrel{(d)}{\le}& {\log  {\det  (I_{N_1}+{\rho}_{11}H_{11}H^{\dagger}_{11})}}+\log \det (I_{N_2}+{\rho }_{12}H_{12}H^{\dagger }_{12}\nonumber\\
&&-{\rho }_{12}{\rho }_{11}H_{12}H^{\dagger }_{11}{(I_{N_1}+{\rho}_{11}H_{11}H^{\dagger}_{11})^{-1}}H_{11}H^{\dagger }_{12}),
\end{eqnarray}
where $(a)$ follows from Lemma \ref{gauss_ob} and from the fact that $h(Z_i)=\log \det \left(2\pi e I_{N_i}\right)$, $(b)$ follows from Lemma \ref{lem_block}, $(c)$ follows from the fact that $\log\det(.)$ is a monotonically increasing function on the cone of positive definite matrices and we have $Q_{ii} \preceq I_{M_i}$ for $i\in \{1,2\}$, and $(d)$ follows from Lemma \ref{L} where $K_1=Q_{11}$, $K_2=I_{M_{1}}$ and $S=\sqrt{{\rho }_{11}}H^{\dagger}_{11}$. It gives the first part of the bound \eqref{ro0eq1}.

The second bound is
\begin{eqnarray}
R_1&\le& h(H_{11}X_1+Z_1)-h(Z_1)+C_{21}\nonumber\\
&\stackrel{(a)}{=}&{\log  {\det  (I_{N_1}+{\rho}_{11}H_{11}Q_{11}H^{\dagger}_{11})}}+C_{21}\nonumber\\
&\stackrel{(b)}{\le}&{\log  {\det  (I_{N_1}+{\rho}_{11}H_{11}H^{\dagger}_{11})}}+C_{21},
\end{eqnarray}
where $(a)$ follows from Lemma \ref{gauss_ob} and from the fact that $h(Z_i)=\log \det  \left(2\pi e I_{N_i}\right)$, $(b)$ follows from the fact that $\log\det(.)$ is a monotonically increasing function on the cone of positive definite matrices and we have $Q_{ii} \preceq I_{M_i}$ for $i\in \{1,2\}$. It gives the second part of the bound \eqref{ro0eq1}.

\noindent {\bf \eqref{o2}$\rightarrow$\eqref{ro0eq2}:} This is obtained similarly to the last bound by exchanging 1 and 2 in the indices.

\noindent {\bf \eqref{o3}$\rightarrow$\eqref{ro0eq3}:} For the bound \eqref{o3} in Lemma \ref{lemma_conditional},
\begin{eqnarray}
R_1 + R_2&\le&
h(Y_1
|\tilde{S_1})+h(Y_2|\tilde{S_2})-h(\tilde{Z_1})-h(\tilde{Z_1})+C_{21}+C_{12}\nonumber\\
&=& h(\sqrt{{\rho }_{11}}H_{11}X_{1}+\sqrt{{\rho }_{21}}H_{21}X_{2}+Z_{1}|\sqrt{{\rho }_{12}}H_{12}X_{1}+\tilde{Z_{2}}
)+h(\sqrt{{\rho }_{12}}H_{12}X_{1}+\nonumber\\
&&\sqrt{{\rho }_{22}}H_{22}X_{2}+Z_{2}|\sqrt{{\rho }_{21}}H_{21}X_{2}+\tilde{Z_{1}})-h(\tilde{Z_1})-h(\tilde{Z_1})+C_{21}+C_{12}\nonumber\\
&\stackrel{(a)}{\le}& h(\sqrt{{\rho }_{11}}H_{11}X_{1}^G+\sqrt{{\rho }_{21}}H_{21}X_{2}^G+Z_{1}|\sqrt{{\rho }_{12}}H_{12}X_{1}^G+\tilde{Z_{2}}
)+h(\sqrt{{\rho }_{12}}H_{12}X_{1}^G+\nonumber\\
&&\sqrt{{\rho }_{22}}H_{22}X_{2}^G+Z_{2}|\sqrt{{\rho }_{21}}H_{21}X_{2}^G+\tilde{Z_{1}})-h(\tilde{Z_1})-h(\tilde{Z_1})+C_{21}+C_{12}\nonumber\\
&=& h(\sqrt{{\rho }_{11}}H_{11}X_{1}^G+\sqrt{{\rho }_{21}}H_{21}X_{2}^G+Z_{1},\sqrt{{\rho }_{12}}H_{12}X_{1}^G+\tilde{Z_{2}}
)-h(\sqrt{{\rho }_{12}}H_{12}X_{1}^G+\tilde{Z_{2}}
)\nonumber\\
&&+h(\sqrt{{\rho }_{12}}H_{12}X_{1}^G+\sqrt{{\rho }_{22}}H_{22}X_{2}^G+Z_{2}\sqrt{{\rho }_{21}}H_{21}X_{2}^G+\tilde{Z_{1}})-h(\sqrt{{\rho }_{21}}H_{21}X_{2}^G+\tilde{Z_{1}})\nonumber\\
&&-h(\tilde{Z_1})-h(\tilde{Z_1})+C_{21}+C_{12}\nonumber\\
&\stackrel{(b)}{=}& \log \det\left[ \begin{array}{cc}
I_{N_1}+{\rho }_{11}H_{11}Q_{11}H^{\dagger }_{11}
+{\rho }_{21}H_{21}Q_{22}H^{\dagger }_{21}
 &
\sqrt{{\rho }_{12}{\rho }_{11}}H_{11}Q_{11}H^{\dagger }_{12}
\\
\sqrt{{\rho }_{12}{\rho }_{11}}H_{12}Q_{11}H^{\dagger }_{11}
&
I_{N_2}+{\rho }_{12}H_{12}Q_{11}H^{\dagger }_{12}
\end{array}
\right]\nonumber\\
&& + \log \det\left[ \begin{array}{cc}
I_{N_2}+{\rho }_{22}H_{22}Q_{22}H^{\dagger }_{22}
+{\rho }_{12}H_{12}Q_{11}H^{\dagger }_{12}
 &
\sqrt{{\rho }_{21}{\rho }_{22}}H_{22}Q_{22}H^{\dagger }_{21}
\\
\sqrt{{\rho }_{21}{\rho }_{22}}H_{21}Q_{22}H^{\dagger }_{22}
&
I_{N_1}+{\rho }_{21}H_{21}Q_{22}H^{\dagger }_{21}
\end{array}
\right]\nonumber\\
&&- \log \det (I_{N_2}+{\rho }_{12}H_{12}Q_{11}H^{\dagger }_{12})
- \log \det (I_{N_1}+{\rho }_{21}H_{21}Q_{22}H^{\dagger }_{21})
+C_{21}+C_{12}\nonumber\\
&\stackrel{(c)}{=}&
\log  \det (I_{N_1}+{\rho}_{11}H_{11}Q_{11}H^{\dagger}_{11}+
{\rho}_{21}H_{21}Q_{22}H^{\dagger}_{21}-
{\rho}_{11}{\rho}_{12}H_{11}Q_{11}H^{\dagger}_{12}\nonumber\\
&&{(I_{N_2}+{\rho}_{12}H_{12}Q_{11}H^{\dagger}_{12})^{-1}}H_{12}Q_{11}H^{\dagger}_{11})+\log  \det (I_{N_2}+{\rho}_{22}H_{22}Q_{22}H^{\dagger}_{22}+\nonumber\\
&&{\rho}_{12}H_{12}Q_{11}H^{\dagger}_{12}-
{\rho}_{22}{\rho}_{21}H_{22}Q_{22}H^{\dagger}_{21}
{(I_{N_1}+{\rho}_{21}H_{21}Q_{22}H^{\dagger}_{21})^{-1}}H_{21}Q_{22}H^{\dagger}_{22})+\nonumber\\
&&C_{12}+C_{21}\nonumber
\end{eqnarray}

\begin{eqnarray}
&\stackrel{(d)}{\le}&  \log  \det (I_{N_1}+{\rho}_{11}H_{11}H^{\dagger}_{11}+
{\rho}_{21}H_{21}H^{\dagger}_{21}-
{\rho}_{11}{\rho}_{12}H_{11}H^{\dagger}_{12}
{(I_{N_2}+{\rho}_{12}H_{12}H^{\dagger}_{12})^{-1}}H_{12}H^{\dagger}_{11})+\nonumber\\
&&\log  \det (I_{N_2}+{\rho}_{22}H_{22}H^{\dagger}_{22}+
{\rho}_{12}H_{12}H^{\dagger}_{12}-
{\rho}_{22}{\rho}_{21}H_{22}H^{\dagger}_{21}
{(I_{N_1}+{\rho}_{21}H_{21}H^{\dagger}_{21})^{-1}}H_{21}H^{\dagger}_{22})+\nonumber\\
&&C_{12}+C_{21},
\end{eqnarray}
where $(a)$ follows from Lemma \ref{gauss_ob}, $(b)$ follows from the fact that $h(Z_i)=\log\det \left(2\pi e I_{N_i}\right)$, $(c)$ follows from Lemma \ref{lem_block}, and $(d)$ follows from the fact that $\log\det(.)$ is a monotonically increasing function on the cone of positive definite matrices and we have $Q_{ii} \preceq I_{M_i}$ for $i\in \{1,2\}$, and Lemma \ref{L} where for the first term $K_1=Q_{11}$, $K_2=I_{M_{1}}$ and $S=\sqrt{{\rho }_{12}}H^{\dagger}_{12}$ and for the second term  where $K_1=Q_{22}$, $K_2=I_{M_{2}}$ and $S=\sqrt{{\rho }_{21}}H^{\dagger}_{21}$. It gives the bound \eqref{ro0eq3}.

\noindent {\bf \eqref{o4}$\rightarrow$\eqref{ro0eq4}:} For the bound \eqref{o4} in Lemma \ref{lemma_conditional},
\begin{eqnarray}
R_1+R_2&\le& h(H_{11}X_1+Z_1|{S_1})+h(Y_2)-h(Z_1,Z_2)+C_{12}\nonumber\\
&=& h(H_{11}X_1+Z_1|H_{12}X_1+Z_2)+h(Y_2)-h(Z_1,Z_2)+C_{12}\nonumber\\
&\le& h(H_{11}X_1^G+Z_1|H_{12}X_1^G+Z_2)+h(Y_2^G)-h(Z_1,Z_2)+C_{12}\nonumber\\
&=& h(H_{11}X_1^G+Z_1,H_{12}X_1^G+Z_2)-h(H_{12}X_1^G+Z_2)+h(Y_2^G)-h(Z_1,Z_2)+C_{12}\nonumber\\
&\stackrel{(a)}{\le}& \log \det\left[ \begin{array}{cc}
I_{N_1}+{\rho }_{11}H_{11}Q_{11}H^{\dagger }_{11}
 &
\sqrt{{\rho }_{12}{\rho }_{11}}H_{11}Q_{11}H^{\dagger }_{12}
\\
\sqrt{{\rho }_{12}{\rho }_{11}}H_{12}Q_{11}H^{\dagger }_{11}
&
I_{N_2}+{\rho }_{12}H_{12}Q_{11}H^{\dagger }_{12}
\end{array}
\right]\nonumber\\
&&+\log \det (I_{N_2}+{\rho }_{12}H_{12}Q_{11}H^{\dagger }_{12}+{\rho }_{22}H_{22}Q_{22}H^{\dagger }_{22})\nonumber\\
&&-\log \det (I_{N_2}+{\rho }_{12}H_{12}Q_{11}H^{\dagger }_{12})+C_{12}\nonumber\\
&\stackrel{(b)}{=}& \log  \det (I_{N_1}+{\rho}_{11}H_{11}Q_{11}H^{\dagger}_{11}-
{\rho}_{11}{\rho}_{12}H_{11}Q_{11}H^{\dagger}_{12}
{(I_{N_2}+{\rho}_{12}H_{12}Q_{11}H^{\dagger}_{12})^{-1}}H_{12}Q_{11}H^{\dagger}_{11})+\nonumber\\
&&+\log \det (I_{N_2}+{\rho }_{12}H_{12}Q_{11}H^{\dagger }_{12}+{\rho }_{22}H_{22}Q_{22}H^{\dagger }_{22})+C_{12}\nonumber\\
&\stackrel{(c)}{\le}& \log  \det (I_{N_1}+{\rho}_{11}H_{11}H^{\dagger}_{11}-
{\rho}_{11}{\rho}_{12}H_{11}H^{\dagger}_{12}
{(I_{N_2}+{\rho}_{12}H_{12}H^{\dagger}_{12})^{-1}}H_{12}H^{\dagger}_{11})+\nonumber\\
&&+\log \det (I_{N_2}+{\rho }_{12}H_{12}Q_{11}H^{\dagger }_{12}+{\rho }_{22}H_{22}Q_{22}H^{\dagger }_{22})+C_{12}\nonumber\\
&\stackrel{(d)}{\le}&\log  \det (I_{N_1}+{\rho}_{11}H_{11}H^{\dagger}_{11}-
{\rho}_{11}{\rho}_{12}H_{11}H^{\dagger}_{12}
{(I_{N_2}+{\rho}_{12}H_{12}H^{\dagger}_{12})^{-1}}H_{12}H^{\dagger}_{11})+\nonumber\\
&&\log  \det (I_{N_2}+{\rho}_{22}H_{22}H^{\dagger}_{22}+
{\rho}_{12}H_{12}H^{\dagger}_{12})+C_{12},
\end{eqnarray}
where $(a)$ follows from the fact that $h(Z_i)=\log \det \left(2\pi e I_{N_i}\right)$, and $(b)$ follows from Lemma \ref{lem_block}, and $(c)$ follows from Lemma \ref{L} where $K_1=Q_{11}$, $K_2=I_{M_{1}}$ and $S=\sqrt{{\rho }_{12}}H^{\dagger}_{12}$, and $(d)$ follows from the fact that $\log\det(.)$ is a monotonically increasing function on the cone of positive definite matrices and we have $Q_{ii} \preceq I_{M_i}$ for $i\in \{1,2\}$. It gives the bound \eqref{ro0eq4}.

\noindent {\bf \eqref{o5}$\rightarrow$\eqref{ro0eq5}:} This is obtained similarly to the last bound by exchanging 1 and 2 in the indices.

\noindent {\bf \eqref{o6}$\rightarrow$\eqref{ro0eq6}:} For the bound \eqref{o6} in Lemma \ref{lemma_conditional}, assume infinite capacity between the receivers, i.e., consider a single receiver. We get
\begin{eqnarray}
R_1+R_2 &\le& h\left(Y_1,Y_2\right)-h\left(Z_1,Z_2\right)\nonumber\\
&\stackrel{(a)}{\le}& \log \det \left(I_{N_1+N_2}+
\left[\begin{array}{c}
\sqrt{\rho_{11}}H_{11} \\
\sqrt{\rho_{12}}H_{12} \end{array}\right]
Q_{11}\left[\sqrt{\rho_{11}}H_{11}^{\dagger}\ \sqrt{\rho_{12}}H_{12}^{\dagger}\right]+\right.\nonumber\\
&&\left.\left[\begin{array}{c}
\sqrt{\rho_{21}}H_{21} \\
\sqrt{\rho_{22}}H_{22} \end{array}\right]
Q_{22}\left[\sqrt{\rho_{21}}H_{21}^{\dagger}\ \sqrt{\rho_{22}}H_{22}^{\dagger}\right]\right)\nonumber\\
&\stackrel{(b)}{\le}& \log \det \left(I_{N_1+N_2}+
\left[\begin{array}{c}
\sqrt{\rho_{11}}H_{11} \\
\sqrt{\rho_{12}}H_{12} \end{array}\right]
\left[\sqrt{\rho_{11}}H_{11}^{\dagger}\ \sqrt{\rho_{12}}H_{12}^{\dagger}\right]+\right.\nonumber\\
&&\left.\left[\begin{array}{c}
\sqrt{\rho_{21}}H_{21} \\
\sqrt{\rho_{22}}H_{22} \end{array}\right]
\left[\sqrt{\rho_{21}}H_{21}^{\dagger}\ \sqrt{\rho_{22}}H_{22}^{\dagger}\right]\right),
\end{eqnarray}
where $(a)$ follows from Lemma \ref{gauss_ob} and from the fact that $h(Z_i)=\log \det  \left(2\pi e I_{N_i}\right)$, and $(b)$ follows from the fact that $\log\det(.)$ is a monotonically increasing function on the cone of positive definite matrices and we have $Q_{ii} \preceq I_{M_i}$ for $i\in \{1,2\}$. It gives the bound \eqref{ro0eq6}.

\noindent {\bf \eqref{o7}$\rightarrow$\eqref{ro0eq7}:} For the bound \eqref{o7} in Lemma \ref{lemma_conditional},
\begin{eqnarray}
2{R_1}+R_2&\le& h(\sqrt{{\rho}_{11}}H_{11}X_1+Z_1|{S_1})+h(Y_1)+h(Y_2|{S_2})-h(Z_1,Z_2)-h({Z_1})+C_{21}+C_{12}\nonumber\\
&\le& h(\sqrt{{\rho}_{11}}H_{11}X_1^G+Z_1|{S_1^G})+h(Y_1^G)+h(Y_2^G|{S_2^G})-h(Z_1,Z_2)-h({Z_1})+C_{21}+C_{12}\nonumber\\
&=&
h(\sqrt{{\rho}_{11}}H_{11}X_1^G+Z_1,{S_1^G})-h({S_1^G})+h(Y_1^G)+h(Y_2^G,{S_2^G})-h({S_2^G})-h(Z_1,Z_2)\nonumber\\
&&-h({Z_1})+C_{21}+C_{12}\nonumber\\
&=& h(\sqrt{{\rho}_{11}}H_{11}X_1^G+Z_1,\sqrt{{\rho}_{12}}H_{12}X_1^G+Z_2)+
h(\sqrt{{\rho}_{12}}H_{12}X_1^G+\sqrt{{\rho}_{22}}H_{22}X_2^G+Z_2,\nonumber\\
&&\sqrt{{\rho}_{21}}H_{21}X_2^G+Z_1)+h(\sqrt{{\rho}_{11}}H_{11}X_1^G+\sqrt{{\rho}_{21}}H_{21}X_2^G+Z_1)
-h(\sqrt{{\rho}_{12}}H_{12}X_1^G+Z_2)\nonumber\\
&&-h(\sqrt{{\rho}_{21}}H_{21}X_2^G+Z_1^G)-h(Z_1,Z_2)-h({Z_1})+C_{21}+C_{12}\nonumber
\end{eqnarray}

\begin{eqnarray}
&\stackrel{(a)}{\le}&
\log \det\left[ \begin{array}{cc}
I_{N_2}+{\rho }_{12}H_{12}Q_{11}H^{\dagger }_{12}+{\rho }_{22}H_{22}Q_{22}H^{\dagger }_{22}
 &
\sqrt{{\rho }_{22}{\rho }_{21}}H_{22}Q_{22}H^{\dagger }_{21}
\\
\sqrt{{\rho }_{22}{\rho }_{21}}H_{21}Q_{22}H^{\dagger }_{22}
&
I_{N_1}+{\rho }_{21}H_{21}Q_{22}H^{\dagger }_{21}
\end{array}
\right]\nonumber\\
&&+\log \det\left[ \begin{array}{cc}
I_{N_1}+{\rho }_{11}H_{11}Q_{11}H^{\dagger }_{11}
 &
\sqrt{{\rho }_{12}{\rho }_{11}}H_{11}Q_{11}H^{\dagger }_{12}
\\
\sqrt{{\rho }_{12}{\rho }_{11}}H_{12}Q_{11}H^{\dagger }_{11}
&
I_{N_2}+{\rho }_{12}H_{12}Q_{11}H^{\dagger }_{12}
\end{array}
\right]\nonumber\\
&&+\log \det (I_{N_1}+{\rho }_{11}H_{11}Q_{11}H^{\dagger }_{11}+{\rho }_{21}H_{21}Q_{22}H^{\dagger }_{21})\nonumber\\
&&-\log \det (I_{N_2}+{\rho }_{12}H_{12}Q_{11}H^{\dagger }_{12})
-\log \det (I_{N_1}+{\rho }_{21}H_{21}Q_{22}H^{\dagger }_{21})
+C_{12}+C_{21}\nonumber\\
&\stackrel{(b)}{\le}& \log  \det (I_{N_1}+{\rho}_{11}H_{11}Q_{11}H^{\dagger}_{11}-
{\rho}_{11}{\rho}_{12}H_{11}Q_{11}H^{\dagger}_{12}
{(I_{N_2}+{\rho}_{12}H_{12}Q_{11}H^{\dagger}_{12})^{-1}}H_{12}Q_{11}H^{\dagger}_{11})+\nonumber\\
&&\log  \det (I_{N_2}+{\rho}_{22}H_{22}Q_{22}H^{\dagger}_{22}+
{\rho}_{12}H_{12}Q_{11}H^{\dagger}_{12}-
{\rho}_{22}{\rho}_{21}H_{22}Q_{22}H^{\dagger}_{21}
{(I_{N_1}+{\rho}_{21}H_{21}Q_{22}H^{\dagger}_{21})^{-1}}\nonumber\\
&&H_{21}Q_{22}H^{\dagger}_{22})+
\log  \det (I_{N_1}+{\rho}_{11}H_{11}Q_{11}H^{\dagger}_{11}+
{\rho}_{21}H_{21}Q_{22}H^{\dagger}_{21})+C_{12}+C_{21}\nonumber\\
&\stackrel{(c)}{\le}& \log  \det (I_{N_1}+{\rho}_{11}H_{11}H^{\dagger}_{11}-
{\rho}_{11}{\rho}_{12}H_{11}H^{\dagger}_{12}
{(I_{N_2}+{\rho}_{12}H_{12}H^{\dagger}_{12})^{-1}}H_{12}H^{\dagger}_{11})+\nonumber\\
&&\log  \det (I_{N_2}+{\rho}_{22}H_{22}H^{\dagger}_{22}+
{\rho}_{12}H_{12}H^{\dagger}_{12}-
{\rho}_{22}{\rho}_{21}H_{22}H^{\dagger}_{21}
{(I_{N_1}+{\rho}_{21}H_{21}H^{\dagger}_{21})^{-1}}H_{21}H^{\dagger}_{22})+\nonumber\\
&&\log  \det (I_{N_1}+{\rho}_{11}H_{11}H^{\dagger}_{11}+
{\rho}_{21}H_{21}H^{\dagger}_{21})+C_{12}+C_{21},
\end{eqnarray}
where $(a)$ follows from Lemma \ref{gauss_ob} and from the fact that $h(Z_i)=\log \det  \left(2\pi e I_{N_i}\right)$, $(b)$ follows from Lemma \ref{lem_block}, and $(c)$ follows from Lemma \ref{L} and the fact that $\log\det(.)$ is a monotonically increasing function on the cone of positive definite matrices and we have $Q_{ii} \preceq I_{M_i}$ for $i\in \{1,2\}$. It gives the bound \eqref{ro0eq7}.

\noindent {\bf \eqref{o8}$\rightarrow$\eqref{ro0eq8}:} This is obtained similarly to the last bound by exchanging 1 and 2 in the indices.

\noindent {\bf \eqref{o9}$\rightarrow$\eqref{ro0eq9}:} For the bound \eqref{o9} in Lemma \ref{lemma_conditional},
\begin{eqnarray}
&&2{R_1}+R_2\nonumber\\
&\le& h(Y_1,Y_2|\tilde{S_2})+h(Y_1)-h(Z_1,Z_2)-h({Z_1})+C_{21}\nonumber\\
&\stackrel{(a)}{\le}& h(Y_1^G,Y_2^G|\tilde{S_2}^G)+h(Y_1^G)-h(Z_1,Z_2)-h({Z_1})+C_{21}\nonumber\\
&=& h(Y_1^G,Y_2^G,H_{21}X_2^G+\hat{Z_1})-h(H_{21}X_2^G+\hat{Z_1})+h(Y_1^G)-h(Z_1,Z_2)
-h({Z_1})+C_{21}\nonumber\\
&\stackrel{(b)}{\le}& h(Y_1,Y_2,H_{21}X_2+\hat{Z_1})-h(H_{21}X_2+\hat{Z_1})-h(Z_1,Z_2)+\nonumber\\
&&\log \det (I_{N_1}+{\rho}_{11}H_{11}H^{\dagger}_{11}+
{\rho}_{21}H_{21}H^{\dagger}_{21})+C_{21}\nonumber
\end{eqnarray}

\begin{eqnarray}
&\stackrel{(c)}{\le}&
\log \det \nonumber\\
&&{\small
\left[\begin{array}{ccc}
I_{N_2}+{\rho}_{22}H_{22}Q_{22}H^{\dagger}_{22}+{\rho}_{12}H_{12}Q_{11}H^{\dagger}_{12} & \sqrt{{\rho}_{21}{\rho}_{22}}H_{22}Q_{22}H^{\dagger}_{21}+
\sqrt{{\rho}_{11}{\rho}_{12}}H_{12}Q_{11}H^{\dagger}_{11} & \sqrt{{\rho}_{22}{\rho}_{21}}H_{22}Q_{22}H^{\dagger}_{21}\\
\sqrt{{\rho}_{21}{\rho}_{22}}H_{21}Q_{22}H^{\dagger}_{22}+
\sqrt{{\rho}_{11}{\rho}_{12}}H_{11}Q_{11}H^{\dagger}_{12} & I_{N_1}+{\rho}_{11}H_{11}Q_{11}H^{\dagger}_{11}+
{\rho}_{21}H_{21}Q_{22}H^{\dagger}_{21} & {\rho}_{21}H_{21}Q_{22}H^{\dagger}_{21}\\
\sqrt{{\rho}_{22}{\rho}_{21}}H_{21}Q_{22}H^{\dagger}_{22} &
{\rho}_{21}H_{21}Q_{22}H^{\dagger}_{21} & I_{N_1}+{\rho}_{21}H_{21}Q_{22}H^{\dagger}_{21}\end{array}
\right]}\nonumber\\
&&-\log \det (I_{N_1}+{\rho}_{21}H_{21}Q_{22}H^{\dagger}_{21})+\log \det (I_{N_1}+{\rho}_{11}H_{11}H^{\dagger}_{11}+
{\rho}_{21}H_{21}H^{\dagger}_{21})
+C_{21}\nonumber\\
&\stackrel{(d)}{=}& \log  \det \left(I_{N_1+N_2}+
\biggl[\begin{array}{c}
\sqrt{\rho_{22}}H_{22} \\
\sqrt{\rho_{21}}H_{21} \end{array}\biggr]
(Q_{22}-Q_{22}H^{\dagger}_{21}{(I_{N_1}+{\rho}_{21}H_{21}Q_{22}H^{\dagger}_{21})^{-1}}H_{21}Q_{22})
[\sqrt{\rho_{22}}H_{22}^{\dagger}\ \sqrt{\rho_{21}}H_{21}^{\dagger}]\right.\nonumber\\
&&\left.+\biggl[\begin{array}{c}
\sqrt{\rho_{12}}H_{12} \\
\sqrt{\rho_{11}}H_{11} \end{array}\biggr]
Q_{11}[\sqrt{\rho_{12}}H_{12}^{\dagger}\ \sqrt{\rho_{11}}H_{11}^{\dagger}]\right)+\log  \det (I_{N_1}+{\rho}_{11}H_{11}H^{\dagger}_{11}+
{\rho}_{21}H_{21}H^{\dagger}_{12})+C_{21}\nonumber\\
&\stackrel{(e)}{\le}& \log  \det \left(I_{N_1+N_2}+
\biggl[\begin{array}{c}
\sqrt{\rho_{22}}H_{22} \\
\sqrt{\rho_{21}}H_{21} \end{array}\biggr]
(I_{M_2}-H^{\dagger}_{21}{(I_{N_1}+{\rho}_{21}H_{21}H^{\dagger}_{21})^{-1}}H_{21})
[\sqrt{\rho_{22}}H_{22}^{\dagger}\ \sqrt{\rho_{21}}H_{21}^{\dagger}]\right.\nonumber\\
&&\left.+\biggl[\begin{array}{c}
\sqrt{\rho_{12}}H_{12} \\
\sqrt{\rho_{11}}H_{11} \end{array}\biggr]
[\sqrt{\rho_{12}}H_{12}^{\dagger}\ \sqrt{\rho_{11}}H_{11}^{\dagger}]\right)+\log  \det (I_{N_1}+{\rho}_{11}H_{11}H^{\dagger}_{11}+
{\rho}_{21}H_{21}H^{\dagger}_{12})+C_{21},
\end{eqnarray}
where $(a)$ and $(b)$ follow from Lemma \ref{gauss_ob} and from the fact that $h(Z_i)=\log \det \left(2\pi e I_{N_i}\right)$ and the fact that $\log\det(.)$ is a monotonically increasing function on the cone of positive definite matrices and we have $Q_{ii} \preceq I_{M_i}$ for $i\in \{1,2\}$, $(c)$ follows from Lemma \ref{gauss_ob}, $(d)$ follows from Lemma \ref{lem_block}, and $(e)$ follows from Lemma \ref{L} and the fact that $\log\det(.)$ is a monotonically increasing function on the cone of positive definite matrices and we have $Q_{ii} \preceq I_{M_i}$ for $i\in \{1,2\}$. It gives the bound \eqref{ro0eq9}.

\noindent {\bf \eqref{o10}$\rightarrow$\eqref{ro0eq10}:} This is obtained similarly to the last bound by exchanging 1 and 2 in the indices.

\section{Proof of Achievability for Theorem \ref{outer_inner_capacity_reciprocal}} \label{apdx_inner}

In this section, we prove the achievability for Theorem \ref{outer_inner_capacity_reciprocal}. Denote the RHS of the 10 terms in \eqref{ro0eq1}-\eqref{ro0eq10} as $I_1$ to $I_{10}$, respectively.
We will show a constant gap achiavability result for the two-user MIMO Gaussian IC with limited receiver cooperation in the following Lemma.

\begin{lemma}\label{achiv}
The capacity region for the two-user MIMO IC with receiver cooperation contains the region formed by $(R_1,R_2)$ such that
\begin{eqnarray}
R_1 &\le& I_1-N_1-N_2,\nonumber\\
R_2 &\le& I_2-N_1-N_2,\nonumber\\
R_1+R_2 &\le& \min\{I_3,I_4,I_5,I_6\}-N_1-N_2-\max(N_1,N_2),\nonumber\\
2R_1+R_2 &\le& \min\{I_7,I_9\}-2N_1-2N_2,\nonumber\\
R_1+2R_2 &\le& \min\{I_8,I_{10}\}-2N_1-3N_2.
\end{eqnarray}
\end{lemma}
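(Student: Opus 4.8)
The plan is to analyze the two-round cooperation strategy $STG_{2\rightarrow 1\rightarrow 2}$ and its mirror image $STG_{1\rightarrow 2\rightarrow 1}$ described in Section~\ref{main_results}, evaluate the resulting achievable rate regions for the Gaussian input split of \eqref{1t}--\eqref{3t} and the quantization distortion $\Delta$ of \eqref{delt}, and then pass to the convex hull. First I would write down the standard constraints for the protocol under $STG_{2\rightarrow 1\rightarrow 2}$: $\mathsf{D}_2$ forms the Wyner--Ziv-style description $\hat Y_2$ of \eqref{hat} and forwards a bin index over its $C_{21}$-capacity link, which is admissible provided the description rate exceeds the binning rate by at most $C_{21}$ --- the relevant rate loss being exactly $\xi = I(\hat Y_2;Y_2\mid X_{1c},X_1,X_{2c},Y_1)\le N_2$ computed in \eqref{xi}; $\mathsf{D}_1$ then jointly decodes both common messages and its own private message from $(Y_1,\hat Y_2)$, which yields constraints on $R_1$, on the common rate of user~2, and on their weighted sums; after re-binning the two common messages over its $C_{12}$-link, $\mathsf{D}_2$ decodes both common messages and its own private message from $Y_2$ and the forwarded bin indices, giving a second family of constraints. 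Fourier--Motzkin elimination over the private/common rate-split variables then produces the region $\mathcal{R}_{2\rightarrow 1\rightarrow 2}$ in terms of mutual-information expressions, and the symmetric construction gives $\mathcal{R}_{1\rightarrow 2\rightarrow 1}$.

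Next I would evaluate each mutual-information term under the chosen distributions and show it is within an explicit integer number of bits of the matching $I_j$ (with each $Q_{ii}$ replaced by $I_{M_i}$). The structural facts doing the work are: the private signal of $\mathsf{T}_i$ is received at $\mathsf{D}_j$ below the noise floor, $\rho_{ij}H_{ij}Q_{ip}H_{ij}^\dagger \preceq I_{N_j}$, so that treating an unintended private signal as noise costs at most $N_j$ bits at $\mathsf{D}_j$; the quantization noise obeys $\Delta = I_{N_2}+\rho_{22}H_{22}Q_{2p}H_{22}^\dagger \preceq 2I_{N_2}$, so inserting or deleting $\hat Z_2$ costs at most $\log\det(2\Delta)-\log\det\Delta = N_2$ bits, exactly as in the computation of $\xi$; and monotonicity of $\log\det(\cdot)$ on the p.s.d. cone together with Lemma~\ref{lem_block} and Lemma~\ref{L} lets me replace the design covariances by identities and drop the private cross-terms in each determinant with only these losses. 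I would track the constants carefully so that the individual-rate bounds lose $N_1+N_2$, the sum-rate bound loses $N_1+N_2+\max(N_1,N_2)$, and the two weighted-sum bounds lose $2N_1+2N_2$ and $2N_1+3N_2$ respectively; the asymmetry of the last two reflects the fact that $STG_{2\rightarrow 1\rightarrow 2}$ produces the $2R_1+R_2$ bound without an extra quantization term while the $R_1+2R_2$ bound picks one up.

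Finally, I would set $\mathcal{R} = \mathrm{conv}\{\mathcal{R}_{2\rightarrow 1\rightarrow 2}\cup \mathcal{R}_{1\rightarrow 2\rightarrow 1}\}$ and argue, as in \cite{Tse}, that two of the constraints in each region are inactive on the convex hull: whenever one of these redundant bounds is tight, the $R_1+R_2$ bound of the \emph{same} region is also tight, so the corner it would create is dominated by a convex combination of corners of the two regions --- the situation illustrated in Figure~\ref{fig:Example111}. What remains is precisely the five constraint types in the statement, with the stated gaps, and since this region lies inside the achievable $\mathcal{R}\subseteq {\mathbb C}_{RC}$, the lemma follows. \textbf{The main obstacle} I expect is the bookkeeping in the middle step: verifying that Fourier--Motzkin elimination delivers exactly the mutual-information terms matching $I_3,\dots,I_{10}$ and that each is within the claimed --- and no larger --- number of bits of its counterpart, in particular getting the $\max(N_1,N_2)$ in the sum-rate gap and the distinct constants $2N_1+2N_2$ and $2N_1+3N_2$ to emerge correctly from the two asymmetric decoding orders.
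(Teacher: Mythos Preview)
Your proposal is correct and follows essentially the same route as the paper: start from the mutual-information region of Lemma~\ref{inner} for $STG_{2\to1\to2}$, lower-bound each term by the corresponding $I_j$ minus an explicit constant using $\rho_{ij}H_{ij}Q_{ip}H_{ij}^\dagger\preceq I_{N_j}$, the quantization loss $\xi\le N_2$, and monotonicity of $\log\det$, then show that the two ``extra'' constraints (\eqref{rr1x} and \eqref{2r1r2.2}) are redundant in the convex hull because whenever they are tight some $R_1+R_2$ bound is also tight. One small slip: you assert $\Delta\preceq 2I_{N_2}$, but $Q_{2p}$ is designed so that $\rho_{21}H_{21}Q_{2p}H_{21}^\dagger\preceq I_{N_1}$, not so that $\rho_{22}H_{22}Q_{2p}H_{22}^\dagger\preceq I_{N_2}$, so $\Delta$ need not be bounded by $2I_{N_2}$ --- the $N_2$-bit loss you want comes purely from the identity $\log\det(2\Delta)-\log\det\Delta=N_2$, which you do state correctly.
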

The rest of this section proves this Lemma. This region is within $N_1+N_2$ bits of the outer bound giver by $\mathcal{R}_o$ and thus proves the achievability for Theorem 1. In the following, we will consider the rate regions for $STG_{2\rightarrow 1\rightarrow 2}$ and then take the convex hull of $STG_{2\rightarrow 1\rightarrow 2}$ and $STG_{1\rightarrow 2\rightarrow 1}$ to get this result.

\begin{lemma} \label{inner}
If we consider $STG_{2\rightarrow 1\rightarrow 2}$, the capacity region of the two-user MIMO Gaussian IC with limited receiver cooperation includes the set of $(R_1,R_2)$ such that
\begin{eqnarray}
R_1&\le& I(X_1;Y_1|X_{2c}),\label{rr1x}\\
R_1&\le& I(X_1;Y_1|X_{1c},X_{2c})+I(X_{1c},X_2;Y_2|X_{2c})+C_{12},\label{rr1}\\
R_2&\le& I(X_2;Y_2|X_{1c})+C_{12},\label{rr2x}\\
R_2&\le& I(X_{2c};Y_1|X_1)+I(X_2;Y_2|X_{1c},X_{2c}),\label{rr2}\\
R_1+R_2&\le& I(X_{2c},X_1;Y_1)+I(X_2;Y_2|X_{1c},X_{2c})+{(C_{21}-\xi)}^{+},\label{r1r2.1}\\
R_1+R_2&\le& I(X_{2c},X_1;Y_1,\hat{Y_2})+I(X_2;Y_2|X_{1c},X_{2c}),\label{r1r2.2}\\
R_1+R_2&\le& I(X_{2c},X_1;Y_1|X_{1c})+I(X_{1c},X_2;Y_2|X_{2c})+C_{12}+{(C_{21}-\xi)}^{+},\label{r1r2.3}\\
R_1+R_2&\le& I(X_{2c},X_1;Y_1,\hat{Y_2}|X_{1c})+I(X_{1c},X_2;Y_2|X_{2c})+C_{12},\label{r1r2.4}\\
R_1+R_2&\le& I(X_1;Y_1|X_{1c},X_{2c})+I(X_{1c},X_2;Y_2)+C_{12},\label{r1r2.5}\\
R_1+R_2&\le& I(X_1;Y_1|X_{1c},X_{2c})+I(X_{2c};Y_1|X_1)+I(X_{1c},X_2;Y_2|X_{2c})+C_{12},\label{r1r2.6}\\
2R_1+R_2&\le& I(X_1,X_{2c};Y_1)+I(X_1;Y_1|X_{1c},X_{2c})+
I(X_{1c},X_2;Y_2|X_{2c})+C_{12}+{(C_{21}-\xi)}^{+},\label{2r1r2.1}\\
2R_1+R_2&\le& I(X_1,X_{2c};Y_1,\hat {Y_2})+I(X_1;Y_1|X_{1c},X_{2c})
+I(X_{1c},X_2;Y_2|X_{2c})+C_{12},\label{2r1r2.2}\\
R_1+2R_2&\le& I(X_1,X_{2c};Y_1|X_{1c})+I(X_{1c},X_{2};Y_2)
+I(X_2;Y_2|X_{1c},X_{2c})+C_{12}+{(C_{21}-\xi)}^{+},\label{r12r2.1}
\end{eqnarray}

\begin{eqnarray}
R_1+2R_2&\le& I(X_1,X_{2c};Y_1|X_{1c})+I(X_{2c};Y_1|X_{1})+I(X_{1c},X_{2};Y_2|X_{2c})
+I(X_2;Y_2|X_{1c},X_{2c})\nonumber\\
&&+C_{12}+{(C_{21}-\xi)}^{+},\label{r12r2.2}\\
R_1+2R_2&\le& I(X_1,X_{2c};Y_1,\hat {Y_2}|X_{1c})+I(X_{1c},X_2;Y_2)
+I(X_2;Y_2|X_{1c},X_{2c})+C_{12},\label{r12r2.3}\\
R_1+2R_2&\le& I(X_1,X_{2c};Y_1,\hat {Y_2}|X_{1c})+
I(X_{2c};Y_1|X_1)+
I(X_{1c},X_2;Y_2|X_{2c})\nonumber\\
&&+I(X_2;Y_2|X_{1c},X_{2c})+C_{12}\label{r12r2.4}.
\end{eqnarray}
where $\hat{Y}_i$ is defined in \eqref{hat}.
\end{lemma}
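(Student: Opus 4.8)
The plan is to mirror the random-coding argument used for the SISO interference channel with receiver cooperation in \cite{Tse}, replacing scalar channel gains by the matrices $H_{ij}$ and scalar Gaussian inputs by the Gaussian vectors $X_{ip}\sim\mathsf{CN}(0,Q_{ip})$, $X_{ic}\sim\mathsf{CN}(0,Q_{ic})$ of \eqref{1t}--\eqref{3t}, together with the Gaussian quantization test channel \eqref{hat}--\eqref{delt} at $\mathsf{D}_2$. First I would fix the codebooks: for each transmitter a superposition (Han--Kobayashi) codebook splitting $m_i$ into a common part $m_{ic}$ of rate $R_{ic}$ and a private part $m_{ip}$ of rate $R_{ip}$ with $R_i=R_{ic}+R_{ip}$; at $\mathsf{D}_2$ a Gaussian covering (quantization) codebook of rate $R_q$ for $\hat Y_2$ together with a uniform random binning of its codewords at rate $R_q'$; and at $\mathsf{D}_1$ two uniform random binnings of the common codebooks of $\mathsf{T}_1$ and $\mathsf{T}_2$ at rates $R_{1c}'$ and $R_{2c}'$. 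The three-phase protocol is exactly the one described after \eqref{delt}: $\mathsf{D}_2$ covers $Y_2$ by some $\hat Y_2$ and forwards its bin index over the link of capacity $C_{21}$ (so $R_q'\le C_{21}$); $\mathsf{D}_1$ joint-typicality-decodes $(m_1,m_{1c},m_{2c})$ from $(Y_1,\text{that bin index})$, then forwards the two common-message bin indices over the link of capacity $C_{12}$ (so $R_{1c}'+R_{2c}'\le C_{12}$); finally $\mathsf{D}_2$ decodes $(m_2,m_{2c},m_{1c})$ from $(Y_2,\text{the two bin indices})$.

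Next I would carry out the standard error analysis at the three decoding points. The covering lemma requires $R_q> I(\hat Y_2;Y_2)$, and a Wyner--Ziv-type argument shows $\mathsf{D}_1$ recovers the quantization bin jointly with the data provided the effective residual rate that must be forwarded is $\xi\triangleq I(\hat Y_2;Y_2\mid X_{1c},X_1,X_{2c},Y_1)$, which by \eqref{xi} is at most $N_2$; this is the origin of the $(C_{21}-\xi)^+$ terms and of the alternative bounds in which $\hat Y_2$ appears explicitly inside the mutual information (corresponding to $\mathsf{D}_1$ exploiting the description directly rather than paying for its forwarded rate). The packing lemma applied to $\mathsf{D}_1$'s decoder, over all nonempty subsets of $\{m_{1p},m_{1c},m_{2c}\}$ it must resolve, produces a family of upper bounds on $R_{1p},R_{1c},R_{2c}$ and their sums in terms of the conditional mutual informations $I(\,\cdot\;;Y_1,\hat Y_2\mid\cdot)$ (or $I(\,\cdot\;;Y_1\mid\cdot)+(C_{21}-\xi)^+$); the packing lemma at $\mathsf{D}_2$'s decoder, over subsets of $\{m_{2p},m_{2c},m_{1c}\}$ with side information worth up to $C_{12}$ bits, gives a second family. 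All these mutual-information terms are then written out as $\log\det$ expressions via Lemma \ref{lem_block}, exactly as in Appendix A.

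The final step is Fourier--Motzkin elimination of the auxiliary rates $R_{1p},R_{1c},R_{2p},R_{2c},R_q,R_q',R_{1c}',R_{2c}'$ (subject to $R_i=R_{ip}+R_{ic}$, the link constraints, and nonnegativity) to express the region purely in $(R_1,R_2)$. The structural point, identical to \cite{Tse}, is that the common sub-message rates $R_{1c},R_{2c}$ appear in \emph{both} receivers' constraint sets because $\mathsf{D}_1$ bin-and-forwards the common messages; eliminating them couples the two receivers and produces, besides the bounds on $R_1$, $R_2$ and $R_1+R_2$, the weighted bounds on $2R_1+R_2$ and $R_1+2R_2$. I would organize the elimination so as to recover exactly the list \eqref{rr1x}--\eqref{r12r2.4}.

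I expect the main obstacle to be twofold. The genuinely delicate part is bookkeeping: tracking the coefficients of $C_{12}$, of $(C_{21}-\xi)^+$, and of the $\hat Y_2$-versus-no-$\hat Y_2$ alternatives through the Fourier--Motzkin step, which is why I would lean on the fact that the combinatorial structure is the same as in the SISO derivation of \cite{Tse} and only the letters change. The second, more technical, point is that the private-message covariances $Q_{ip}$ in \eqref{2t} are in general rank-deficient, so the Gaussian codebooks live on a subspace; this is handled, as in standard MIMO achievability proofs, by working on the range space of $Q_{ip}$ (or via a limiting/discretization argument), after which the covering, packing, and Markov lemmas apply verbatim. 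Throughout, the inequality that keeps the quantization cost bounded is $\rho_{ij}H_{ij}Q_{ip}H_{ij}^\dagger\preceq I$, already noted after \eqref{3t}, together with $\xi\le N_2$ from \eqref{xi}.
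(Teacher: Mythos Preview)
Your proposal is correct and follows essentially the same approach as the paper, which simply states that the proof follows subsection V.C of \cite{Tse} with scalars replaced by vectors for the MIMO channel. Your elaboration on the covering/packing/Wyner--Ziv steps and the Fourier--Motzkin elimination is exactly the machinery behind that reference, and your remarks on the rank-deficient $Q_{ip}$ and the bookkeeping of $(C_{21}-\xi)^+$ versus $\hat Y_2$ alternatives are apt but go beyond what the paper itself spells out.
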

\begin{proof}
The proof follows similarly to that in subsection V.C. of \cite{Tse}, replacing scalars in the SISO channel by vectors for the MIMO channel.
\end{proof}

The rest of the section inner bounds the convex hull of union of this region and the one achieved from $STG_{1\rightarrow 2\rightarrow 1}$ to get the inner bound in Theorem \ref{outer_inner_capacity_reciprocal}.

The achievability scheme is a 2-round protocol as described in Section III and the transmission scheme is based on \eqref{1t}, \eqref{2t} and \eqref{3t}.

We will first evaluate some entropies that will be used in inner bounds of the achievable rate region.
\begin{eqnarray}\label{achyi}
&&h\left(Y_i\right)={\log  {\det (I_{N_i}+{\rho }_{ii}H_{ii}H^{\dagger }_{ii}+{\rho }_{ji}H_{ji}H^{\dagger }_{ji})\ }\ }+N_i\log (2\pi e),
\end{eqnarray}
\begin{eqnarray}\label{achyixi}
h\left(Y_i|X_i\right)={\log  {\det \left(I_{N_i}+{\rho }_{ji}H_{ji}H^{\dagger }_{ji}\right)\ }\ }+N_i\log (2\pi e).
\end{eqnarray}
In addition, we have
\begin{eqnarray}\label{achyiu}
&&h(Y_i|X_{ic},X_{jc})\ge h(Y_i|X_{ic},X_{jc},X_j)=\log \det (I_{N_i}+{\rho }_{ii}H_{ii}Q_{ip}H^{\dagger }_{ii})+N_i\log (2\pi e)\nonumber\\
&=&\log  \det  (I_{N_i}+{\rho }_{ii}H_{ii}H^{\dagger }_{ii}-
\sqrt{{\rho }_{ii}{\rho }_{ij}}H_{ii}H^{\dagger }_{ij}
(I_{N_j}+{\rho }_{ij}H_{ij}H^{\dagger }_{ij})^{-1}
\sqrt{{\rho }_{ii}{\rho }_{ij}}H_{ij}H^{\dagger }_{ii}+N_i\log (2\pi e).
\end{eqnarray}
Moreover, we have
\begin{eqnarray}
h(Y_i|X_{jc},X_i)
&\le& {\log  {\det (I_{N_i}+{\rho }_{ji}H_{ji}Q_{jp}H^{\dagger }_{ji})\ }\ }+N_i\log (2\pi e)\nonumber\\
&\stackrel{(a)}{\le}& \log {\det  \left(2I_{N_i}\right)\ }+N_i\log (2\pi e)\nonumber\\
&=& N_i+N_i\log (2\pi e),\label{U_B}
\end{eqnarray}
where $(a)$ follows from Lemma 11 of \cite{J1} by substituting $\sqrt{{\rho }_{ji}}H_{ji}^{\dagger}$ in $S$. This shows that $h(Y_i|X_{jc},X_i)$ is upper-bounded by $N_i$.

The rest of the section evaluates some terms in Lemma \ref{inner}. We will not evaluate the bounds \eqref{rr1x} and \eqref{2r1r2.2} for now and show that the rest of the bounds contain a region within $N_1+N_2$ bits of the outer bounds.

\noindent {\bf \eqref{rr1}:} For this bound in Lemma \ref{inner}, we have
\begin{eqnarray}
&&I(X_1;Y_1|X_{1c},X_{2c})+I(X_{1c},X_2;Y_2|X_{2c})+C_{12}\nonumber\\
&=&I(X_1;Y_1|X_{1c},X_{2c})+I(X_{1c},X_{2p};\sqrt{{\rho }_{12}}H_{12}X_{1}+\sqrt{{\rho }_{22}}H_{22}X_{2p}+Z_{2})+C_{12}\nonumber\\
&=&I(X_1;Y_1|X_{1c},X_{2c})+h(\sqrt{{\rho }_{12}}H_{12}X_{1}+\sqrt{{\rho }_{22}}H_{22}X_{2p}+Z_{2})-h(\sqrt{{\rho }_{12}}H_{12}X_{1p}+Z_{2})+C_{12}\nonumber\\
&\ge&I(X_1;Y_1|X_{1c},X_{2c})+h(\sqrt{{\rho }_{12}}H_{12}X_{1}+Z_{2})-
h(\sqrt{{\rho }_{12}}H_{12}X_{1p}+Z_{2})\nonumber\\
&=&I(X_1;Y_1|X_{1c},X_{2c})+I(X_{1c};\sqrt{{\rho }_{12}}H_{12}X_{1}+Z_{2})\nonumber\\
&=&I(X_1;Y_1|X_{2c},X_{1c})+I(X_{1c};Y_2|X_2)\nonumber\\
&=& h(Y_2|X_2)-h(Y_2|X_2,X_{1c})+h(Y_1|X_{2c},X_{1c})-h(Y_1|X_{2c},X_{1c},X_1)\nonumber\\
&\stackrel{(a)}{\ge}& h(Y_2|X_2)+h(Y_1|X_{2c},X_{1c})-N_1-N_2-(N_1+N_2)\log (2\pi e)\nonumber\\
&\stackrel{(b)}{=}& \log \det (I_{N_2}+{\rho }_{12}H_{12}H^{\dagger }_{12})+\log \det(I_{N_1}+{\rho }_{11}H_{11}Q_{1p}H^{\dagger }_{11}+{\rho }_{21}H_{21}Q_{2p}H^{\dagger }_{21})-N_1-N_2\nonumber\\
&\stackrel{(c)}{\ge}& \log \det (I_{N_2}+{\rho }_{12}H_{12}H^{\dagger }_{12})+\log \det(I_{N_1}+{\rho }_{11}H_{11}Q_{1p}H^{\dagger }_{11})-N_1-N_2\nonumber\\
&\stackrel{(d)}{=}& {\log  {\det  (I_{N_1}+{\rho}_{11}H_{11}H^{\dagger}_{11})}}+\log \det (I_{N_2}+{\rho }_{12}H_{12}H^{\dagger }_{12}-{\rho }_{12}{\rho }_{11}H_{12}H^{\dagger }_{11}\nonumber\\
&&{(I_{N_1}+{\rho}_{11}H_{11}H^{\dagger}_{11})^{-1}}H_{11}H^{\dagger }_{12})-N_1-N_2,
\end{eqnarray}
where $(a)$ follows from \eqref{U_B}, $(b)$ follows from the assumed Gaussian distributions, $(c)$ follows from the fact that $\log\det(.)$ is a monotonically increasing function on the cone of positive definite matrices, and $(d)$ follows from the fact that using Lemma \ref{lem_block},
\begin{eqnarray}
&&\log \det (I_{N_2}+{\rho }_{12}H_{12}H^{\dagger }_{12})+\log \det(I_{N_1}+{\rho }_{11}H_{11}Q_{1p}H^{\dagger }_{11})\nonumber\\
&=&
\log \det \left[ \begin{array}{cc}
I_{N_2}+\rho_{12}H_{12}H_{12}^{\dagger} & \sqrt{\rho_{12}\rho_{12}}H_{12}H_{11}^{\dagger} \\
\sqrt{\rho_{11}\rho_{12}}H_{11}H_{12}^{\dagger} & I_{N_1}+\rho_{11}H_{11}H_{11}^{\dagger} \end{array}
\right]\nonumber\\
&=&
\log \det (I_{N_2}+{\rho }_{12}H_{12}H^{\dagger }_{12}-{\rho }_{12}{\rho }_{11}H_{12}H^{\dagger }_{11}{(I_{N_1}+{\rho}_{11}H_{11}H^{\dagger}_{11})^{-1}}H_{11}H^{\dagger }_{12})+\nonumber\\
&&{\log  {\det  (I_{N_1}+{\rho}_{11}H_{11}H^{\dagger}_{11})}}.
\end{eqnarray}

Thus, we see that this $R_1$ bound is within $N_1+N_2$ bits to the outer bound in \eqref{ro0eq1}.

\noindent {\bf \eqref{rr2x}:} For this term in Lemma \ref{inner}, we have
\begin{eqnarray}
&&I(X_2;Y_2|X_{1c})+C_{12}\nonumber\\
&=& h(Y_2|X_{1c})- h(Y_2|X_{1c},X_2)+C_{12}\nonumber\\
&\stackrel{(a)}{\ge}& h(Y_2|X_{1c})+C_{12}- N_2-N_2\log (2\pi e)\nonumber\\
&=& \log \det(I_{N_2}+{\rho }_{22}H_{22}H^{\dagger }_{22}+{\rho }_{12}H_{12}Q_{1p}H^{\dagger }_{12})- N_2+C_{12}\nonumber\\
&\stackrel{(b)}{\ge}& \log \det(I_{N_2}+{\rho }_{22}H_{22}H^{\dagger }_{22})+C_{12}- N_2,
\end{eqnarray}
where $(a)$ follows from \eqref{U_B} and $(b)$ follows from the fact that $\log\det(.)$ is a monotonically increasing function on the cone of positive definite matrices.

Thus, we see that this $R_2$ bound is within $N_2$ bits of the outer bound in \eqref{ro0eq2}.

\noindent {\bf \eqref{rr2}:} For this term in Lemma \ref{inner}, we have
\begin{eqnarray}
&&I(X_{2c};Y_1|X_1)+I(X_2;Y_2|X_{1c},X_{2c})\nonumber\\
&=& h(Y_1|X_1)-h(Y_1|X_1,X_{2c})+h(Y_2|X_{1c},X_{2c})-h(Y_2|X_{1c},X_{2c},X_2)\nonumber\\
&\stackrel{(a)}{\ge}& h(Y_1|X_1)+h(Y_2|X_{1c},X_{2c})-N_1-N_2-(N_1+N_2)\log (2\pi e)\nonumber\\
&\stackrel{(b)}{=}& \log \det (I_{N_1}+{\rho }_{21}H_{21}H^{\dagger }_{21})+\log \det(I_{N_2}+{\rho }_{22}H_{22}Q_{2p}H^{\dagger }_{22}+{\rho }_{12}H_{12}Q_{1p}H^{\dagger }_{12})-N_1-N_2\nonumber\\
&\stackrel{(c)}{\ge}& \log \det (I_{N_1}+{\rho }_{21}H_{21}H^{\dagger }_{21})+\log \det(I_{N_2}+{\rho }_{22}H_{22}Q_{2p}H^{\dagger }_{22})-N_1-N_2\nonumber\\
&\stackrel{(d)}{=}& {\log  {\det  (I_{N_2}+{\rho}_{22}H_{22}H^{\dagger}_{22})}}+\log \det (I_{N_1}+{\rho }_{21}H_{21}H^{\dagger }_{21}-{\rho }_{21}{\rho }_{22}H_{21}H^{\dagger }_{22}\nonumber\\
&&{(I_{N_2}+{\rho}_{22}H_{22}H^{\dagger}_{22})^{-1}}H_{22}H^{\dagger }_{21})-N_1-N_2,
\end{eqnarray}
where $(a)$ follows from \eqref{U_B}, $(b)$ follows from the assumed Gaussian distributions, and $(c)$ follows from the fact that $\log\det(.)$ is a monotonically increasing function on the cone of positive definite matrices and $(d)$ follows from Lemma \ref{lem_block}. Using Lemma \ref{lem_block} it is easy to see that
\begin{eqnarray}
&&\log \det (I_{N_1}+{\rho }_{21}H_{21}H^{\dagger }_{21})+\log \det(I_{N_2}+{\rho }_{22}H_{22}Q_{2p}H^{\dagger }_{22})\nonumber\\
&=& \log  \det \left[ \begin{array}{cc}
I_{N_1}+\rho_{21}H_{21}H_{21}^{\dagger} & \sqrt{\rho_{21}\rho_{21}}H_{21}H_{22}^{\dagger} \\
\sqrt{\rho_{22}\rho_{21}}H_{22}H_{21}^{\dagger} & I_{N_2}+\rho_{22}H_{22}H_{22}^{\dagger} \end{array}
\right]\nonumber\\
&=&\log \det (I_{N_1}+{\rho }_{21}H_{21}H^{\dagger }_{21}-{\rho }_{21}{\rho }_{22}H_{21}H^{\dagger }_{22}{(I_{N_2}+{\rho}_{22}H_{22}H^{\dagger}_{22})^{-1}}H_{22}H^{\dagger }_{21})+\nonumber\\
&&{\log  {\det  (I_{N_2}+{\rho}_{22}H_{22}H^{\dagger}_{22})}}.
\end{eqnarray}
Thus, we see that this $R_2$ bound is within $N_1+N_2$ bits of the outer bound in \eqref{ro0eq2}.

\noindent {\bf \eqref{r1r2.1}:} For this bound in Lemma \ref{inner}, we have
\begin{eqnarray}
&&I(X_{2c},X_1;Y_1)+I(X_2;Y_2|X_{1c},X_{2c})+{(C_{21}-\xi)}^{+}\nonumber\\
&=& h(Y_1)-h(Y_1|X_{2c},X_1)+h(Y_2|X_{1c},X_{2c})-h(Y_2|X_{1c},X_2)
+{(C_{21}-\xi)}^{+}\nonumber\\
&\stackrel{(a)}{\ge}& h(Y_1)+h(Y_2|X_{1c},X_{2c})
+C_{21}-N_1-2{N_2}-(N_1+N_2)\log (2\pi e)\nonumber\\
&=& \log \det (I_{N_1}+{\rho }_{21}H_{21}H^{\dagger }_{21}+{\rho }_{11}H_{11}H^{\dagger }_{11})+\log \det(I_{N_2}+{\rho }_{22}H_{22}Q_{2p}H^{\dagger }_{22}+{\rho }_{12}H_{12}Q_{1p}H^{\dagger }_{12})\nonumber\\
&&+C_{21}-N_1-2{N_2}\nonumber\\
&\stackrel{(b)}{\ge}& \log \det (I_{N_1}+{\rho }_{21}H_{21}H^{\dagger }_{21}+{\rho }_{11}H_{11}H^{\dagger }_{11})+\log \det(I_{N_2}+{\rho }_{22}H_{22}Q_{2p}H^{\dagger }_{22})\nonumber\\
&&+C_{21}-N_1-2{N_2}\nonumber\\
&=& \log  \det (I_{N_2}+{\rho}_{22}H_{22}H^{\dagger}_{22}-
{\rho}_{22}{\rho}_{21}H_{22}H^{\dagger}_{21}
{(I_{N_1}+{\rho}_{21}H_{21}H^{\dagger}_{21})^{-1}}H_{21}H^{\dagger}_{22})+\nonumber\\
&&\log  \det (I_{N_1}+{\rho}_{11}H_{11}H^{\dagger}_{11}+
{\rho}_{21}H_{21}H^{\dagger}_{21})+C_{21}-N_1-2{N_2},
\end{eqnarray}
where $(a)$ follows from \eqref{U_B} and \eqref{xi}, and $(b)$ follows from the fact that $\log\det(.)$ is a monotonically increasing function on the cone of positive definite matrices.

Thus, we see that this $R_1+R_2$ bound is within $N_1+2N_2$ bits of the outer bound in \eqref{ro0eq5}.

\noindent {\bf \eqref{r1r2.2}:} For this bound in Lemma \ref{inner}, we have
\begin{eqnarray}
&&I(X_{2c},X_1;Y_1,\hat{Y_2})+I(X_2;Y_2|X_{1c},X_{2c})\nonumber\\
&=& h(Y_1,\hat{Y_2})-h(Y_1,\hat{Y_2}|X_{2c},X_1)
+h(Y_2|X_{1c},X_{2c})-h(Y_2|X_{1c},X_2)\nonumber\\
&\stackrel{(a)}{\ge}& h(Y_1,\hat{Y_2})-h(Y_1,\hat{Y_2}|X_{2c},X_1)
+h(Y_2|X_{1c},X_{2c})-N_2-N_2\log (2\pi e) \nonumber\\
&=& h(Y_1,\hat{Y_2})-h(Y_1,\hat{Y_2}|X_{2c},X_1)
+\log \det(I_{N_2}+{\rho }_{22}H_{22}Q_{2p}H^{\dagger }_{22}+{\rho }_{12}H_{12}Q_{1p}H^{\dagger }_{12})-N_2 \nonumber\\
&\stackrel{(b)}{\ge}& h(Y_1,\hat{Y_2})-h(Y_1,\hat{Y_2}|X_{2c},X_1)
+\log \det(I_{N_2}+{\rho }_{22}H_{22}Q_{2p}H^{\dagger }_{22})-N_2 \nonumber\\
&=& h(Y_1,\hat{Y_2})-h(\sqrt{{\rho }_{21}}H_{21}X_{2p}+Z_{1},\sqrt{{\rho }_{22}}H_{22}X_{2p}+Z_{2}+\hat{Z_{2}})
+\log \det(I_{N_2}+{\rho }_{22}H_{22}Q_{2p}H^{\dagger }_{22})\nonumber\\
&&-N_2 \nonumber\\
&\stackrel{(c)}{=}& h(Y_1,\hat{Y_2})- \log \det (\Delta+I_{N_2}+H_{22}Q_{2p}H_{22}^{\dagger})- \log \det (I_{N_1}+H_{12}Q_{2p}H_{12}^{\dagger}-\nonumber\\
&&H_{12}Q_{2p}H_{22}^{\dagger}
{(\Delta+I_{N_2}+H_{22}Q_{2p}H_{22}^{\dagger})^{-1}}H_{22}Q_{2p}H_{12}^{\dagger})
+\log \det(I_{N_2}+{\rho }_{22}H_{22}Q_{2p}H^{\dagger }_{22})\nonumber\\
&&-N_2-(N_1+N_2)\log (2\pi e) \nonumber
\end{eqnarray}

\begin{eqnarray}
&\stackrel{(d)}{=}& h(Y_1,\hat{Y_2})- \log \det (I_{N_2}+H_{22}Q_{2p}H_{22}^{\dagger})- \log \det (I_{N_1}+H_{12}Q_{2p}H_{12}^{\dagger}\nonumber\\
&&-H_{12}Q_{2p}H_{22}^{\dagger}
{(\Delta+I_{N_2}+H_{22}Q_{2p}H_{22}^{\dagger})^{-1}}H_{22}Q_{2p}H_{12}^{\dagger})
+\log \det(I_{N_2}+{\rho }_{22}H_{22}Q_{2p}H^{\dagger }_{22})\nonumber\\
&&-2N_2-(N_1+N_2)\log (2\pi e) \nonumber\\
&{=}& h(Y_1,\hat{Y_2})
- \log \det (I_{N_1}+H_{12}Q_{2p}H_{12}^{\dagger}-H_{12}Q_{2p}H_{22}^{\dagger}
{(\Delta+I_{N_2}+H_{22}Q_{2p}H_{22}^{\dagger})^{-1}}H_{22}Q_{2p}H_{12}^{\dagger})\nonumber\\
&&-2N_2-(N_1+N_2)\log (2\pi e) \nonumber\\
&\stackrel{(e)}{\ge}& h(Y_1,\hat{Y_2})
- \log \det (I_{N_1}+H_{12}Q_{2p}H_{12}^{\dagger})-2N_2-(N_1+N_2)\log (2\pi e) \nonumber\\
&\stackrel{(f)}{\ge}& h(Y_1,\hat{Y_2})-N_1-2N_2-(N_1+N_2)\log (2\pi e) \nonumber\\
&=&\log\det\left[ \begin{array}{cc}
I_{N_1}+{\rho }_{11}H_{11}H^{\dagger }_{11}+{\rho }_{21}H_{21}H^{\dagger }_{21} & \sqrt{{\rho }_{11}{\rho }_{12}}H_{11}H^{\dagger }_{12}+\sqrt{{\rho }_{21}{\rho }_{22}}H_{21}H^{\dagger }_{22} \\
\sqrt{{\rho }_{11}{\rho }_{12}}H_{12}H^{\dagger }_{11}+\sqrt{{\rho }_{21}{\rho }_{22}}H_{22}H^{\dagger }_{21} & \Delta+I_{N_2}+{\rho }_{22}H_{22}H^{\dagger }_{22}+{\rho }_{12}H_{12}H^{\dagger }_{12} \end{array}
\right]\nonumber\\
&&-N_1-2N_2-(N_1+N_2)\log (2\pi e) \nonumber\\
&=& \log \det (\Delta+I_{N_2}+{\rho }_{22}H_{22}H^{\dagger }_{22}+{\rho }_{12}H_{12}H^{\dagger }_{12})+ \log \det (I_{N_1}+{\rho }_{11}H_{11}H^{\dagger }_{11}+{\rho }_{21}H_{21}H^{\dagger }_{21}\nonumber\\
&&-\sqrt{{\rho }_{11}{\rho }_{12}}H_{11}H^{\dagger }_{12}+\sqrt{{\rho }_{21}{\rho }_{22}}H_{21}H^{\dagger }_{22}{(\Delta+I_{N_2}+{\rho }_{22}H_{22}H^{\dagger }_{22}+{\rho }_{12}H_{12}H^{\dagger }_{12})^{-1}}\nonumber\\
&&\sqrt{{\rho }_{11}{\rho }_{12}}H_{12}H^{\dagger }_{11}+\sqrt{{\rho }_{21}{\rho }_{22}}H_{22}H^{\dagger }_{21})-N_1-2N_2 \nonumber\\
&\ge& \log \det (I_{N_2}+{\rho }_{22}H_{22}H^{\dagger }_{22}+{\rho }_{12}H_{12}H^{\dagger }_{12})+ \log \det (I_{N_1}+{\rho }_{11}H_{11}H^{\dagger }_{11}+{\rho }_{21}H_{21}H^{\dagger }_{21}\nonumber\\
&&-\sqrt{{\rho }_{11}{\rho }_{12}}H_{11}H^{\dagger }_{12}+\sqrt{{\rho }_{21}{\rho }_{22}}H_{21}H^{\dagger }_{22}{(I_{N_2}+{\rho }_{22}H_{22}H^{\dagger }_{22}+{\rho }_{12}H_{12}H^{\dagger }_{12})^{-1}}\sqrt{{\rho }_{11}{\rho }_{12}}H_{12}H^{\dagger }_{11}\nonumber\\
&&+\sqrt{{\rho }_{21}{\rho }_{22}}H_{22}H^{\dagger }_{21})-N_1-2N_2
\end{eqnarray}
\begin{eqnarray}
&=&\log\det\left[ \begin{array}{cc}
I_{N_1}+{\rho }_{11}H_{11}H^{\dagger }_{11}+{\rho }_{21}H_{21}H^{\dagger }_{21} & \sqrt{{\rho }_{11}{\rho }_{12}}H_{11}H^{\dagger }_{12}+\sqrt{{\rho }_{21}{\rho }_{22}}H_{21}H^{\dagger }_{22} \\
\sqrt{{\rho }_{11}{\rho }_{12}}H_{12}H^{\dagger }_{11}+\sqrt{{\rho }_{21}{\rho }_{22}}H_{22}H^{\dagger }_{21} & I_{N_2}+{\rho }_{22}H_{22}H^{\dagger }_{22}+{\rho }_{12}H_{12}H^{\dagger }_{12} \end{array}
\right]\nonumber\\
&&-N_1-2N_2 \nonumber\\
&=& \log\det\left(I_{N_1+N_2}+
\left[\begin{array}{c}
\sqrt{\rho_{11}}H_{11} \\
\sqrt{\rho_{12}}H_{12} \end{array}\right]
[\sqrt{\rho_{11}}H_{11}^{\dagger}\ \sqrt{\rho_{12}}H_{12}^{\dagger}]+
\left[\begin{array}{c}
\sqrt{\rho_{21}}H_{21} \\
\sqrt{\rho_{22}}H_{22} \end{array}\right]
[\sqrt{\rho_{21}}H_{21}^{\dagger}\ \sqrt{\rho_{22}}H_{22}^{\dagger}]\right)\nonumber\\
&&-N_1-2N_2 \nonumber\\
&=& h(Y_1,Y_2)-N_1-2N_2 \nonumber\\
&{=}&\log\det\left(I_{N_1+N_2}+
\left[\begin{array}{c}
\sqrt{\rho_{11}}H_{11} \\
\sqrt{\rho_{12}}H_{12} \end{array}\right]
[\sqrt{\rho_{11}}H_{11}^{\dagger}\ \sqrt{\rho_{12}}H_{12}^{\dagger}]+
\left[\begin{array}{c}
\sqrt{\rho_{21}}H_{21} \\
\sqrt{\rho_{22}}H_{22} \end{array}\right]
[\sqrt{\rho_{21}}H_{21}^{\dagger}\ \sqrt{\rho_{22}}H_{22}^{\dagger}]\right)\nonumber\\
&&-N_1-2N_2,
\end{eqnarray}
where $(a)$ and $(f)$ follow from \eqref{U_B}, and $(b)$ and $(e)$ follow from the fact that $\log\det(.)$ is a monotonically increasing function on the cone of positive definite matrices, $(c)$ follows from the fact that
\begin{eqnarray}\label{piip}
&&h(\sqrt{{\rho }_{21}}H_{21}X_{2p}+Z_{1},\sqrt{{\rho }_{22}}H_{22}X_{2p}+Z_{2}+\hat{Z_{2}})\nonumber\\
&=&\log \det\left[ \begin{array}{cc}
I_{N_1}+{\rho }_{21}H_{21}Q_{2p}H^{\dagger }_{21} & \sqrt{{\rho }_{21}{\rho }_{22}}H_{21}Q_{2p}H^{\dagger }_{22} \\
\sqrt{{\rho }_{21}{\rho }_{22}}H_{22}Q_{2p}H^{\dagger }_{21} & \Delta+I_{N_2}+{\rho }_{22}H_{22}Q_{2p}H^{\dagger }_{22} \end{array}
\right]+(N_1+N_2)\log (2\pi e)\nonumber\\
&=& \log \det (\Delta+I_{N_2}+H_{22}Q_{2p}H_{22}^{\dagger})+ \log \det (I_{N_1}+H_{12}Q_{2p}H_{12}^{\dagger}-\nonumber\\
&&H_{12}Q_{2p}H_{22}^{\dagger}
{(\Delta+I_{N_2}+H_{22}Q_{2p}H_{22}^{\dagger})^{-1}}H_{22}Q_{2p}H_{12}^{\dagger})+(N_1+N_2)\log (2\pi e),
\end{eqnarray}

and $(d)$ follows from the fact that $\Delta=I_{N_2}+H_{22}Q_{2p}H_{22}^{\dagger}$ and hence:
\begin{eqnarray}
&&\log \det (\Delta+I_{N_2}+H_{22}Q_{2p}H_{22}^{\dagger})\nonumber\\
&=& \log \det 2(I_{N_2}+H_{22}Q_{2p}H_{22}^{\dagger})\nonumber\\
&=& \log \det (I_{N_2}+H_{22}Q_{2p}H_{22}^{\dagger})+N_2.
\end{eqnarray}

Thus, we see that this $R_1+R_2$ bound is within $N_1+2N_2$ bits of the outer bound in \eqref{ro0eq6}.

\noindent {\bf \eqref{r1r2.3}:} For this bound in Lemma \ref{inner}, we have

\begin{eqnarray}
&&I(X_{2c},X_1;Y_1|X_{1c})+I(X_{1c},X_2;Y_2|X_{2c})+C_{12}+{(C_{21}-\xi)}^{+}\nonumber\\
&=& h(Y_1|X_{1c})-h(Y_1|X_{1c},X_{2c},X_1)+h(Y_2|X_{2c})-h(Y_2|X_{2c},X_{1c},X_2)
+C_{12}+{(C_{21}-\xi)}^{+}\nonumber\\
&\stackrel{(a)}{\ge}& h(\sqrt{{\rho }_{11}}H_{11}X_{1p}+\sqrt{{\rho }_{21}}H_{21}X_{2}+Z_{1})
+h(\sqrt{{\rho }_{12}}H_{12}X_{1}+\sqrt{{\rho }_{22}}H_{22}X_{2p}+Z_{2})\nonumber\\
&&+C_{12}+C_{21}-N_1-2N_2-(N_1+N_2)\log (2\pi e)\nonumber\\
&=& \log \det(I_{N_1}+{\rho }_{11}H_{11}Q_{1p}H_{11}^{\dagger}+{\rho }_{21}H_{21}H_{21}^{\dagger})
+\log \det(I_{N_2}+{\rho }_{22}H_{22}Q_{2p}H_{22}^{\dagger}+{\rho }_{12}H_{12}H_{12}^{\dagger})\nonumber\\
&&+C_{12}+C_{21}-N_1-2N_2\nonumber\\
&=&\log  \det (I_{N_1}+{\rho}_{11}H_{11}H^{\dagger}_{11}+
{\rho}_{21}H_{21}H^{\dagger}_{21}-
{\rho}_{11}{\rho}_{12}H_{11}H^{\dagger}_{12}
{(I_{N_2}+{\rho}_{12}H_{12}H^{\dagger}_{12})^{-1}}H_{12}H^{\dagger}_{11})+\nonumber\\
&&\log  \det (I_{N_2}+{\rho}_{22}H_{22}H^{\dagger}_{22}+
{\rho}_{12}H_{12}H^{\dagger}_{12}-
{\rho}_{22}{\rho}_{21}H_{22}H^{\dagger}_{21}
{(I_{N_1}+{\rho}_{21}H_{21}H^{\dagger}_{21})^{-1}}H_{21}H^{\dagger}_{22})+\nonumber\\
&&C_{12}+C_{21}-N_1-2N_2.
\end{eqnarray}
where $(a)$ follows from \eqref{U_B} and \eqref{xi}.

Thus, we see that this $R_1+R_2$ bound is within $N_1+2N_2$ bits of the outer bound in \eqref{ro0eq3}.

\noindent {\bf \eqref{r1r2.4}:} For this bound in Lemma \ref{inner}, we have

\begin{eqnarray}
&&I(X_{2c},X_1;Y_1,\hat{Y_2}|X_{1c})+I(X_{1c},X_2;Y_2|X_{2c})+C_{12}\nonumber\\
&=& I(X_{2c};Y_1,\hat{Y_2}|X_{1c})+I(X_1;Y_1,\hat{Y_2}|X_{1c},X_{2c})
+I(X_{1c},X_2;Y_2|X_{2c})+C_{12}\nonumber\\
&\ge& I(X_{2c};\hat{Y_2}|X_{1c})+I(X_1;Y_1|X_{1c},X_{2c})
+I(X_{1c},X_2;Y_2|X_{2c})+C_{12}\nonumber\\
&\stackrel{(a)}{\ge}& I(X_{2c};{Y_2}|X_{1c})-N_2+I(X_1;Y_1|X_{1c},X_{2c})
+I(X_{1c},X_2;Y_2|X_{2c})+C_{12}\nonumber\\
&\stackrel{(b)}{\ge}& I(X_1;Y_1|X_{1c},X_{2c})
+I(X_{1c},X_2;Y_2)+C_{12}-N_2\nonumber\\
&=& h(Y_1|X_{1c},X_{2c})-h(Y_1|X_1,X_{1c},X_{2c})
+h(Y_2)-h(Y_2|X_{1c},X_2)+C_{12}-N_2\nonumber\\
&\stackrel{(c)}{\ge}& h(Y_1|X_{1c},X_{2c})+h(Y_2)
+C_{12}-N_1-2N_2-(N_1+N_2)\log (2\pi e)\nonumber\\
&=& \log  \det (I_{N_1}+{\rho}_{11}H_{11}Q_{1p}H^{\dagger}_{11}+{\rho}_{21}H_{21}Q_{2p}H^{\dagger}_{21})+\nonumber\\
&&\log  \det (I_{N_2}+{\rho}_{22}H_{22}H^{\dagger}_{22}+
{\rho}_{12}H_{12}H^{\dagger}_{12})+C_{12}-N_1-2N_2\nonumber\\
&\stackrel{(d)}{\ge}& \log  \det (I_{N_1}+{\rho}_{11}H_{11}Q_{1p}H^{\dagger}_{11})+\nonumber\\
&&\log  \det (I_{N_2}+{\rho}_{22}H_{22}H^{\dagger}_{22}+
{\rho}_{12}H_{12}H^{\dagger}_{12})+C_{12}-N_1-2N_2\nonumber\\
&=& \log  \det (I_{N_1}+{\rho}_{11}H_{11}H^{\dagger}_{11}-
{\rho}_{11}{\rho}_{12}H_{11}H^{\dagger}_{12}
{(I_{N_2}+{\rho}_{12}H_{12}H^{\dagger}_{12})^{-1}}H_{12}H^{\dagger}_{11})+\nonumber\\
&&\log  \det (I_{N_2}+{\rho}_{22}H_{22}H^{\dagger}_{22}+
{\rho}_{12}H_{12}H^{\dagger}_{12})+C_{12}-N_1-2N_2,
\end{eqnarray}
where $(a)$ follows from
\begin{eqnarray}
I(X_{2c};\hat{Y_2}|X_{1c}) \ge I(X_{2c};{Y_2}|X_{1c})-N_2,
\end{eqnarray}
which is true since
\begin{eqnarray}
&&I(X_{2c};\hat{Y_2}|X_{1c})-I(X_{2c};{Y_2}|X_{1c})+N_2\nonumber\\
&=& h(\hat{Y_2}|X_{1c})-h(\hat{Y_2}|X_{1c},X_{2c})-h({Y_2}|X_{1c})+h({Y_2}|X_{1c},X_{2c})+N_2\nonumber\\
&=& \log \det (\Delta+I_{N_2}+{\rho}_{22}H_{22}H^{\dagger}_{22}+
{\rho}_{12}Q_{1p}H_{12}H^{\dagger}_{12})\nonumber\\
&&-\log \det (\Delta+I_{N_2}+{\rho}_{22}H_{22}Q_{2p}H^{\dagger}_{22}+
{\rho}_{12}H_{12}Q_{1p}H^{\dagger}_{12})\nonumber\\
&&-\log \det (I_{N_2}+{\rho}_{22}H_{22}H^{\dagger}_{22}+
{\rho}_{12}H_{12}Q_{1p}H^{\dagger}_{12})\nonumber\\
&&+\log \det (I_{N_2}+{\rho}_{22}H_{22}Q_{2p}H^{\dagger}_{22}+
{\rho}_{12}H_{12}Q_{1p}H^{\dagger}_{12})+N_2\nonumber
\end{eqnarray}

\begin{eqnarray}
&=& \log \det (\Delta+I_{N_2}+{\rho}_{22}H_{22}H^{\dagger}_{22}+
{\rho}_{12}Q_{1p}H_{12}H^{\dagger}_{12})\nonumber\\
&&-\log \det (I_{N_2}+{\rho}_{22}H_{22}H^{\dagger}_{22}+
{\rho}_{12}H_{12}Q_{1p}H^{\dagger}_{12})\nonumber\\
&&-\log \det (2\Delta+{\rho}_{12}H_{12}Q_{1p}H^{\dagger}_{12})\nonumber\\
&&+\log \det (\Delta+{\rho}_{12}H_{12}Q_{1p}H^{\dagger}_{12})+N_2\nonumber\\
&\ge& 0,
\end{eqnarray}
$(b)$ follows from the fact that
\begin{eqnarray}
&&I(X_{2c};{Y_2}|X_{1c})+I(X_{1c},X_2;Y_2|X_{2c})\nonumber\\
&=& I(X_{2c};{Y_2},X_{1c})+I(X_{1c},X_2;Y_2|X_{2c})\nonumber\\
&\ge& I(X_{2c};{Y_2})+I(X_{1c},X_2;Y_2|X_{2c})\nonumber\\
&=& I(X_{1c},X_2,X_{2c};{Y_2})+I(X_{1c},X_2;Y_2),
\end{eqnarray}
$(c)$ follows from \eqref{U_B} and $(d)$ follows from the fact that $\log\det(.)$ is a monotonically increasing function on the cone of positive definite matrices.

Thus, we see that this $R_1+R_2$ bound is within $N_1+2N_2$ bits of the outer bound in \eqref{ro0eq4}.

\noindent {\bf \eqref{r1r2.5}:} For this bound in Lemma \ref{inner}, similar to the last term we have
\begin{eqnarray}
&&I(X_1;Y_1|X_{1c},X_{2c})+I(X_{1c},X_2;Y_2)+C_{12}\nonumber\\
&\ge& \log  \det (I_{N_1}+{\rho}_{11}H_{11}H^{\dagger}_{11}-
{\rho}_{11}{\rho}_{12}H_{11}H^{\dagger}_{12}
{(I_{N_2}+{\rho}_{12}H_{12}H^{\dagger}_{12})^{-1}}H_{12}H^{\dagger}_{11})+\nonumber\\
&&\log  \det (I_{N_2}+{\rho}_{22}H_{22}H^{\dagger}_{22}+
{\rho}_{12}H_{12}H^{\dagger}_{12})+C_{12}-N_1-N_2,
\end{eqnarray}
which results from the proof of the last bound.

Thus, we see that this $R_1+R_2$ bound is within $N_1+N_2$ bits of the outer bound in \eqref{ro0eq4}.

\noindent {\bf \eqref{r1r2.6}:} For this bound in Lemma \ref{inner} we have
\begin{eqnarray}
&&I(X_1;Y_1|X_{1c},X_{2c})+I(X_{2c};Y_1|X_1)+I(X_{1c},X_2;Y_2|X_{2c})+C_{12}\nonumber\\
&=& h(Y_1|X_{1c},X_{2c})-h(Y_1|X_{1c},X_{2c},X_1)+h(Y_1|X_1)-h(Y_1|X_1 ,X_{2c})+h(Y_2|X_{2c})\nonumber\\
&&-h(Y_2|X_{1c},X_2,X_{2c})+C_{12}\nonumber\\
&\stackrel{(a)}{\ge}&
h(Y_1|X_{1c},X_{2c})+h(Y_1|X_1)+h(Y_2|X_{2c})+C_{12}-2N_1-N_2-(2N_1+N_2)\log (2\pi e)\nonumber\\
&=&
h(Y_1|X_{1c},X_{2c})+h(\sqrt{{\rho }_{21}}H_{21}X_{2}+Z_{1})+h(Y_2,X_{2c})-h(X_{2c})+C_{12}-2N_1-N_2-(2N_1+N_2)\log (2\pi e)\nonumber\\
&=&\log\det\left[\begin{array}{cc}
{\rho }_{22}H_{22}H^{\dagger }_{22}+{\rho }_{12}H_{12}H^{\dagger }_{12}
& \sqrt{{\rho }_{22}}H_{22}Q_{2c} \\
\sqrt{{\rho }_{22}}Q_{2c}H^{\dagger }_{22} &
Q_{2c} \end{array}\right]
-\log \det \left(Q_{2c}\right)
+h(Y_1|X_{1c},X_{2c})+\nonumber\\
&&h(\sqrt{{\rho }_{21}}H_{21}X_{2}+Z_{1})+C_{12}-2N_1-N_2-(2N_1)\log (2\pi e)\nonumber\\
&=&
\log\det\left(I_{N_{2}}+{\rho }_{22}H_{22}H^{\dagger }_{22}+{\rho }_{12}H_{12}H^{\dagger }_{12}\right)+\nonumber\\
&&\log\det\left(Q_{2c}-
{\rho }_{22}Q_{2c}H^{\dagger }_{22}
(I_{N_{2}}+{\rho }_{22}H_{22}H^{\dagger }_{22}+{\rho }_{12}H_{12}H^{\dagger }_{12})^{-1}
H_{22}Q_{2c}
\right)
-\log \det \left(Q_{2c}\right)+\nonumber\\
&&h(Y_1|X_{1c},X_{2c})+h(\sqrt{{\rho }_{21}}H_{21}X_{2}+Z_{1})+C_{12}-2N_1-N_2-(2N_1)\log (2\pi e)\nonumber\\
&=&
\log\det\left(I_{N_{2}}+{\rho }_{22}H_{22}H^{\dagger }_{22}+{\rho }_{12}H_{12}H^{\dagger }_{12}\right)+
\log \det (I_{N_1}+{\rho }_{11}H_{11}Q_{1p}H^{\dagger }_{11}+{\rho }_{21}H_{21}Q_{2p}H^{\dagger }_{21})+\nonumber\\
&&\log\det\left(Q_{2c}-
{\rho }_{22}Q_{2c}H^{\dagger }_{22}
(I_{N_{2}}+{\rho }_{22}H_{22}H^{\dagger }_{22}+{\rho }_{12}H_{12}H^{\dagger }_{12})^{-1}
H_{22}Q_{2c}
\right)
-\log \det \left(Q_{2c}\right)+\nonumber\\
&&+\log\det\left(I_{N_{1}}+{\rho }_{21}H_{21}H^{\dagger }_{21}\right)
+C_{12}-2N_1-N_2\nonumber\\
&\ge&
\log\det\left(I_{N_{2}}+{\rho }_{22}H_{22}H^{\dagger }_{22}+{\rho }_{12}H_{12}H^{\dagger }_{12}\right)+
\log \det (I_{N_1}+{\rho }_{11}H_{11}Q_{1p}H^{\dagger }_{11})+\nonumber\\
&&\log\det\left(Q_{2c}-
{\rho }_{22}Q_{2c}H^{\dagger }_{22}
(I_{N_{2}}+{\rho }_{22}H_{22}H^{\dagger }_{22})^{-1}
H_{22}Q_{2c}
\right)
-\log \det \left(Q_{2c}\right)+\nonumber\\
&&+\log\det\left(I_{N_{1}}+{\rho }_{21}H_{21}H^{\dagger }_{21}\right)
+C_{12}-2N_1-N_2\nonumber\\
&\ge&
\log\det\left(I_{N_{2}}+{\rho }_{22}H_{22}H^{\dagger }_{22}+{\rho }_{12}H_{12}H^{\dagger }_{12}\right)+
\log \det (I_{N_1}+{\rho }_{11}H_{11}Q_{1p}H^{\dagger }_{11})+\nonumber\\
&&\log\det\left(Q_{2c}-
Q^{2}_{2c}\right)
-\log \det \left(Q_{2c}\right)+\log\det\left(I_{N_{1}}+{\rho }_{21}H_{21}H^{\dagger }_{21}\right)
+C_{12}-2N_1-N_2\nonumber\\
&=&
\log\det\left(I_{N_{2}}+{\rho }_{22}H_{22}H^{\dagger }_{22}+{\rho }_{12}H_{12}H^{\dagger }_{12}\right)+
\log \det (I_{N_1}+{\rho }_{11}H_{11}Q_{1p}H^{\dagger }_{11})+\nonumber\\
&&+\log \det \left(Q_{2p}\right)+\log\det\left(I_{N_{1}}+{\rho }_{21}H_{21}H^{\dagger }_{21}\right)
+C_{12}-2N_1-N_2\nonumber\\
&=&
\log\det\left(I_{N_{2}}+{\rho }_{22}H_{22}H^{\dagger }_{22}+{\rho }_{12}H_{12}H^{\dagger }_{12}\right)+
\log \det (I_{N_1}+{\rho }_{11}H_{11}Q_{1p}H^{\dagger }_{11})+\nonumber\\
&&+\log \det \left(Q_{2p}\right)+\log\det\left(I_{M_{2}}+{\rho }_{21}H^{\dagger }_{21}H_{21}\right)
+C_{12}-2N_1-N_2\nonumber\\
&=&
\log\det\left(I_{N_{2}}+{\rho }_{22}H_{22}H^{\dagger }_{22}+{\rho }_{12}H_{12}H^{\dagger }_{12}\right)+
\log \det (I_{N_1}+{\rho }_{11}H_{11}Q_{1p}H^{\dagger }_{11})+\nonumber\\
&&+\log \det \left(I_{M_{2}}-{\rho }_{21}H^{\dagger }_{21}
(I_{N_{1}}+{\rho }_{21}H_{21}H^{\dagger }_{21})^{-1}
H_{21}\right)+\log\det\left(I_{M_{2}}+{\rho }_{21}H^{\dagger }_{21}H_{21}\right)
+C_{12}-2N_1-N_2\nonumber
\end{eqnarray}
\begin{eqnarray}\label{vvv}
&=&
\log\det\left(I_{N_{2}}+{\rho }_{22}H_{22}H^{\dagger }_{22}+{\rho }_{12}H_{12}H^{\dagger }_{12}\right)+
\log \det (I_{N_1}+{\rho }_{11}H_{11}Q_{1p}H^{\dagger }_{11})+\nonumber\\
&&+\log \det \left(I_{M_{2}}-{\rho }_{21}H^{\dagger }_{21}
(I_{N_{1}}+{\rho }_{21}H_{21}H^{\dagger }_{21})^{-1}
H_{21}+{\rho }_{21}H^{\dagger }_{21}H_{21}-\right.\nonumber\\
&&\left.{\rho }_{21}H^{\dagger }_{21}H_{21}
{\rho }_{21}H^{\dagger }_{21}
(I_{N_{1}}+{\rho }_{21}H_{21}H^{\dagger }_{21})^{-1}
H_{21}\right)+
C_{12}-2N_1-N_2\nonumber\\
&=&
\log\det\left(I_{N_{2}}+{\rho }_{22}H_{22}H^{\dagger }_{22}+{\rho }_{12}H_{12}H^{\dagger }_{12}\right)+
\log \det (I_{N_1}+{\rho }_{11}H_{11}Q_{1p}H^{\dagger }_{11})+\nonumber\\
&&+\log \det \left(I_{M_{2}}
+{\rho }_{21}H^{\dagger }_{21}
\left(
I_{M_{2}}
-H_{21}
{\rho }_{21}H^{\dagger }_{21}
(I_{N_{1}}+{\rho }_{21}H_{21}H^{\dagger }_{21})^{-1}
-(I_{N_{1}}+{\rho }_{21}H_{21}H^{\dagger }_{21})^{-1}
\right)H_{21}
\right)\nonumber\\
&&+C_{12}-2N_1-N_2\nonumber\\
&\stackrel{(c)}{=}& \log \det (I_{N_1}+{\rho }_{11}H_{11}Q_{1p}H^{\dagger }_{11})
+\log \det (I_{N_2}+{\rho }_{22}H_{22}H^{\dagger }_{22}+{\rho }_{12}H_{12}H^{\dagger }_{12})+C_{12}-2N_1-N_2\nonumber\\
&=& \log  \det (I_{N_1}+{\rho}_{11}H_{11}H^{\dagger}_{11}-
{\rho}_{11}{\rho}_{12}H_{11}H^{\dagger}_{12}
{(I_{N_2}+{\rho}_{12}H_{12}H^{\dagger}_{12})^{-1}}H_{12}H^{\dagger}_{11})+\nonumber\\
&&\log  \det (I_{N_2}+{\rho}_{22}H_{22}H^{\dagger}_{22}+
{\rho}_{12}H_{12}H^{\dagger}_{12})+C_{12}-2N_1-N_2,
\end{eqnarray}
where $(a)$ follows from \eqref{U_B}, $(b)$ follows from Lemma \ref{lem_block} and $(c)$ follows from the fact that $\log\det(.)$ is a monotonically increasing function on the cone of positive definite matrices.

Thus, we see that this $R_1+R_2$ bound is within $2N_1+N_2$ bits of the outer bound in \eqref{ro0eq4}.

\noindent {\bf \eqref{2r1r2.1}:} For this bound in Lemma \ref{inner} we have
\begin{eqnarray}\label{longg}
&&I(X_1,X_{2c};Y_1)+I(X_1;Y_1|X_{1c},X_{2c})+
I(X_{1c},X_2;Y_2|X_{2c})+C_{12}+{(C_{21}-\xi)}^{+}\nonumber\\
&=& h(Y_1)-h(Y_1|X_1,X_{2c})+h(Y_1|X_{1c},X_{2c})-h(Y_1|X_{1c},X_{2c},X_1)+
h(Y_2|X_{2c})\nonumber\\
&&-h(Y_2|X_{1c},X_2,X_{2c})+C_{12}+{(C_{21}-\xi)}^{+}\nonumber\\
&\stackrel{(a)}{\ge}& h(Y_1)+h(Y_1|X_{1c},X_{2c})+
h(Y_2|X_{2c})+C_{12}+C_{21}-2{N_1}-2{N_2}-(2N_1+N_2)\log (2\pi e)\nonumber\\
&=& \log  \det (I_{N_1}+{\rho}_{11}H_{11}H^{\dagger}_{11}+
{\rho}_{21}H_{21}H^{\dagger}_{21})
+\log  \det (I_{N_1}+{\rho}_{11}H_{11}Q_{1p}H^{\dagger}_{11}+
{\rho}_{21}H_{21}Q_{2p}H^{\dagger}_{21})\nonumber\\
&&+\log  \det (I_{N_2}+{\rho}_{22}H_{22}Q_{2p}H^{\dagger}_{22}+
{\rho}_{12}H_{12}H^{\dagger}_{12})
+C_{12}+C_{21}-2{N_1}-2{N_2}\nonumber\\
&\stackrel{(b)}{\ge}& \log  \det (I_{N_1}+{\rho}_{11}H_{11}H^{\dagger}_{11}+
{\rho}_{21}H_{21}H^{\dagger}_{21})
+\log  \det (I_{N_1}+{\rho}_{11}H_{11}Q_{1p}H^{\dagger}_{11})\nonumber\\
&&+\log  \det (I_{N_2}+{\rho}_{22}H_{22}Q_{2p}H^{\dagger}_{22}+
{\rho}_{12}H_{12}H^{\dagger}_{12})
+C_{12}+C_{21}-2{N_1}-2{N_2}\nonumber\\
&=& \log  \det (I_{N_1}+{\rho}_{11}H_{11}H^{\dagger}_{11}-
{\rho}_{11}{\rho}_{12}H_{11}H^{\dagger}_{12}
{(I_{N_2}+{\rho}_{12}H_{12}H^{\dagger}_{12})^{-1}}H_{12}H^{\dagger}_{11})+\nonumber\\
&&\log  \det (I_{N_2}+{\rho}_{22}H_{22}H^{\dagger}_{22}+
{\rho}_{12}H_{12}H^{\dagger}_{12}-
{\rho}_{22}{\rho}_{21}H_{22}H^{\dagger}_{21}
{(I_{N_1}+{\rho}_{21}H_{21}H^{\dagger}_{21})^{-1}}H_{21}H^{\dagger}_{22})+\nonumber\\
&&\log  \det (I_{N_1}+{\rho}_{11}H_{11}H^{\dagger}_{11}+
{\rho}_{21}H_{21}H^{\dagger}_{21})+C_{12}+C_{21}-2{N_1}-2{N_2},
\end{eqnarray}
where $(a)$ follows from \eqref{U_B} and \eqref{xi}, and $(b)$ follows from the fact that $\log\det(.)$ is a monotonically increasing function on the cone of positive definite matrices.

Thus, we see that this $2R_1+R_2$ bound is within $2N_1+2N_2$ bits of the outer bound in \eqref{ro0eq7}.

\noindent {\bf \eqref{r12r2.1}:} For this bound in Lemma \ref{inner} we have
\begin{eqnarray}
&&I(X_1,X_{2c};Y_1|X_{1c})+I(X_{1c},X_{2};Y_2)
+I(X_2;Y_2|X_{1c},X_{2c})+C_{12}+{(C_{21}-\xi)}^{+}\nonumber\\
&=& h(Y_1|X_{1c})-h(Y_1|X_{1c},X_1,X_{2c})+h(Y_2)-h(Y_2|X_{1c},X_{2})
+h(Y_2|X_{1c},X_{2c})\nonumber\\
&&-h(Y_2|X_{1c},X_{2c},X_2)+C_{12}+{(C_{21}-\xi)}^{+}\nonumber\\
&\stackrel{(a)}{\ge}& h(Y_1|X_{1c})+h(Y_2)+h(Y_2|X_{1c},X_{2c})+C_{12}+C_{21}-2{N_1}-2{N_2}-(N_1+2N_2)\log (2\pi e)\nonumber\\
&=& \log \det (I_{N_1}+{\rho }_{11}H_{11}Q_{1p}H^{\dagger }_{11}+{\rho }_{21}H_{21}H^{\dagger }_{21})
+\log \det (I_{N_2}+{\rho }_{22}H_{22}H^{\dagger }_{22}+{\rho }_{12}H_{12}H^{\dagger }_{12})\nonumber\\
&&+\log \det (I_{N_2}+{\rho }_{22}H_{22}Q_{2p}H^{\dagger }_{22}+{\rho }_{12}H_{12}Q_{1p}H^{\dagger }_{12})+C_{12}+C_{21}-2{N_1}-2{N_2}\nonumber\\
&\stackrel{(b)}{\ge}&
\log \det (I_{N_1}+{\rho }_{11}H_{11}Q_{1p}H^{\dagger }_{11}+{\rho }_{21}H_{21}H^{\dagger }_{21})
+\log \det (I_{N_2}+{\rho }_{22}H_{22}H^{\dagger }_{22}+{\rho }_{12}H_{12}H^{\dagger }_{12})\nonumber\\
&&+\log \det (I_{N_2}+{\rho }_{22}H_{22}Q_{2p}H^{\dagger }_{22})+C_{12}+C_{21}-2{N_1}-2{N_2}\nonumber\\
&=& \log  \det (I_{N_2}+{\rho}_{22}H_{22}H^{\dagger}_{22}-
{\rho}_{22}{\rho}_{21}H_{22}H^{\dagger}_{21}
{(I_{N_1}+{\rho}_{21}H_{21}H^{\dagger}_{21})^{-1}}H_{21}H^{\dagger}_{22})+\nonumber\\
&&\log  \det (I_{N_1}+{\rho}_{11}H_{11}H^{\dagger}_{11}+
{\rho}_{21}H_{21}H^{\dagger}_{21}-
{\rho}_{11}{\rho}_{12}H_{11}H^{\dagger}_{12}
{(I_{N_2}+{\rho}_{12}H_{12}H^{\dagger}_{12})^{-1}}H_{12}H^{\dagger}_{11})+\nonumber\\
&&\log  \det (I_{N_2}+{\rho}_{22}H_{22}H^{\dagger}_{22}+
{\rho}_{12}H_{12}H^{\dagger}_{12})+C_{21}+C_{12}-2{N_1}-2{N_2},
\end{eqnarray}
where $(a)$ follows from \eqref{U_B} and \eqref{xi}, and $(b)$ follows from the fact that $\log\det(.)$ is a monotonically increasing function on the cone of positive definite matrices.

Thus, we see that this $R_1+2R_2$ bound is within $2N_1+2N_2$ bits of the outer bound in \eqref{ro0eq8}.

\noindent {\bf \eqref{r12r2.2}:} For this bound in Lemma \ref{inner} we have
\begin{eqnarray}
&& I(X_1,X_{2c};Y_1|X_{1c})+I(X_{2c};Y_1|X_{1})+I(X_{1c},X_{2};Y_2|X_{2c})
+I(X_2;Y_2|X_{1c},X_{2c})\nonumber\\
&&+C_{12}+{(C_{21}-\xi)}^{+}\nonumber\\
&=& h(Y_1|X_{1c})-h(Y_1|X_1,X_{2c},X_{1c})+h(Y_1|X_{1})-h(Y_1|X_{1},X_{2c})+h(Y_2|X_{2c})\nonumber\\
&&-h(Y_2|X_{1c},X_{2},X_{2c})
+h(Y_2|X_{1c},X_{2c})-h(Y_2|X_{1c},X_{2c},X_2)+C_{12}+{(C_{21}-\xi)}^{+}\nonumber\\
&\stackrel{(a)}{\ge}& h(Y_1|X_{1c})+h(Y_1|X_{1})+h(Y_2|X_{2c})
+h(Y_2|X_{1c},X_{2c})+C_{12}+C_{21}-2N_1-3N_2-2(N_1+N_2)\log (2\pi e)\nonumber\\
&=& \log  \det (I_{N_2}+{\rho}_{22}H_{22}Q_{2p}H^{\dagger}_{22}+{\rho}_{12}H_{12}Q_{1p}H^{\dagger}_{12})+\log  \det (I_{N_1}+{\rho}_{11}H_{11}Q_{1p}H^{\dagger}_{11}+
{\rho}_{21}H_{21}H^{\dagger}_{21})+\nonumber\\
&& \log \det (I_{N_1}+{\rho }_{21}H_{21}H^{\dagger }_{21})+\log \det (I_{N_2}+{\rho }_{22}H_{22}Q_{2p}H^{\dagger }_{22}+{\rho }_{12}H_{12}H^{\dagger }_{12})+\nonumber\\
&&C_{21}+C_{12}-2{N_1}-3{N_2}\nonumber
\end{eqnarray}

\begin{eqnarray}
&\stackrel{(b)}{\ge}& \log  \det (I_{N_2}+{\rho}_{22}H_{22}Q_{2p}H^{\dagger}_{22})+\log  \det (I_{N_1}+{\rho}_{11}H_{11}Q_{1p}H^{\dagger}_{11}+
{\rho}_{21}H_{21}H^{\dagger}_{21})+\nonumber\\
&& \log \det (I_{N_1}+{\rho }_{21}H_{21}H^{\dagger }_{21})+\log \det (I_{N_2}+{\rho }_{22}H_{22}Q_{2p}H^{\dagger }_{22}+{\rho }_{12}H_{12}H^{\dagger }_{12})+\nonumber\\
&&C_{21}+C_{12}-2{N_1}-3{N_2}\nonumber\\
&\stackrel{(c)}{\ge}& \log  \det (I_{N_2}+{\rho}_{22}H_{22}Q_{2p}H^{\dagger}_{22})+\log  \det (I_{N_1}+{\rho}_{11}H_{11}Q_{1p}H^{\dagger}_{11}+
{\rho}_{21}H_{21}H^{\dagger}_{21})+\nonumber\\
&&\log  \det (I_{N_2}+{\rho}_{22}H_{22}H^{\dagger}_{22}+
{\rho}_{12}H_{12}H^{\dagger}_{12})+C_{21}+C_{12}-2{N_1}-3{N_2}\nonumber\\
&=& \log  \det (I_{N_2}+{\rho}_{22}H_{22}H^{\dagger}_{22}-
{\rho}_{22}{\rho}_{21}H_{22}H^{\dagger}_{21}
{(I_{N_1}+{\rho}_{21}H_{21}H^{\dagger}_{21})^{-1}}H_{21}H^{\dagger}_{22})+\nonumber\\
&&\log  \det (I_{N_1}+{\rho}_{11}H_{11}H^{\dagger}_{11}+
{\rho}_{21}H_{21}H^{\dagger}_{21}-
{\rho}_{11}{\rho}_{12}H_{11}H^{\dagger}_{12}
{(I_{N_2}+{\rho}_{12}H_{12}H^{\dagger}_{12})^{-1}}H_{12}H^{\dagger}_{11})+\nonumber\\
&&\log  \det (I_{N_2}+{\rho}_{22}H_{22}H^{\dagger}_{22}+
{\rho}_{12}H_{12}H^{\dagger}_{12})+C_{21}+C_{12}-2{N_1}-3{N_2},
\end{eqnarray}
where $(a)$ follows from \eqref{U_B} and \eqref{xi}, and $(b)$ follows from the fact that $\log\det(.)$ is a monotonically increasing function on the cone of positive definite matrices and $(c)$ follows from
\begin{eqnarray}
&& \log \det (I_{N_1}+{\rho }_{21}H_{21}H^{\dagger }_{21})\nonumber\\
&=&h(Y_1|X_1)\nonumber\\
&\stackrel{(d)}{\ge}& h(Y_2)-h(Y_2|X_{2c})\nonumber\\
&=& \log  \det (I_{N_2}+{\rho}_{22}H_{22}H^{\dagger}_{22}+{\rho}_{12}H_{12}H^{\dagger}_{12})-\nonumber\\
&&\log \det (I_{N_2}+{\rho }_{22}H_{22}Q_{2p}H^{\dagger }_{22}+{\rho}_{12}H_{12}H^{\dagger }_{12}),
\end{eqnarray}
where $(d)$ follows from \eqref{vvv}.

Thus, we see that this $R_1+2R_2$ bound is within $2N_1+3N_2$ bits of the outer bound in \eqref{ro0eq8}.

\noindent {\bf \eqref{r12r2.3}:} For this bound in Lemma \ref{inner} we have
\begin{eqnarray}
&& I(X_1,X_{2c};Y_1,\hat {Y_2}|X_{1c})+I(X_{1c},X_2;Y_2)
+I(X_2;Y_2|X_{1c},X_{2c})+C_{12}\nonumber\\
&=& h(Y_1,\hat {Y_2}|X_{1c})-h(Y_1,\hat {Y_2}|X_1,X_{2c},X_{1c})+h(Y_2)-h(Y_2| X_{1c},X_2)
+h(Y_2|X_{1c},X_{2c})\nonumber\\
&&-h(Y_2|X_{1c},X_{2c},X_2)+C_{12}\nonumber\\
&\stackrel{(a)}{\ge}& h(Y_1,\hat {Y_2}|X_{1c})-h(Y_1,\hat {Y_2}|X_1,X_{2c},X_{1c})
+h(Y_2)
+h(Y_2|X_{1c},X_{2c})+C_{12}-2{N_2}-2N_2\log (2\pi e)\nonumber\\
&=& h(Y_1,\hat {Y_2}|X_{1c})-h(\sqrt{{\rho }_{21}}H_{21}X_{2p}+Z_{1},\sqrt{{\rho }_{22}}H_{22}X_{2p}+Z_{2}+\hat{Z_{2}})
+h(Y_2|X_{1c},X_{2c})+h(Y_2)\nonumber\\
&&+C_{12}-2{N_2}-2N_2\log (2\pi e)\nonumber
\end{eqnarray}

\begin{eqnarray}
&\stackrel{(b)}{=}&
h(Y_1,\hat {Y_2}|X_{1c})-
\log \det (\Delta+I_{N_2}+\rho_{22}H_{22}Q_{2p}H_{22}^{\dagger})- \log \det (I_{N_1}+\rho_{12}H_{12}Q_{2p}H_{12}^{\dagger}-\nonumber\\
&&\rho_{12}\rho_{22}H_{12}Q_{2p}H_{22}^{\dagger}
{(\Delta+I_{N_2}+\rho_{22}H_{22}Q_{2p}H_{22}^{\dagger})^{-1}}H_{22}Q_{2p}H_{12}^{\dagger})
+h(Y_2|X_{1c},X_{2c})+h(Y_2)\nonumber\\
&&+C_{12}-2{N_2}-(N_1+3N_2)\log (2\pi e)\nonumber\\
&\stackrel{(c)}{\ge}& h(Y_1,\hat {Y_2}|X_{1c})-
\log \det (\Delta+I_{N_2}+\rho_{22}H_{22}Q_{2p}H_{22}^{\dagger})- \log \det (I_{N_1}+\rho_{12}H_{12}Q_{2p}H_{12}^{\dagger})\nonumber\\
&&+h(Y_2|X_{1c},X_{2c})+h(Y_2)+C_{12}-2{N_2}-(N_1+3N_2)\log (2\pi e)\nonumber\\
&=& h(Y_1,\hat {Y_2}|X_{1c})-
\log \det (\Delta+I_{N_2}+\rho_{22}H_{22}Q_{2p}H_{22}^{\dagger})- \log \det (I_{N_1}+\rho_{12}H_{12}Q_{2p}H_{12}^{\dagger})\nonumber\\
&&+\log \det (I_{N_2}+\rho_{22}H_{22}Q_{2p}H_{22}^{\dagger}+\rho_{12}H_{12}Q_{1p}H_{12}^{\dagger})+h(Y_2)+C_{12}-2{N_2}-(N_1+2N_2)\log (2\pi e)\nonumber\\
&\stackrel{(d)}{=}& h(Y_1,\hat {Y_2}|X_{1c})-
\log \det 2(I_{N_2}+\rho_{22}H_{22}Q_{2p}H_{22}^{\dagger})- \log \det (I_{N_1}+\rho_{12}H_{12}Q_{2p}H_{12}^{\dagger})\nonumber\\
&&+\log \det (I_{N_2}+\rho_{22}H_{22}Q_{2p}H_{22}^{\dagger}+\rho_{12}H_{12}Q_{1p}H_{12}^{\dagger})
+h(Y_2)+C_{12}-2{N_2}-(N_1+2N_2)\log (2\pi e)\nonumber\\
&=& h(Y_1,\hat {Y_2}|X_{1c})-
\log \det (I_{N_2}+\rho_{22}H_{22}Q_{2p}H_{22}^{\dagger})- \log \det (I_{N_1}+\rho_{12}H_{12}Q_{2p}H_{12}^{\dagger})\nonumber\\
&&+\log \det (I_{N_2}+\rho_{22}H_{22}Q_{2p}H_{22}^{\dagger}+\rho_{12}H_{12}Q_{1p}H_{12}^{\dagger})
+h(Y_2)+C_{12}-3{N_2}-(N_1+2N_2)\log (2\pi e)\nonumber\\
&\ge& h(Y_1,\hat {Y_2}|X_{1c})- \log \det (I_{N_1}+\rho_{12}H_{12}Q_{2p}H_{12}^{\dagger})
+h(Y_2)+C_{12}-3{N_2}-(N_1+2N_2)\log (2\pi e)\nonumber\\
&\ge& h(Y_1,\hat {Y_2}|X_{1c})+h(Y_2)+C_{12}-N_1-3{N_2}-(N_1+2N_2)\log (2\pi e)\nonumber\\
&\stackrel{(e)}{\ge}& \log  \det \left(I_{N_1+N_2}+
\left[\begin{array}{c}
\sqrt{\rho_{11}}H_{11} \\
\sqrt{\rho_{12}}H_{12} \end{array}\right]
(I_{M_1}-H^{\dagger}_{12}{(I_{N_2}+{\rho}_{12}H_{12}H^{\dagger}_{12})^{-1}}H_{12})
[\sqrt{\rho_{11}}H_{11}^{\dagger}\ \sqrt{\rho_{12}}H_{12}^{\dagger}]\right.\nonumber\\
&&\left.+\left[\begin{array}{c}
\sqrt{\rho_{21}}H_{21} \\
\sqrt{\rho_{22}}H_{22} \end{array}\right]
[\sqrt{\rho_{21}}H_{21}^{\dagger}\ \sqrt{\rho_{22}}H_{22}^{\dagger}]\right)
+h(Y_2)+C_{12}-{N_1}-3{N_2}-N_2\log (2\pi e)\nonumber\\
&=& \log  \det \left(I_{N_1+N_2}+
\left[\begin{array}{c}
\sqrt{\rho_{11}}H_{11} \\
\sqrt{\rho_{12}}H_{12} \end{array}\right]
(I_{M_1}-H^{\dagger}_{12}{(I_{N_2}+{\rho}_{12}H_{12}H^{\dagger}_{12})^{-1}}H_{12})
[\sqrt{\rho_{11}}H_{11}^{\dagger}\ \sqrt{\rho_{12}}H_{12}^{\dagger}]\right.\nonumber\\
&&\left.+\left[\begin{array}{c}
\sqrt{\rho_{21}}H_{21} \\
\sqrt{\rho_{22}}H_{22} \end{array}\right]
[\sqrt{\rho_{21}}H_{21}^{\dagger}\ \sqrt{\rho_{22}}H_{22}^{\dagger}]\right)+\log  \det (I_{N_2}+{\rho}_{22}H_{22}H^{\dagger}_{22}+
{\rho}_{12}H_{12}H^{\dagger}_{12})+C_{12}\nonumber\\
&&-{N_1}-3{N_2},
\end{eqnarray}
where $(a)$ follows from \eqref{U_B}, $(b)$ is achieved similar to \eqref{piip}, and $(c)$ follows from the fact that $\log\det(.)$ is a monotonically increasing function on the cone of positive definite matrices, $(d)$ follows from \eqref{delt}, and $(e)$ is due to
\begin{eqnarray}
&&h(Y_1,\hat {Y_2}|X_{1c})\nonumber\\
&=& h(Y_1,\hat {Y_2},X_{1c})- h(X_{1c})\nonumber\\
&=& \log \det
\left[\begin{array}{ccc}
I_{N_1}+{\rho}_{11}H_{11}H^{\dagger}_{11}+{\rho}_{21}H_{21}H^{\dagger}_{21} & \sqrt{{\rho}_{12}{\rho}_{11}}H_{11}H^{\dagger}_{12}+
\sqrt{{\rho}_{22}{\rho}_{21}}H_{21}H^{\dagger}_{22} & \sqrt{{\rho}_{11}}H_{11}Q_{1c}\\
\sqrt{{\rho}_{12}{\rho}_{11}}H_{12}H^{\dagger}_{11}+
\sqrt{{\rho}_{22}{\rho}_{21}}H_{22}H^{\dagger}_{21} & \Delta+I_{N_2}+{\rho}_{22}H_{22}H^{\dagger}_{22}+
{\rho}_{12}H_{12}H^{\dagger}_{12} & \sqrt{{\rho}_{12}}H_{12}Q_{1c}\\
\sqrt{{\rho}_{11}}Q_{1c}H^{\dagger}_{11} &
\sqrt{{\rho}_{12}}Q_{1c}H^{\dagger}_{12} & Q_{1c}\end{array}
\right]\nonumber\\
&&- h(X_{1c})+(M_1+N_1+N_2)\log (2\pi e)\nonumber\\
&\stackrel{(f)}{=}& \log \det \left(
\left[\begin{array}{cc}
I_{N_1}+{\rho}_{11}H_{11}H^{\dagger}_{11}+{\rho}_{21}H_{21}H^{\dagger}_{21} & \sqrt{{\rho}_{12}{\rho}_{11}}H_{11}H^{\dagger}_{12}+\sqrt{{\rho}_{22}{\rho}_{21}}H_{21}H^{\dagger}_{22}\\
\sqrt{{\rho}_{12}{\rho}_{11}}H_{12}H^{\dagger}_{11}+\sqrt{{\rho}_{22}{\rho}_{21}}H_{22}H^{\dagger}_{21} & \Delta+I_{N_2}+{\rho}_{22}H_{22}H^{\dagger}_{22}+{\rho}_{12}H_{12}H^{\dagger}_{12}\end{array}
\right]-\right.\nonumber\\
&&\left.
\left[ \begin{array}{c}
H_{11} \\
H_{12} \end{array}
\right]
Q_{1c}({Q_{1c}}^{-1})Q_{1c}
[H^{\dagger}_{11} H^{\dagger}_{12}]
\right)+h(X_{1c})- h(X_{1c})+(N_1+N_2)\log (2\pi e)\nonumber\\
&=& \log \det \left(
\left[\begin{array}{cc}
I_{N_1}+{\rho}_{11}H_{11}H^{\dagger}_{11}+{\rho}_{21}H_{21}H^{\dagger}_{21} & \sqrt{{\rho}_{12}{\rho}_{11}}H_{11}H^{\dagger}_{12}+\sqrt{{\rho}_{22}{\rho}_{21}}H_{21}H^{\dagger}_{22}\\
\sqrt{{\rho}_{12}{\rho}_{11}}H_{12}H^{\dagger}_{11}+\sqrt{{\rho}_{22}{\rho}_{21}}H_{22}H^{\dagger}_{21} & \Delta+I_{N_2}+{\rho}_{22}H_{22}H^{\dagger}_{22}+{\rho}_{12}H_{12}H^{\dagger}_{12}\end{array}
\right]-\right.\nonumber\\
&&\left.
\left[ \begin{array}{c}
H_{11} \\
H_{12} \end{array}
\right]
(I_{M_1}-Q_{1p})
[H^{\dagger}_{11} H^{\dagger}_{12}]
\right)+(N_1+N_2)\log (2\pi e)\nonumber\\
&\stackrel{(g)}{\ge}& \log \det \left(
\left[\begin{array}{cc}
I_{N_1}+{\rho}_{11}H_{11}H^{\dagger}_{11}+{\rho}_{21}H_{21}H^{\dagger}_{21} & \sqrt{{\rho}_{12}{\rho}_{11}}H_{11}H^{\dagger}_{12}+\sqrt{{\rho}_{22}{\rho}_{21}}H_{21}H^{\dagger}_{22}\\
\sqrt{{\rho}_{12}{\rho}_{11}}H_{12}H^{\dagger}_{11}+\sqrt{{\rho}_{22}{\rho}_{21}}H_{22}H^{\dagger}_{21} & I_{N_2}+{\rho}_{22}H_{22}H^{\dagger}_{22}+{\rho}_{12}H_{12}H^{\dagger}_{12}\end{array}
\right]-\right.\nonumber\\
&&
\left.\left[ \begin{array}{c}
H_{11} \\
H_{12} \end{array}
\right]
(I_{M_1}-Q_{1p})
[H^{\dagger}_{11} H^{\dagger}_{12}]
\right)+(N_1+N_2)\log (2\pi e)\nonumber\\
&{=}& \log  \det \left(I_{N_1+N_2}+
\left[\begin{array}{c}
\sqrt{\rho_{11}}H_{11} \\
\sqrt{\rho_{12}}H_{12} \end{array}\right]
(I_{M_1}-H^{\dagger}_{12}{(I_{N_2}+{\rho}_{12}H_{12}H^{\dagger}_{12})^{-1}}H_{12})
[\sqrt{\rho_{11}}H_{11}^{\dagger}\ \sqrt{\rho_{12}}H_{12}^{\dagger}]\right.\nonumber\\
&&\left.+\left[\begin{array}{c}
\sqrt{\rho_{21}}H_{21} \\
\sqrt{\rho_{22}}H_{22} \end{array}\right]
[\sqrt{\rho_{21}}H_{21}^{\dagger}\ \sqrt{\rho_{22}}H_{22}^{\dagger}]\right)+(N_1+N_2)\log (2\pi e),
\end{eqnarray}
where $(f)$ is due to Lemma \ref{lem_block} and $(g)$ results from Lemma \ref{lem_block} and the fact that $\log\det(.)$ is a monotonically increasing function on the cone of positive definite matrices and also the fact that $\Delta$ is a positive definite matrix.

Thus, we see that this $R_1+2R_2$ bound is within $N_1+3N_2$ bits of the outer bound in \eqref{ro0eq10}.

\noindent {\bf \eqref{r12r2.4}:} For this bound in Lemma \ref{inner} we have
\begin{eqnarray}
&& I(X_1,X_{2c};Y_1,\hat {Y_2}|X_{1c})+
I(X_{2c};Y_1|X_1)+
I(X_{1c},X_2;Y_2|X_{2c})\nonumber\\
&&+I(X_2;Y_2|X_{1c},X_{2c})+C_{12}\nonumber\\
&=& h(Y_1,\hat {Y_2}|X_{1c})-h(Y_1,\hat {Y_2}|X_{1c},X_1,X_{2c})+
h(Y_1|X_1)-h(Y_1|X_1,X_{2c})+
h(Y_2|X_{2c})\nonumber\\
&&-h(Y_2|X_{2c},X_{1c},X_2)+h(Y_2|X_{1c},X_{2c})
-h(Y_2|X_{1c},X_{2c},X_2)+C_{12}\nonumber\\
&\stackrel{(a)}{\ge}& h(Y_1,\hat {Y_2}|X_{1c})-h(Y_1,\hat {Y_2}|X_{1c},X_1,X_{2c})+
h(Y_1|X_1)+h(Y_2|X_{2c})+h(Y_2|X_{1c},X_{2c})\nonumber\\
&&+C_{12}-{N_1}-2{N_2}-(N_1+2{N_2})\log (2\pi e)\nonumber\\
&\stackrel{(b)}{\ge}& h(Y_1,\hat {Y_2}|X_{1c})+h(Y_1|X_1)+h(Y_2|X_{2c})+C_{12}-2{N_1}-3{N_2}-2(N_1+N_2)\log (2\pi e)\nonumber\\
&{=}& h(Y_1,\hat {Y_2}|X_{1c})+
\log \det (I_{N_1}+{\rho }_{21}H_{21}H^{\dagger }_{21})+\log \det (I_{N_2}+{\rho }_{22}H_{22}Q_{2p}H^{\dagger }_{22}+{\rho }_{12}H_{12}H^{\dagger }_{12})
\nonumber\\
&&+C_{12}-2{N_1}-3{N_2}-(N_1+N_2)\log (2\pi e)\nonumber\\
&\stackrel{(c)}{\ge}& h(Y_1,\hat {Y_2}|X_{1c})+
\log  \det (I_{N_2}+{\rho}_{22}H_{22}H^{\dagger}_{22}+
{\rho}_{12}H_{12}H^{\dagger}_{12})
+C_{12}-2{N_1}-3{N_2}-(N_1+N_2)\log (2\pi e)\nonumber\\
&\stackrel{(d)}{\ge}& \log  \det \left(I_{N_1+N_2}+
\left[\begin{array}{c}
\sqrt{\rho_{11}}H_{11} \\
\sqrt{\rho_{12}}H_{12} \end{array}\right]
(I_{M_1}-H^{\dagger}_{12}{(I_{N_2}+{\rho}_{12}H_{12}H^{\dagger}_{12})^{-1}}H_{12})
[\sqrt{\rho_{11}}H_{11}^{\dagger}\ \sqrt{\rho_{12}}H_{12}^{\dagger}]\right.\nonumber\\
&&\left.+\left[\begin{array}{c}
\sqrt{\rho_{21}}H_{21} \\
\sqrt{\rho_{22}}H_{22} \end{array}\right]
[\sqrt{\rho_{21}}H_{21}^{\dagger}\ \sqrt{\rho_{22}}H_{22}^{\dagger}]\right)+\log  \det (I_{N_2}+{\rho}_{22}H_{22}H^{\dagger}_{22}+
{\rho}_{12}H_{12}H^{\dagger}_{12})+C_{12}\nonumber\\
&&-2{N_1}-3{N_2},
\end{eqnarray}
where $(a)$ follows from \eqref{U_B}, $(c)$ follows from \eqref{vvv}, and $(b)$ and $(d)$ can be seen similar to the proof of the last bound.

Thus, we see that this $R_1+2R_2$ bound is within $2N_1+3N_2$ bits of the outer bound in \eqref{ro0eq10}.

We define the region $\mathcal{R}^p$ including all the achievability bounds in \eqref{rr1x}-\eqref{r12r2.4} except for \eqref{rr1x} and \eqref{2r1r2.2}. Up to now, we have analyzed all the bounds of $\mathcal{R}^p$. We proved in $\mathcal{R}^p$ that:
\begin{eqnarray}
R_1 &\le& I_1-N_1-N_2,\nonumber\\
R_2 &\le& I_2-N_1-N_2,\nonumber\\
R_1+R_2 &\le& \min\{I_3,I_4,I_5,I_6\}-N_1-N_2-\max(N_1,N_2),\nonumber\\
2R_1+R_2 &\le& \min\{I_7,I_9\}-2N_1-2N_2,\nonumber\\
R_1+2R_2 &\le& \min\{I_8,I_{10}\}-2N_1-3N_2.
\end{eqnarray}
Thus, $\mathcal{R}^p$ contains the region which is within $N_1+N_2$ bits to the outer bound $\mathcal{R}_0$.

Now, add constraints \eqref{2r1r2.2} and \eqref{rr1x} to $\mathcal{R}^p$. \cite{Tse} proved that whenever \eqref{2r1r2.2} is active, at least one of the $R_1+R_2$ bounds is active, which can be extended to the MIMO case because Claim 5.6 in \cite{Tse} is true in general independent of the number of antennas.  We will now  present similar reasoning for bound \eqref{rr1x} to show that whenever bound \eqref{rr1x} is active, at least one of the $R_1+R_2$ bounds is active.

The value of $R_1+R_2$ at the intersection of \eqref{rr1x} and \eqref{r12r2.1} is greater than the average value of $R_1+R_2$ in \eqref{r1r2.1} and \eqref{r1r2.5}:
\begin{eqnarray}
&&{\rm RHS\ of\ }\eqref{rr1x}+{\rm RHS\ of\ }\eqref{r12r2.1}\nonumber\\
&=&I(X_1;Y_1|X_{2c})+I(X_1,X_{2c};Y_1|X_{1c})+I(X_{1c},X_{2};Y_2)
+I(X_2;Y_2|X_{1c},X_{2c})+C_{12}+{(C_{21}-\xi)}^{+}\nonumber\\
&\stackrel{(a)}{\ge}& I(X_{2c},X_1;Y_1)+I(X_2;Y_2|X_{1c},X_{2c})+{(C_{21}-\xi)}^{+}+
I(X_1;Y_1|X_{1c},X_{2c})+I(X_{1c},X_2;Y_2)+C_{12}\nonumber\\
&=&{\rm RHS\ of\ }\eqref{r1r2.1}+{\rm RHS\ of\ }\eqref{r1r2.5},
\end{eqnarray}
where $(a)$ follows from the fact that
\begin{eqnarray}\label{eeee}
&&I(X_1;Y_1|X_{2c})+I(X_1,X_{2c};Y_1|X_{1c})-I(X_{2c},X_1;Y_1)-I(X_1;Y_1|X_{1c},X_{2c})\nonumber\\
&=& h(Y_1|X_{2c})+h(Y_1|X_{1c})-h(Y_1)-h(Y_1|X_{1c},X_{2c})\nonumber\\
&\stackrel{(b)}{\ge}& 0,
\end{eqnarray}
where $(b)$ results from the following fact that if $A$, $B$, $C$ and $D$ are invertible positive semi-definite $M\times M$ matrices then
\begin{eqnarray}
\det(A+B).\det(A+C)&\ge&\det(A+B+C).\det(A),
\end{eqnarray}
because it is equivalent to
\begin{eqnarray}
\det(A+B).\det(A^{-1}).\det(A+C)&\ge&\det(A+B+C),
\end{eqnarray}
or
\begin{eqnarray}
\det(A+B+C+BA^{-1}C)&\ge&\det(A+B+C),
\end{eqnarray}
which is trivial.

It shows that when both the bounds \eqref{rr1x} and \eqref{r12r2.1} are active, at least one of the bounds \eqref{r1r2.1} or \eqref{r1r2.5} will be active also.

The value of $R_1+R_2$ at the intersection of \eqref{rr1x} and \eqref{r12r2.2} is greater than the average value of $R_1+R_2$ in \eqref{r1r2.1} and \eqref{r1r2.6}:
\begin{eqnarray}
&&{\rm RHS\ of\ }\eqref{rr1x}+{\rm RHS\ of\ }\eqref{r12r2.2}\nonumber\\
&=&I(X_1;Y_1|X_{2c})+I(X_1,X_{2c};Y_1|X_{1c})+I(X_{2c};Y_1|X_{1})+I(X_{1c},X_{2};Y_2|X_{2c})
+I(X_2;Y_2|X_{1c},X_{2c})\nonumber\\
&&+C_{12}+{(C_{21}-\xi)}^{+}\nonumber\\
&\stackrel{(a)}{\ge}& I(X_{2c},X_1;Y_1)+I(X_2;Y_2|X_{1c},X_{2c})+{(C_{21}-\xi)}^{+}+
I(X_1;Y_1|X_{1c},X_{2c})+I(X_{2c};Y_1|X_1)+\nonumber\\
&&I(X_{1c},X_2;Y_2|X_{2c})+C_{12}\nonumber\\
&=&{\rm RHS\ of\ }\eqref{r1r2.1}+{\rm RHS\ of\ }\eqref{r1r2.6},
\end{eqnarray}
where $(a)$ follows from \eqref{eeee}.

It shows that when both the bounds \eqref{rr1x} and \eqref{r12r2.2} are active, at least one of the bounds \eqref{r1r2.1} or \eqref{r1r2.6} will be active also.

The value of $R_1+R_2$ at the intersection of \eqref{rr1x} and \eqref{r12r2.3} is greater than the average value of $R_1+R_2$ in \eqref{r1r2.2} and \eqref{r1r2.5}:
\begin{eqnarray}
&&{\rm RHS\ of\ }\eqref{rr1x}+{\rm RHS\ of\ }\eqref{r12r2.3}\nonumber\\
&=&I(X_1;Y_1|X_{2c})+I(X_1,X_{2c};Y_1,\hat {Y_2}|X_{1c})+I(X_{1c},X_2;Y_2)
+I(X_2;Y_2|X_{1c},X_{2c})+C_{12}\nonumber\\
&\stackrel{(a)}{\ge}& I(X_1,X_{2c};Y_1,\hat {Y_2}|X_{1c})+I(X_2;Y_2|X_{1c},X_{2c})+
I(X_1;Y_1|X_{1c},X_{2c})+I(X_{1c},X_2;Y_2)+C_{12}\nonumber\\
&=&{\rm RHS\ of\ }\eqref{r1r2.2}+{\rm RHS\ of\ }\eqref{r1r2.5},
\end{eqnarray}
where $(a)$ follows from the fact that
\begin{eqnarray}\label{eee}
&&I(X_1;Y_1|X_{2c})-I(X_1;Y_1|X_{1c},X_{2c})\nonumber\\
&=& I(Y_1|X_{2c})-I(Y_1|X_{1c},X_{2c})\nonumber\\
&=& \log \det(I_{N_1}+{\rho }_{11}H_{11}H^{\dagger }_{11}+{\rho }_{21}H_{21}Q_{2p}H^{\dagger }_{21})-\log \det(I_{N_1}+{\rho }_{11}H_{11}Q_{1p}H^{\dagger }_{11}+{\rho }_{21}H_{21}Q_{2p}H^{\dagger }_{21})\nonumber\\
&{\ge}& 0.
\end{eqnarray}

It shows that when both the bounds \eqref{rr1x} and \eqref{r12r2.3} are active, at least one of the bounds \eqref{r1r2.2} or \eqref{r1r2.5} will be active also.

The value of $R_1+R_2$ at the intersection of \eqref{rr1x} and \eqref{r12r2.4} is greater than the average value of $R_1+R_2$ in \eqref{r1r2.2} and \eqref{r1r2.6}:
\begin{eqnarray}
&&{\rm RHS\ of\ }\eqref{rr1x}+{\rm RHS\ of\ }\eqref{r12r2.4}\nonumber\\
&=&I(X_1;Y_1|X_{2c})+I(X_1,X_{2c};Y_1,\hat {Y_2}|X_{1c})+
I(X_{2c};Y_1|X_1)+
I(X_{1c},X_2;Y_2|X_{2c})\nonumber\\
&&+I(X_2;Y_2|X_{1c},X_{2c})+C_{12}\nonumber\\
&\stackrel{(a)}{\ge}& I(X_1,X_{2c};Y_1,\hat {Y_2}|X_{1c})+I(X_2;Y_2|X_{1c},X_{2c})+
I(X_1;Y_1|X_{1c},X_{2c})+I(X_{2c};Y_1|X_1)+\nonumber\\
&&I(X_{1c},X_2;Y_2|X_{2c})+C_{12}\nonumber\\
&=&{\rm RHS\ of\ }\eqref{r1r2.2}+{\rm RHS\ of\ }\eqref{r1r2.6},
\end{eqnarray}
where $(a)$ follows from \eqref{eee}.

It shows that when both the bounds \eqref{rr1x} and \eqref{r12r2.4} are active, at least one of the bounds \eqref{r1r2.2} or \eqref{r1r2.6} will be active also.

So, when \eqref{rr1x} is active, we can see that at least one of the $R_1+R_2$ bounds in \eqref{r1r2.1}-\eqref{r1r2.6} is active in $\mathcal{R}_{2\rightarrow 1\rightarrow 2}$. Hence, with a strategy similar to the one in Claim 5.6 of \cite{Tse} for \eqref{2r1r2.2} we can see that the bound \eqref{rr1x} does not show up in $conv\{\mathcal{R}_{2 \rightarrow 1 \rightarrow 2} \cup \mathcal{R}_{1 \rightarrow 2 \rightarrow 1}\}$.

Therefore, the $R_1$ bound \eqref{rr1x} and the $2R_1+R_2$ bound \eqref{2r1r2.2} do not show up in $\mathcal{R}=conv\{\mathcal{R}_{2 \rightarrow 1 \rightarrow 2} \cup \mathcal{R}_{1 \rightarrow 2 \rightarrow 1}\}$ and $\mathcal{R}$ is within $N_1+N_2$ bits per user to the outer bounds in Theorem 1.

\section{Proof of Theorem \ref{thm_dof}}
\label{Appendix1}
In this section, we will find the limit of $\mathcal{R}_o/\log \mathsf{SNR}$ as $\mathsf{SNR}\to \infty$ to get the result stated in Theorem \ref{thm_dof} when ${{C }_{ij}}\sim{\mathsf{SNR}}^{\beta_{ij}}$ and ${{\rho}_{ij}}\sim{\mathsf{SNR}}$ where ${\beta_{12}},{\beta_{21}}\in {\mathbb R}^+$.

This follows from Theorem \ref{outer_inner_capacity_reciprocal} since the capacity region is inner and outer- bounded by  $\mathcal{R}_o$ with constant gaps which would vanish for the DoF.
Before going over each of the above terms and finding their high SNR limit, we first give some lemmas that will be used in the proof.

\begin{lemma}[\cite{Varanasi}]
Let $H_{1}\in \mathbb{C}^{N\times M_1}$, $H_{2}\in \mathbb{C}^{N\times M_2}$,..., and $H_{k}\in \mathbb{C}^{N\times M_k}$ be $k$ full rank and independent channel matrices. Then, the following holds
\begin{eqnarray}
&&{\log  {\det  (I_{N}+{\rho }H_{1}H^{\dagger }_{1}+{\rho }H_{2}H^{\dagger }_{2}+...+{\rho }H_{k}H^{\dagger }_{k})\ }\ } \nonumber\\
&=&{\log  {\det  (I_{N}+{\rho }[H_{1}\ ...\ H_{k}][H_{1}\ ...\ H_{k}]^{\dagger })\ }\ } \nonumber\\
&=& \min\{N,M_1+M_2+...+M_k\}{\log \rho }+o({\log \rho }).
\end{eqnarray}\label{dgg}
\end{lemma}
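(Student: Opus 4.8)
The plan is to establish the two equalities in succession. The first, relating the sum $\sum_i H_iH_i^\dagger$ to a single Gram matrix, is pure block-matrix algebra: setting $\mathbf{H} \triangleq [H_1\ \cdots\ H_k] \in \mathbb{C}^{N\times(M_1+\cdots+M_k)}$, multiplying the block row $\mathbf{H}$ by the block column $\mathbf{H}^\dagger$ gives $\mathbf{H}\mathbf{H}^\dagger = \sum_{i=1}^k H_iH_i^\dagger$, so the arguments of the two $\log\det$'s are the same $N\times N$ matrix and nothing more is needed there.

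For the asymptotic identity I would write $M \triangleq M_1+\cdots+M_k$ and $r \triangleq \mathrm{rank}(\mathbf{H})$. The matrix $\mathbf{H}\mathbf{H}^\dagger$ is Hermitian positive semidefinite with exactly $r$ positive eigenvalues $\lambda_1\ge\cdots\ge\lambda_r>0$ and $N-r$ zero eigenvalues, whence
\[
\log\det(I_N+\rho\mathbf{H}\mathbf{H}^\dagger)=\sum_{j=1}^{r}\log(1+\rho\lambda_j)=r\log\rho+\sum_{j=1}^{r}\log\lambda_j+\sum_{j=1}^{r}\log\!\left(1+\tfrac{1}{\rho\lambda_j}\right).
\]
The middle term is a finite constant and the last term vanishes as $\rho\to\infty$, so both are $o(\log\rho)$ and the left side equals $r\log\rho+o(\log\rho)$. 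What then remains is to identify $r$ with $\min\{N,M\}$.

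I expect that rank identification to be the only real point of the proof, and it is precisely where the hypothesis that the $H_i$ are full rank and (mutually) independent is invoked. The set of $N\times M$ matrices of rank strictly below $\min\{N,M\}$ is cut out by the vanishing of all $\min\{N,M\}\times\min\{N,M\}$ minors, hence a proper algebraic subvariety of $\mathbb{C}^{N\times M}$ of Lebesgue measure zero; since $\mathbf{H}$ is built by concatenating independent matrices with continuous distributions, it avoids this subvariety with probability one, giving $r=\min\{N,M\}$. Equivalently, one may exhibit a single choice of the $H_i$ for which $\mathbf{H}$ has rank $\min\{N,M\}$ and conclude that the rank is generically maximal. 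Substituting $r=\min\{N,M\}$ into the display above, and using the first equality to pass back to $I_N+\rho\sum_i H_iH_i^\dagger$, completes the proof.
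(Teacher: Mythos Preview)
The paper does not give its own proof of this lemma; it simply cites \cite{Varanasi}. Your self-contained argument is correct and is the standard way to establish the result: the block identity $\sum_i H_iH_i^\dagger=\mathbf{H}\mathbf{H}^\dagger$ is immediate, and the eigenvalue expansion
\[
\log\det(I_N+\rho\,\mathbf{H}\mathbf{H}^\dagger)=\sum_{j=1}^{r}\log(1+\rho\lambda_j)=r\log\rho+O(1)
\]
correctly isolates the prelog as $r=\mathrm{rank}(\mathbf{H})$.

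One point worth tightening: the identification $r=\min\{N,M_1+\cdots+M_k\}$ does not follow from each $H_i$ being full rank alone (take $H_1=H_2$ with $N>M_1$). Your appeal to ``independent matrices with continuous distributions'' is exactly the right hypothesis, and it matches the paper's standing assumption that the channel entries are i.i.d.\ $\mathsf{CN}(0,1)$; but the lemma as stated only says ``full rank and independent,'' so you are implicitly reading ``independent'' as statistical independence with absolutely continuous laws rather than, say, linear independence. That is the intended reading in context, but it would be cleaner to state it explicitly before invoking the measure-zero argument.
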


\begin{lemma}[\cite{J1}]
\label{dlemma-dof}
Let $H_{ii}\in \mathbb{C}^{N_i\times M_i}$ and $H_{ij}\in \mathbb{C}^{N_i\times M_j}$ be two channel matrices with each entry independently chosen from $\mathsf{CN}(0,1)$. Then, the following holds with probability $1$ (over the randomness of channel matrices).
\begin{eqnarray}
&&\log \det (I_{N_i}+{\rho }H_{ii}H^{\dagger }_{ii}-{\rho }H_{ii}H^{\dagger }_{ij}{(I_{N_j}+{\rho }H_{ij}H^{\dagger }_{ij})}^{-1}
{\rho }H_{ij}H^{\dagger }_{ii})\nonumber\\
&&=\min\{N_i,(M_i-N_j)^+\}{\log \rho }+o({\log \rho }).
\end{eqnarray}
\end{lemma}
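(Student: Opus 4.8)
The plan is to first collapse the argument of the determinant to a transparent form via a push‑through (Woodbury) identity, then diagonalize the interference term and use rotational invariance of the complex Gaussian to separate the ``interfered'' directions of $H_{ii}$ (which contribute only $O(1)$) from the ``clean'' directions (which carry the $\log\rho$ growth). Throughout, ``with probability $1$'' statements about ranks are used freely, since the entries are i.i.d.\ continuous.

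First I would use the identity $\rho H_{ij}^{\dagger}(I+\rho H_{ij}H_{ij}^{\dagger})^{-1}H_{ij}=I-(I+\rho H_{ij}^{\dagger}H_{ij})^{-1}$ (here $H_{ij}^{\dagger}H_{ij}$ is $M_i\times M_i$), which gives
\[
I_{N_i}+\rho H_{ii}H_{ii}^{\dagger}-\rho^{2}H_{ii}H_{ij}^{\dagger}(I_{N_j}+\rho H_{ij}H_{ij}^{\dagger})^{-1}H_{ij}H_{ii}^{\dagger}
= I_{N_i}+\rho H_{ii}(I_{M_i}+\rho H_{ij}^{\dagger}H_{ij})^{-1}H_{ii}^{\dagger}.
\]
With probability $1$, $H_{ij}^{\dagger}H_{ij}$ has rank $r:=\min\{N_j,M_i\}$, with strictly positive eigenvalues $\lambda_1,\dots,\lambda_r$ and $M_i-r$ zero eigenvalues, so $H_{ij}^{\dagger}H_{ij}=U\,\mathrm{diag}(\lambda_1,\dots,\lambda_r,0,\dots,0)\,U^{\dagger}$ for a unitary $U$ that is a (measurable) function of $H_{ij}$ alone. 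Writing $G:=H_{ii}U=[\,G_1\ G_2\,]$ with $G_1\in\mathbb{C}^{N_i\times r}$ and $G_2\in\mathbb{C}^{N_i\times(M_i-r)}$, I get
\[
\rho H_{ii}(I_{M_i}+\rho H_{ij}^{\dagger}H_{ij})^{-1}H_{ii}^{\dagger}
=\rho\,G_1\,\mathrm{diag}\!\Big(\tfrac{1}{1+\rho\lambda_k}\Big)\,G_1^{\dagger}+\rho\,G_2G_2^{\dagger}.
\]
Since $H_{ii}$ has i.i.d.\ $\mathsf{CN}(0,1)$ entries and is independent of $H_{ij}$ (hence of $U$), $G$ is again i.i.d.\ $\mathsf{CN}(0,1)$; in particular $G_2$ is a Gaussian matrix of size $N_i\times(M_i-r)$ with $M_i-r=(M_i-N_j)^{+}$, so it has rank $\min\{N_i,(M_i-N_j)^{+}\}$ with probability $1$.

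The lower bound is immediate: the first term above is p.s.d., so the argument is $\succeq I_{N_i}+\rho G_2G_2^{\dagger}$, and by Lemma~\ref{dgg} (with $k=1$, $H_1=G_2$) one has $\log\det(I_{N_i}+\rho G_2G_2^{\dagger})=\min\{N_i,(M_i-N_j)^{+}\}\log\rho+o(\log\rho)$. For the upper bound, the key observation is that $\tfrac{\rho}{1+\rho\lambda_k}\le\tfrac{1}{\lambda_k}$ for all $\rho$, so $\rho\,G_1\,\mathrm{diag}(\tfrac{1}{1+\rho\lambda_k})\,G_1^{\dagger}\preceq B:=G_1\,\mathrm{diag}(1/\lambda_k)\,G_1^{\dagger}$, a p.s.d.\ matrix that does \emph{not} depend on $\rho$. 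Hence the argument is $\preceq I_{N_i}+B+\rho G_2G_2^{\dagger}$ and
\[
\log\det\!\big(I_{N_i}+B+\rho G_2G_2^{\dagger}\big)=\log\det(I_{N_i}+B)+\log\det\!\big(I_{N_i}+\rho\,\widetilde{G}_2\widetilde{G}_2^{\dagger}\big),\qquad \widetilde{G}_2:=(I_{N_i}+B)^{-1/2}G_2,
\]
where $\log\det(I_{N_i}+B)=O(1)=o(\log\rho)$, and $\widetilde{G}_2$ has the same rank as $G_2$, so the second term is again $\min\{N_i,(M_i-N_j)^{+}\}\log\rho+o(\log\rho)$ (directly from its singular values, or by Lemma~\ref{dgg}). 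Sandwiching the argument between these two bounds proves the lemma.

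I expect the upper bound to be the only real obstacle: the naive estimate $(I_{M_i}+\rho H_{ij}^{\dagger}H_{ij})^{-1}\preceq I_{M_i}$ gives only the loose $\min\{N_i,M_i\}\log\rho$, so one must genuinely isolate the interfered directions $G_1$ and exploit that $\rho/(1+\rho\lambda_k)$ stays bounded as $\rho\to\infty$ — which is precisely where the full‑rank genericity of the channel matrices and the rotational invariance of the complex Gaussian enter. The remaining ingredients (rank genericity with probability $1$, and $\log(1+\rho\sigma)=\log\rho+O(1)$ for $\sigma>0$) are routine.
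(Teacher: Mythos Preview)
Your argument is correct. The paper does not actually prove this lemma here; it imports it from \cite{J1}, so there is no in-paper proof to compare against line by line. Your route---collapsing via the push-through identity to $I_{N_i}+\rho H_{ii}(I_{M_i}+\rho H_{ij}^{\dagger}H_{ij})^{-1}H_{ii}^{\dagger}$, diagonalizing the interference Gram matrix, and then exploiting unitary invariance of the Gaussian to split $H_{ii}U=[G_1\ G_2]$---is clean and gives matching upper and lower bounds with an $O(1)$ gap. The only care point is that Lemma~\ref{dgg} as stated is phrased for ``channel matrices,'' while your $\widetilde{G}_2=(I+B)^{-1/2}G_2$ is not Gaussian; you handle this correctly by falling back on the elementary singular-value fact $\log\det(I+\rho AA^{\dagger})=\mathrm{rank}(A)\log\rho+O(1)$, which is all that is needed since $(I+B)^{-1/2}$ is invertible and $B$ is $\rho$-independent with probability~$1$.
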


\begin{lemma}
\label{dlemma-dof2}
Let $H_{ii}\in \mathbb{C}^{N_i\times M_i}$ and $H_{ij}\in \mathbb{C}^{N_j\times M_i}$ be two channel matrices with each entry independently chosen from $\mathsf{CN}(0,1)$. Then, the following holds with probability $1$ (over the randomness of channel matrices).
\begin{eqnarray}
&&\log \det (I_{N_j}+{\rho }H_{ij}H^{\dagger }_{ij}-{\rho }H_{ij}H^{\dagger }_{ii}{(I_{N_i}+{\rho }H_{ii}H^{\dagger }_{ii})}^{-1}
{\rho }H_{ii}H^{\dagger }_{ij})\nonumber\\
&&=\min\{N_j,(M_i-N_i)^+\}{\log \rho }+o({\log \rho }).
\end{eqnarray}
\end{lemma}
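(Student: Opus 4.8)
The plan is to reduce the left-hand side to a difference of two ``clean'' log-determinants to which Lemma~\ref{dgg} applies verbatim. First I would simplify the matrix inside the determinant: pulling $\sqrt{\rho}H_{ij}$ out on the left and $\sqrt{\rho}H_{ij}^{\dagger}$ out on the right, it equals
$$I_{N_j}+\rho H_{ij}\bigl(I_{M_i}-\rho H_{ii}^{\dagger}(I_{N_i}+\rho H_{ii}H_{ii}^{\dagger})^{-1}H_{ii}\bigr)H_{ij}^{\dagger},$$
and the push-through (Woodbury) identity $I_{M_i}-\rho H_{ii}^{\dagger}(I_{N_i}+\rho H_{ii}H_{ii}^{\dagger})^{-1}H_{ii}=(I_{M_i}+\rho H_{ii}^{\dagger}H_{ii})^{-1}$ — which is exactly $L(I_{M_i},\sqrt{\rho}H_{ii}^{\dagger})=(I_{M_i}+\rho H_{ii}^{\dagger}H_{ii})^{-1}$ in the notation of Lemma~\ref{L} — collapses the bracket, so the quantity of interest becomes $\log\det\bigl(I_{N_j}+\rho H_{ij}(I_{M_i}+\rho H_{ii}^{\dagger}H_{ii})^{-1}H_{ij}^{\dagger}\bigr)$.

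Next I would set $Q=(I_{M_i}+\rho H_{ii}^{\dagger}H_{ii})^{-1/2}$ and apply $\det(I+XY)=\det(I+YX)$ with $X=\sqrt{\rho}H_{ij}Q$ and $Y=\sqrt{\rho}QH_{ij}^{\dagger}$. This gives
$$\det\bigl(I_{N_j}+\rho H_{ij}Q^{2}H_{ij}^{\dagger}\bigr)=\det\bigl(I_{M_i}+\rho QH_{ij}^{\dagger}H_{ij}Q\bigr)=\det(Q)^{2}\,\det\bigl(I_{M_i}+\rho H_{ii}^{\dagger}H_{ii}+\rho H_{ij}^{\dagger}H_{ij}\bigr),$$
and since $\det(Q)^{2}=1/\det(I_{M_i}+\rho H_{ii}^{\dagger}H_{ii})$, the left-hand side of the lemma equals $\log\det(I_{M_i}+\rho H_{ii}^{\dagger}H_{ii}+\rho H_{ij}^{\dagger}H_{ij})-\log\det(I_{M_i}+\rho H_{ii}^{\dagger}H_{ii})$.

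Now $H_{ii}^{\dagger}\in\mathbb{C}^{M_i\times N_i}$ and $H_{ij}^{\dagger}\in\mathbb{C}^{M_i\times N_j}$ have i.i.d.\ $\mathsf{CN}(0,1)$ entries and are independent, hence full rank with probability $1$; so Lemma~\ref{dgg} evaluates the first term to $\min\{M_i,N_i+N_j\}\log\rho+o(\log\rho)$ and (with $k=1$) the second to $\min\{M_i,N_i\}\log\rho+o(\log\rho)$. Subtracting and checking the elementary identity $\min\{M_i,N_i+N_j\}-\min\{M_i,N_i\}=\min\{N_j,(M_i-N_i)^{+}\}$ through the three cases $M_i\le N_i$, $N_i<M_i\le N_i+N_j$, and $M_i>N_i+N_j$ gives the stated high-SNR expansion.

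The main obstacle is really just recognizing this reduction; once the expression is written as a difference of two log-determinants of the form $\log\det(I+\rho\sum_k H_kH_k^{\dagger})$, Lemma~\ref{dgg} does the rest, and the remaining steps (the push-through identity, the $\det(I+XY)=\det(I+YX)$ swap, and the $\min$-bookkeeping) are routine. If one prefers a self-contained argument, an alternative is to eigendecompose $H_{ii}^{\dagger}H_{ii}=V\Lambda V^{\dagger}$, observe that $\rho(I_{M_i}+\rho H_{ii}^{\dagger}H_{ii})^{-1}=\rho P+O(1)$ where $P$ is the orthogonal projection onto $\ker H_{ii}$ (of dimension $(M_i-N_i)^{+}$), note that $H_{ij}PH_{ij}^{\dagger}$ has full rank $\min\{N_j,(M_i-N_i)^{+}\}$ with probability $1$ because $H_{ij}$ is independent of $H_{ii}$ and complex Gaussian matrices are rotation-invariant, and then absorb the bounded $O(1)$ correction into a pair of two-sided bounds on $\log\det(I_{N_j}+\rho H_{ij}PH_{ij}^{\dagger})$, to which the same degree count of Lemma~\ref{dgg} applies.
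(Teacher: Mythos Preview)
Your proof is correct. The paper itself does not give a detailed argument for this lemma; it simply states that the proof is similar to that of Lemma~\ref{dlemma-dof}, which in turn is quoted from an external reference, so there is no in-paper proof to compare against line by line.

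Your reduction is clean and in the spirit of the computations used elsewhere in the paper. One cosmetic remark: the push-through step and the $\det(I+XY)=\det(I+YX)$ swap can be replaced by a single invocation of Lemma~\ref{lem_block}. Indeed, the matrix in the statement is exactly the Schur complement (with respect to the top-left block) of
\[
\begin{bmatrix} I_{N_i}+\rho H_{ii}H_{ii}^{\dagger} & \rho H_{ii}H_{ij}^{\dagger} \\ \rho H_{ij}H_{ii}^{\dagger} & I_{N_j}+\rho H_{ij}H_{ij}^{\dagger} \end{bmatrix}
= I_{N_i+N_j}+\rho\begin{bmatrix}H_{ii}\\ H_{ij}\end{bmatrix}\bigl[H_{ii}^{\dagger}\ H_{ij}^{\dagger}\bigr],
\]
so Lemma~\ref{lem_block} gives directly
\[
\text{LHS}=\log\det\!\Bigl(I_{N_i+N_j}+\rho\begin{bmatrix}H_{ii}\\ H_{ij}\end{bmatrix}\bigl[H_{ii}^{\dagger}\ H_{ij}^{\dagger}\bigr]\Bigr)-\log\det\bigl(I_{N_i}+\rho H_{ii}H_{ii}^{\dagger}\bigr),
\]
and Lemma~\ref{dgg} evaluates the two terms to $\min\{N_i+N_j,M_i\}\log\rho$ and $\min\{N_i,M_i\}\log\rho$ (plus $o(\log\rho)$), which is of course the same difference you obtain. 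This is exactly the style of manipulation the paper uses repeatedly in Appendices~A and~B, so your approach and the paper's implicit one coincide.
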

\begin{proof}
The proof is similar to that of Lemma \ref{dlemma-dof}.
\end{proof}

Now we find the high SNR limits of the bounds in \eqref{ro0eq1}-\eqref{ro0eq10} leading to Theorem 2.

\noindent {\bf \eqref{ro0eq1}$\rightarrow$\eqref{dofeq1}:} Consider bound \eqref{ro0eq1} in $\mathcal{R}_o$, we have
\begin{eqnarray}
&&{\log  {\det  \left(I_{N_1}+{\rho}_{11}H_{11}H^{\dagger}_{11}\right)}}+\min \{\log \det \left(I_{N_2}+{\rho }_{12}H_{12}H^{\dagger }_{12}-{\rho }_{12}{\rho }_{11}H_{12}H^{\dagger }_{11}\right. \nonumber\\
&&\left.{\left(I_{N_1}+{\rho}_{11}H_{11}H^{\dagger}_{11}\right)^{-1}}H_{11}H^{\dagger }_{12}\right),C_{21} \}\nonumber\\
&=&{\log  {\det  (I_{N_1}+{\rho}H_{11}H^{\dagger}_{11})}}+\min \{\log \det (I_{N_2}+{\rho }^{\alpha}H_{12}H^{\dagger }_{12}-{\rho^2}H_{12}H^{\dagger }_{11} \nonumber\\
&&{(I_{N_1}+{\rho}H_{11}H^{\dagger}_{11})^{-1}}H_{11}H^{\dagger }_{12}),C_{21} \}\nonumber\\
&\stackrel{(a)}{=}&({\min\{M_1,N_1\}}+\min\{
\min\{N_2,(M_1-N_1)^+\},\beta_{21}\}
){\log \mathsf{SNR}\ }+o({\log  \mathsf{SNR}\ }),
\end{eqnarray}
where $(a)$ follows from Lemma \ref{dgg} and Lemma \ref{dlemma-dof2}. Now, dividing both sides by $\log \mathsf{SNR}$, we obtain \eqref{dofeq1}.

\noindent {\bf \eqref{ro0eq2}$\rightarrow$\eqref{dofeq2}:} This is obtained similarly to the last bound by exchanging $1$ and $2$ in the indices.

\noindent {\bf \eqref{ro0eq3}$\rightarrow$\eqref{dofeq3}:} Consider bound \eqref{ro0eq3} in $\mathcal{R}_o$, we have
\begin{eqnarray}
&&\log  \det \left(I_{N_1}+{\rho}_{11}H_{11}H^{\dagger}_{11}+
{\rho}_{21}H_{21}H^{\dagger}_{21}-
{\rho}_{11}{\rho}_{12}H_{11}H^{\dagger}_{12}
{\left(I_{N_2}+{\rho}_{12}H_{12}H^{\dagger}_{12}\right)^{-1}}H_{12}H^{\dagger}_{11}\right)+\nonumber\\
&&\log  \det \left(I_{N_2}+{\rho}_{22}H_{22}H^{\dagger}_{22}+
{\rho}_{12}H_{12}H^{\dagger}_{12}-
{\rho}_{22}{\rho}_{21}H_{22}H^{\dagger}_{21}
{\left(I_{N_1}+{\rho}_{21}H_{21}H^{\dagger}_{21}\right)^{-1}}H_{21}H^{\dagger}_{22}\right)+\nonumber\\
&&C_{12}+C_{21}\nonumber\\
&=&\log  \det (I_{N_1}+{\rho}H_{11}H^{\dagger}_{11}+
{\rho}H_{21}H^{\dagger}_{21}-
{\rho}{\rho}H_{11}H^{\dagger}_{12}
{(I_{N_2}+{\rho}H_{12}H^{\dagger}_{12})^{-1}}H_{12}H^{\dagger}_{11})+\nonumber\\
&&\log  \det (I_{N_2}+{\rho}H_{22}H^{\dagger}_{22}+
{\rho}H_{12}H^{\dagger}_{12}-
{\rho}{\rho}H_{22}H^{\dagger}_{21}
{(I_{N_1}+{\rho}H_{21}H^{\dagger}_{21})^{-1}}H_{21}H^{\dagger}_{22})+\nonumber\\
&&C_{12}+C_{21}\nonumber\\
&\stackrel{(a)}{=}&(
\min\{N_1,(M_1-N_2)^++M_2\}
+\min\{N_2,(M_2-N_1)^++M_1\}
+\nonumber\\
&&\beta_{12}
+\beta_{21}
){\log \mathsf{SNR}\ }+o({\log  \mathsf{SNR}\ })),
\end{eqnarray}
where $(a)$ follows from Lemma \ref{dgg} and Lemma \ref{dlemma-dof}. Now, dividing both sides by $\log \mathsf{SNR}$, we obtain \eqref{dofeq3}.

\noindent {\bf \eqref{ro0eq4}$\rightarrow$\eqref{dofeq4}:} Consider bound \eqref{ro0eq4} in $\mathcal{R}_o$, we have
\begin{eqnarray}
&&\log  \det \left(I_{N_1}+{\rho}_{11}H_{11}H^{\dagger}_{11}-
{\rho}_{11}{\rho}_{12}H_{11}H^{\dagger}_{12}
{\left(I_{N_2}+{\rho}_{12}H_{12}H^{\dagger}_{12}\right)^{-1}}H_{12}H^{\dagger}_{11}\right)+\nonumber\\
&&\log  \det \left(I_{N_2}+{\rho}_{22}H_{22}H^{\dagger}_{22}+
{\rho}_{12}H_{12}H^{\dagger}_{12}\right)+C_{12}\nonumber\\
&=&\log  \det (I_{N_1}+{\rho}H_{11}H^{\dagger}_{11}-
{\rho}{\rho}H_{11}H^{\dagger}_{12}
{(I_{N_2}+{\rho}^{\alpha}H_{12}H^{\dagger}_{12})^{-1}}H_{12}H^{\dagger}_{11})+\nonumber\\
&&\log  \det (I_{N_2}+{\rho}H_{22}H^{\dagger}_{22}+
{\rho}H_{12}H^{\dagger}_{12})+C_{12}\nonumber\\
&\stackrel{(a)}{=}&(\min\{N_1,(M_1-N_2)^+\}+
\min\{N_2,M_1+M_2\}
+\beta_{12}
){\log \mathsf{SNR}\ }+o({\log  \mathsf{SNR}\ }),
\end{eqnarray}
where $(a)$ follows from Lemma \ref{dgg} and Lemma \ref{dlemma-dof2}. Now, dividing both sides by $\log \mathsf{SNR}$, we obtain \eqref{dofeq4}.

\noindent {\bf \eqref{ro0eq5}$\rightarrow$\eqref{dofeq5}:} This is obtained similarly to the previous bound by exchanging $1$ and $2$ in the indices.

\noindent {\bf \eqref{ro0eq6}$\rightarrow$\eqref{dofeqx}:} Consider bound \eqref{ro0eq6} in $\mathcal{R}_o$, using Lemma \ref{dgg} we have
\begin{eqnarray}
&& \log  \det \left(I_{N_1+N_2}+
\left[\begin{array}{c}
\sqrt{\rho_{11}}H_{11} \\
\sqrt{\rho_{12}}H_{12} \end{array}\right]
[\sqrt{\rho_{11}}H_{11}^{\dagger}\ \sqrt{\rho_{12}}H_{12}^{\dagger}]+
\left[\begin{array}{c}
\sqrt{\rho_{21}}H_{21} \\
\sqrt{\rho_{22}}H_{22} \end{array}\right]
[\sqrt{\rho_{21}}H_{21}^{\dagger}\ \sqrt{\rho_{22}}H_{22}^{\dagger}]\right)\nonumber\\
&=&\log  \det \left(I_{N_1+N_2}+
\rho\left[\begin{array}{c}
H_{11} \\
H_{12} \end{array}\right]
[H_{11}^{\dagger}\ H_{12}^{\dagger}]+
\rho\left[\begin{array}{c}
H_{21} \\
H_{22} \end{array}\right]
[H_{21}^{\dagger}\ H_{22}^{\dagger}]\right)\nonumber\\
&=&\min\{N_1+N_2,M_1+M_2\}{\log \mathsf{SNR}\ }+o({\log \mathsf{SNR}\ }).
\end{eqnarray}

\noindent {\bf \eqref{ro0eq7}$\rightarrow$\eqref{dofeq6}:} Consider bound bound \eqref{ro0eq7} in $\mathcal{R}_o$, we have
\begin{eqnarray}
&&\log  \det \left(I_{N_1}+{\rho}_{11}H_{11}H^{\dagger}_{11}-
{\rho}_{11}{\rho}_{12}H_{11}H^{\dagger}_{12}
{\left(I_{N_2}+{\rho}_{12}H_{12}H^{\dagger}_{12}\right)^{-1}}H_{12}H^{\dagger}_{11}\right)+\nonumber\\
&&\log  \det \left(I_{N_2}+{\rho}_{22}H_{22}H^{\dagger}_{22}+
{\rho}_{12}H_{12}H^{\dagger}_{12}-
{\rho}_{22}{\rho}_{21}H_{22}H^{\dagger}_{21}
{\left(I_{N_1}+{\rho}_{21}H_{21}H^{\dagger}_{21}\right)^{-1}}H_{21}H^{\dagger}_{22}\right)+\nonumber\\
&&\log  \det \left(I_{N_1}+{\rho}_{11}H_{11}H^{\dagger}_{11}+
{\rho}_{21}H_{21}H^{\dagger}_{21}\right)+C_{12}+C_{21}\nonumber\\
&=&\log \det  \log  \det (I_{N_1}+{\rho}H_{11}H^{\dagger}_{11}-
{\rho}H_{11}H^{\dagger}_{12}
{(I_{N_2}+{\rho}H_{12}H^{\dagger}_{12})^{-1}}{\rho}H_{12}H^{\dagger}_{11})+\nonumber\\
&&\log  \det (I_{N_2}+{\rho}H_{22}H^{\dagger}_{22}+
{\rho}H_{12}H^{\dagger}_{12}-
{\rho}H_{22}H^{\dagger}_{21}
{(I_{N_1}+{\rho}H_{21}H^{\dagger}_{21})^{-1}}{\rho}H_{21}H^{\dagger}_{22})+\nonumber\\
&&\log  \det (I_{N_1}+{\rho}H_{11}H^{\dagger}_{11}+
{\rho}H_{21}H^{\dagger}_{21})+\beta_{12}+\beta_{21} \nonumber\\
&\stackrel{(a)}{=}& \min\{N_2,(M_2-N_1)^++M_1\}+\min\{N_1,(M_1-N_2)^+\}+\nonumber\\
&&\min\{N_1,M_1+M_2\}+\beta_{12}+\beta_{21},
\end{eqnarray}
where $(a)$ is obtained from Lemma \ref{dgg} and Lemma \ref{dlemma-dof}. Now, dividing both sides by $\log \mathsf{SNR}$, we obtain \eqref{dofeq6}.

\noindent {\bf \eqref{ro0eq8}$\rightarrow$\eqref{dofeq7}:} This is obtained similarly to the previous bound by exchanging $1$ and $2$ in the indices.

\noindent {\bf \eqref{ro0eq9}$\rightarrow$\eqref{dofeqy}:} Consider bound \eqref{ro0eq9} in $\mathcal{R}_o$, we have

\begin{eqnarray}
&&\log  \det \left(I_{N_1+N_2}+
\left[\begin{array}{c}
\sqrt{\rho_{22}}H_{22} \\
\sqrt{\rho_{21}}H_{21} \end{array}\right]
(I_{M_2}-{\rho_{12}}H^{\dagger}_{21}{(I_{N_1}+{\rho}_{21}H_{21}H^{\dagger}_{21})^{-1}}H_{21})
[\sqrt{\rho_{22}}H_{22}^{\dagger}\ \sqrt{\rho_{21}}H_{21}^{\dagger}]\right.\nonumber\\
&&\left.+\left[\begin{array}{c}
\sqrt{\rho_{12}}H_{12} \\
\sqrt{\rho_{11}}H_{11} \end{array}\right]
[\sqrt{\rho_{12}}H_{12}^{\dagger}\ \sqrt{\rho_{11}}H_{11}^{\dagger}]\right)+\log  \det \left(I_{N_1}+{\rho}_{11}H_{11}H^{\dagger}_{11}+
{\rho}_{21}H_{21}H^{\dagger}_{12}\right)+C_{21}\nonumber\\
&=&\log  \det \left(I_{N_1+N_2}+
\rho\left[\begin{array}{c}
H_{22} \\
H_{21} \end{array}\right]
(I_{M_2}-{\rho}H^{\dagger}_{21}{(I_{N_1}+{\rho}H_{21}H^{\dagger}_{21})^{-1}}H_{21})
[H_{22}^{\dagger}\ H_{21}^{\dagger}]\right.\nonumber\\
&&\left.+\rho\left[\begin{array}{c}
H_{12} \\
H_{11} \end{array}\right]
[H_{12}^{\dagger}\ H_{11}^{\dagger}]\right)+\log  \det \left(I_{N_1}+{\rho}H_{11}H^{\dagger}_{11}+
{\rho}H_{21}H^{\dagger}_{12}\right)+C_{21}\nonumber\\
&=&(\min\{N_1+N_2,M_1\}+\min\{N_1,M_1+M_2\}+
\beta_{21}){\log \mathsf{SNR}\ }\nonumber\\
&&+o({\log \mathsf{SNR}\ }).
\end{eqnarray}

\noindent {\bf \eqref{ro0eq10}$\rightarrow$\eqref{dofeqz}:} This is obtained similarly to the previous bound by exchanging $1$ and $2$ in the indices.

Combining the above results we obtain Theorem \ref{thm_dof} results.

\section{Proof of Theorem \ref{thm_gdof}}
\label{Appendix3}

In this section, we will find the limit of $\mathcal{R}_o/\log \mathsf{SNR}$ as $\mathsf{SNR}\to \infty$ to get the result stated in Theorem \ref{thm_gdof} when ${{C }_{ij}}\sim{\mathsf{SNR}}^{\beta_{ij}}$, ${{\rho}_{ij}}\sim{\mathsf{SNR}}$ for $i=j$ and ${{\rho}_{ij}}\sim{\mathsf{SNR}^\alpha}$ for $i\neq j$ where ${\beta_{12}},{\beta_{21}}\in {\mathbb R}^+$.

This follows from Theorem \ref{outer_inner_capacity_reciprocal} since the capacity region is inner and outer- bounded by  $\mathcal{R}_o$ with constant gaps which would vanish for the DoF.
Before going over each of the above terms and finding their high SNR limit. We first give some lemmas that will be used for the proof.

\begin{lemma}[\cite{Varanasi}]
Let $H_{1}\in \mathbb{C}^{M\times M}$, $H_{2}\in \mathbb{C}^{M\times M}$, ..., and $H_{k}\in \mathbb{C}^{M\times M}$ be k full rank channel matrices. Then, the following holds
\begin{eqnarray}
&&{\log  {\det  (I_{M}+{\rho }^{\alpha_{1}}H_{1}H^{\dagger }_{1}+{\rho }^{\alpha_{2}}H_{2}H^{\dagger }_{2}+...+{\rho }^{\alpha_{k}}H_{k}H^{\dagger }_{k})\ }\ } \nonumber\\
&=&\max\{\alpha_1,\alpha_2,...,\alpha_k\}M
{\log \rho }+o({\log \rho }).
\end{eqnarray}\label{gg}
\end{lemma}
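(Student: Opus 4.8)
The plan is to prove the claimed asymptotic by sandwiching the left-hand side between two quantities that both equal $\alpha^{*}M\log\rho+o(\log\rho)$, where $\alpha^{*}\triangleq\max\{\alpha_1,\dots,\alpha_k\}$ (as in every application of this lemma in the paper, one may take $\alpha^{*}\ge 0$). Unlike Lemma~\ref{dgg}, no independence of the $H_i$ is needed: only the invertibility of each full-rank square matrix $H_i$ will be used, and that only in the lower bound.

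For the upper bound I would invoke the positive semidefinite ordering. Since $\rho^{\alpha_i}\le\rho^{\alpha^{*}}$ and each $H_iH_i^\dagger\succeq 0$, one has $I_M+\sum_{i=1}^{k}\rho^{\alpha_i}H_iH_i^\dagger\preceq I_M+\rho^{\alpha^{*}}A$ with $A\triangleq\sum_{i=1}^{k}H_iH_i^\dagger$ a fixed ($\rho$-independent) positive definite matrix. By monotonicity of $\log\det(\cdot)$ on the positive definite cone,
\[
\log\det\!\Big(I_M+\sum_{i=1}^{k}\rho^{\alpha_i}H_iH_i^\dagger\Big)\ \le\ \sum_{j=1}^{M}\log\!\big(1+\rho^{\alpha^{*}}\lambda_j(A)\big)\ \le\ M\log\!\big(1+\rho^{\alpha^{*}}\lambda_{\max}(A)\big),
\]
and since $\lambda_{\max}(A)$ is a fixed positive constant the right side equals $M\alpha^{*}\log\rho+o(\log\rho)$.

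For the lower bound I would keep only one summand: choosing $i^{*}$ with $\alpha_{i^{*}}=\alpha^{*}$, the remaining terms are positive semidefinite, so $I_M+\sum_{i=1}^{k}\rho^{\alpha_i}H_iH_i^\dagger\succeq I_M+\rho^{\alpha^{*}}H_{i^{*}}H_{i^{*}}^\dagger$ and, again by monotonicity,
\[
\log\det\!\Big(I_M+\sum_{i=1}^{k}\rho^{\alpha_i}H_iH_i^\dagger\Big)\ \ge\ \sum_{j=1}^{M}\log\!\big(1+\rho^{\alpha^{*}}\mu_j\big),
\]
where $\mu_1,\dots,\mu_M$ are the eigenvalues of $H_{i^{*}}H_{i^{*}}^\dagger$. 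Here the hypothesis enters: $H_{i^{*}}$ being an invertible $M\times M$ matrix forces every $\mu_j>0$, so each $\log(1+\rho^{\alpha^{*}}\mu_j)=\alpha^{*}\log\rho+O(1)$ and the sum is $M\alpha^{*}\log\rho+o(\log\rho)$. Combining this with the upper bound proves the lemma.

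I do not expect a genuine obstacle: the statement is a two-sided eigenvalue-growth estimate, and the $O(1)$ errors are harmless once $\alpha^{*}>0$. The only step that must not be skipped is the use of full-rankness of $H_{i^{*}}$ in the lower bound — otherwise $H_{i^{*}}H_{i^{*}}^\dagger$ would contribute only $\mathrm{rank}(H_{i^{*}})$ rather than the full $M$ spatial dimensions — which is exactly the role played here by the square full-rank assumption, in contrast to the stacking-and-independence structure needed in Lemma~\ref{dgg}.
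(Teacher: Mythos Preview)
The paper does not supply its own proof of this lemma; it is quoted verbatim with attribution to \cite{Varanasi}. Your sandwich argument is correct and self-contained: the upper bound via $\rho^{\alpha_i}\le\rho^{\alpha^{*}}$ (for large $\rho$) together with monotonicity of $\log\det(\cdot)$ on the positive-definite cone, and the lower bound via dropping all but the dominant summand and using invertibility of the square matrix $H_{i^{*}}$ so that every eigenvalue of $H_{i^{*}}H_{i^{*}}^{\dagger}$ is strictly positive, yield matching $M\alpha^{*}\log\rho+O(1)$ bounds. Your remark that the applications in the paper only need $\alpha^{*}\ge 0$ is also to the point, since the asserted asymptotic would actually fail for $\alpha^{*}<0$.
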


\begin{lemma}[\cite{J1}]
\label{lemma-dof}
Let $H_{ii}\in \mathbb{C}^{M\times M}$ and $H_{ij}\in \mathbb{C}^{M\times M}$ be two channel matrices with each entry independently chosen from $\mathsf{CN}(0,1)$. Then, the following holds with probability $1$ (over the randomness of channel matrices).
\begin{eqnarray}
&&\log \det (I_M+{\rho }H_{ii}H^{\dagger }_{ii}-\sqrt{{\rho }{\rho }^{\alpha}}H_{ii}H^{\dagger }_{ij}{(I_M+{\rho }^{\alpha}H_{ij}H^{\dagger }_{ij})}^{-1}
\sqrt{{\rho }{\rho }^{\alpha}}H_{ij}H^{\dagger }_{ii})\nonumber\\
&&=(1-{\alpha })^+M{\log \rho }+o({\log \rho }).
\end{eqnarray}
\end{lemma}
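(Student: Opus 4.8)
The plan is to recognize the matrix inside the $\log\det$ as a Schur complement, collapse it with the block-determinant identity of Lemma~\ref{lem_block}, and then read off the two leading exponents from Lemma~\ref{gg}.

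First I would set $A\triangleq I_M+\rho H_{ii}H_{ii}^{\dagger}$, $B\triangleq\sqrt{\rho\rho^{\alpha}}\,H_{ii}H_{ij}^{\dagger}$, $C\triangleq B^{\dagger}=\sqrt{\rho\rho^{\alpha}}\,H_{ij}H_{ii}^{\dagger}$ and $D\triangleq I_M+\rho^{\alpha}H_{ij}H_{ij}^{\dagger}$, all $M\times M$ with $D$ invertible, so that the matrix appearing in the statement is exactly the Schur complement $A-BD^{-1}C$. (Equivalently it is the conditional covariance of $\sqrt{\rho}H_{ii}X+Z_1$ given $\sqrt{\rho^{\alpha}}H_{ij}X+Z_2$ for independent $X,Z_1,Z_2\sim\mathsf{CN}(0,I_M)$, hence $\succeq I_M$ and of determinant $\ge 1$; this is only needed to know that every $\log\det$ below is nonnegative and finite.) Applying Lemma~\ref{lem_block} with $D$ invertible,
\[
\log\det(A-BD^{-1}C)=\log\det\left[\begin{array}{cc}A&B\\ C&D\end{array}\right]-\log\det D.
\]

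Next I would evaluate the $2M\times 2M$ determinant. With $P\triangleq\left[\begin{array}{c}\sqrt{\rho}\,H_{ii}\\ \sqrt{\rho^{\alpha}}\,H_{ij}\end{array}\right]\in\mathbb{C}^{2M\times M}$ one has $\left[\begin{array}{cc}A&B\\ C&D\end{array}\right]=I_{2M}+PP^{\dagger}$, and the Sylvester identity $\det(I_{2M}+PP^{\dagger})=\det(I_M+P^{\dagger}P)$ reduces it to $\det(I_M+\rho H_{ii}^{\dagger}H_{ii}+\rho^{\alpha}H_{ij}^{\dagger}H_{ij})$. Since the conjugate transpose of an i.i.d.\ $\mathsf{CN}(0,1)$ matrix is again an i.i.d.\ $\mathsf{CN}(0,1)$ matrix, $H_{ii}^{\dagger}$ and $H_{ij}^{\dagger}$ are full rank and independent with probability $1$, so Lemma~\ref{gg} with exponents $1$ and $\alpha$ gives
\[
\log\det(I_M+\rho H_{ii}^{\dagger}H_{ii}+\rho^{\alpha}H_{ij}^{\dagger}H_{ij})=\max\{1,\alpha\}\,M\log\rho+o(\log\rho),
\]
while the single-term case of Lemma~\ref{gg} gives $\log\det D=\log\det(I_M+\rho^{\alpha}H_{ij}H_{ij}^{\dagger})=\alpha M\log\rho+o(\log\rho)$. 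Subtracting,
\[
\log\det(A-BD^{-1}C)=(\max\{1,\alpha\}-\alpha)\,M\log\rho+o(\log\rho)=(1-\alpha)^{+}M\log\rho+o(\log\rho),
\]
which is the claim.

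The derivation is short, so the only substantive moves are the rearrangement steps: identifying the expression as $A-BD^{-1}C$ and pivoting on $D$ (not $A$) in Lemma~\ref{lem_block}, and then collapsing the $2M\times 2M$ determinant to $\det(I_M+P^{\dagger}P)$, which is what makes both leading exponents transparent; the $(1-\alpha)^{+}$ form then simply falls out of $\max\{1,\alpha\}-\alpha$. The hard part, to the extent there is one, is just bookkeeping — carrying the $o(\log\rho)$ terms cleanly through the difference — together with checking the genericity (full rank with probability $1$) needed to apply Lemma~\ref{gg}; the positive-definiteness remark above removes any worry about a degenerate $\log\det$.
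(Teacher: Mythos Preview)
Your argument is correct. The Schur-complement identification is exactly right, the pivot on $D$ in Lemma~\ref{lem_block} is the natural one, and collapsing $\det(I_{2M}+PP^{\dagger})$ to $\det(I_M+P^{\dagger}P)$ via Sylvester before invoking Lemma~\ref{gg} cleanly isolates the two exponents $\max\{1,\alpha\}$ and $\alpha$; their difference is indeed $(1-\alpha)^{+}$.

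As for comparison: the paper does not supply its own proof of this lemma---it is quoted from \cite{J1}---so there is no in-paper argument to set yours against. Your route is self-contained using only Lemmas~\ref{lem_block} and~\ref{gg} from this paper, which is arguably preferable here since it avoids an external reference. One small remark: your appeal to Lemma~\ref{gg} for the single term $\log\det D$ tacitly uses $\alpha\ge 0$ (so that $\rho^{\alpha}\to\infty$ or stays bounded), which is guaranteed by the paper's standing assumption $\alpha\in\mathbb{R}^{+}$; it does no harm to say so explicitly.
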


\begin{lemma}
\label{lemma-dof2}
Let $H_{ii}\in \mathbb{C}^{M\times M}$ and $H_{ij}\in \mathbb{C}^{M\times M}$ be two channel matrices with each entry independently chosen from $\mathsf{CN}(0,1)$. Then, the following holds with probability $1$ (over the randomness of channel matrices).
\begin{eqnarray}
&&\log \det (I_{N_j}+{\rho }^{\alpha}H_{ij}H^{\dagger }_{ij}-\sqrt{{\rho }{\rho }^{\alpha}}H_{ij}H^{\dagger }_{ii}{(I_{N_i}+{\rho }H_{ii}H^{\dagger }_{ii})}^{-1}
\sqrt{{\rho }{\rho }^{\alpha}}H_{ii}H^{\dagger }_{ij})\nonumber\\
&&=({\alpha }-1)^+M{\log \rho }+o({\log \rho }).
\end{eqnarray}
\end{lemma}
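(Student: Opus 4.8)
The plan is to recognize the matrix inside the $\log\det$ in the statement as a Schur complement of a positive‑definite $2M\times 2M$ block matrix, evaluate the determinant of that block matrix by the \emph{complementary} Schur decomposition, and then read off the high‑$\mathsf{SNR}$ order of each resulting factor from Lemma \ref{gg} and Lemma \ref{lemma-dof}. (Since $H_{ii},H_{ij}\in\mathbb{C}^{M\times M}$, every identity block that occurs is $I_M$; the $I_{N_i}$ and $I_{N_j}$ in the statement are both $I_M$.)

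First I would introduce
\[
\mathsf{B}\triangleq I_{2M}+\left[\begin{array}{c}\sqrt{\rho}H_{ii}\\ \sqrt{\rho^{\alpha}}H_{ij}\end{array}\right]\left[\sqrt{\rho}H_{ii}^{\dagger}\ \ \sqrt{\rho^{\alpha}}H_{ij}^{\dagger}\right]=\left[\begin{array}{cc}I_M+\rho H_{ii}H_{ii}^{\dagger} & \sqrt{\rho\rho^{\alpha}}H_{ii}H_{ij}^{\dagger}\\ \sqrt{\rho\rho^{\alpha}}H_{ij}H_{ii}^{\dagger} & I_M+\rho^{\alpha}H_{ij}H_{ij}^{\dagger}\end{array}\right],
\]
which is manifestly positive definite, so both of its Schur complements are positive definite and all determinants below are real and positive. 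Applying Lemma \ref{lem_block} while eliminating the top‑left block $I_M+\rho H_{ii}H_{ii}^{\dagger}$ shows that the argument of the $\log\det$ in the statement is exactly the Schur complement $\mathsf{B}/(I_M+\rho H_{ii}H_{ii}^{\dagger})$, and hence
\[
\log\det\Big(I_M+\rho^{\alpha}H_{ij}H_{ij}^{\dagger}-\sqrt{\rho\rho^{\alpha}}H_{ij}H_{ii}^{\dagger}(I_M+\rho H_{ii}H_{ii}^{\dagger})^{-1}\sqrt{\rho\rho^{\alpha}}H_{ii}H_{ij}^{\dagger}\Big)=\log\det\mathsf{B}-\log\det(I_M+\rho H_{ii}H_{ii}^{\dagger}).
\]

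Second, I would evaluate $\log\det\mathsf{B}$ by applying Lemma \ref{lem_block} the other way, eliminating the bottom‑right block, which yields
\[
\log\det\mathsf{B}=\log\det(I_M+\rho^{\alpha}H_{ij}H_{ij}^{\dagger})+\log\det\Big(I_M+\rho H_{ii}H_{ii}^{\dagger}-\sqrt{\rho\rho^{\alpha}}H_{ii}H_{ij}^{\dagger}(I_M+\rho^{\alpha}H_{ij}H_{ij}^{\dagger})^{-1}\sqrt{\rho\rho^{\alpha}}H_{ij}H_{ii}^{\dagger}\Big).
\]
Now each piece has a known order: the first term is $\alpha M\log\rho+o(\log\rho)$ by Lemma \ref{gg}; the second term is $(1-\alpha)^{+}M\log\rho+o(\log\rho)$ by Lemma \ref{lemma-dof} (with $H_{ii}$ in the role of the direct channel at power $\rho$ and $H_{ij}$ in the role of the cross channel at power $\rho^{\alpha}$); and $\log\det(I_M+\rho H_{ii}H_{ii}^{\dagger})=M\log\rho+o(\log\rho)$ again by Lemma \ref{gg}. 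All three hold with probability $1$ because $H_{ii},H_{ij}$ are independent and full rank, which are precisely the hypotheses of those lemmas. Substituting into the two displayed identities gives a leading coefficient $\big(\alpha+(1-\alpha)^{+}-1\big)M$, and since $\alpha+(1-\alpha)^{+}-1$ is $0$ for $\alpha\le 1$ and $\alpha-1$ for $\alpha\ge 1$, it equals $(\alpha-1)^{+}M$, which is the claim.

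The only genuinely nontrivial input here is Lemma \ref{lemma-dof}; everything else is bookkeeping with the block‑determinant identity, Lemma \ref{gg}, and the positivity of $\mathsf{B}$. So I do not expect a real obstacle: the work reduces to checking that the full‑rank and independence hypotheses carry over verbatim (they do) and to the trivial case split on $\alpha$ at the end. A self‑contained alternative would be to diagonalize $H_{ij}$ and track the generic orders of the eigenvalues directly on the matrix in the statement, mirroring the proof of Lemma \ref{lemma-dof} in \cite{J1}; but the Schur‑complement reduction above is shorter and reuses results already in hand, so that is the route I would take.
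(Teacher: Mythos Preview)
Your proof is correct. The paper's own ``proof'' is a one-line pointer: it says the argument is similar to that of Lemma~\ref{lemma-dof} in \cite{J1}, i.e.\ it expects you to rerun the direct eigenvalue/diagonalization computation with the roles of $\rho$ and $\rho^{\alpha}$ swapped. Your route is genuinely different: rather than redoing that computation, you use the block-determinant identity (Lemma~\ref{lem_block}) twice on the same $2M\times 2M$ Gram matrix $\mathsf{B}$ to express the quantity in the statement as $\log\det(I_M+\rho^{\alpha}H_{ij}H_{ij}^{\dagger})+\{\text{the expression in Lemma~\ref{lemma-dof}}\}-\log\det(I_M+\rho H_{ii}H_{ii}^{\dagger})$, and then invoke Lemma~\ref{gg} and Lemma~\ref{lemma-dof} as black boxes. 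The advantage of your approach is economy: it treats Lemma~\ref{lemma-dof2} as a formal corollary of Lemma~\ref{lemma-dof} plus elementary $\log\det$ asymptotics, with no new analysis. The advantage of the paper's intended approach is self-containment (no dependence on Lemma~\ref{lemma-dof} being stated with exactly compatible hypotheses), and it makes transparent the symmetry under $(\rho,\rho^{\alpha})\leftrightarrow(\rho^{\alpha},\rho)$ that sends $(1-\alpha)^{+}$ to $(\alpha-1)^{+}$. Either way the hypotheses (i.i.d.\ $\mathsf{CN}(0,1)$ entries, hence almost-sure full rank and independence) line up, so there is no gap.
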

\begin{proof}
The proof is similar to that of given in \cite{J1}.
\end{proof}

\begin{lemma} \label{sym-lemma}
Let $H\in \mathbb{C}^{M\times M}$ be a full rank channel matrix. Then, the following holds
\begin{eqnarray}
&& I_{M}-{\rho}H^{\dagger}{(I_{M}+
{\rho}HH^{\dagger})^{-1}}H\nonumber\\
&=& (I_{M}+{\rho}H^{\dagger }H)^{-1}
\end{eqnarray}
\end{lemma}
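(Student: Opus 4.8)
The plan is to recognize Lemma \ref{sym-lemma} as an instance of the matrix inversion lemma (Woodbury identity) and to give a short self-contained verification rather than citing it. First I would observe that both $I_M+\rho HH^\dagger$ and $I_M+\rho H^\dagger H$ are Hermitian positive definite, since $\rho\ge 0$ and $HH^\dagger,H^\dagger H\succeq 0$, hence invertible; in particular the left-hand side of the claimed identity is well defined, and it suffices to show that it is the inverse of $I_M+\rho H^\dagger H$, i.e. that
\[
\left(I_M-\rho H^\dagger\left(I_M+\rho HH^\dagger\right)^{-1}H\right)\left(I_M+\rho H^\dagger H\right)=I_M .
\]

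Second, I would expand the left-hand side of this last equation as
\[
I_M+\rho H^\dagger H-\rho H^\dagger\left(I_M+\rho HH^\dagger\right)^{-1}\left(H+\rho HH^\dagger H\right),
\]
and then use the factorization $H+\rho HH^\dagger H=\left(I_M+\rho HH^\dagger\right)H$ to collapse the last term to $\rho H^\dagger\left(I_M+\rho HH^\dagger\right)^{-1}\left(I_M+\rho HH^\dagger\right)H=\rho H^\dagger H$. The two copies of $\rho H^\dagger H$ cancel, leaving $I_M$, which establishes the identity. Equivalently, one may first prove the ``push-through'' relation $\left(I_M+\rho HH^\dagger\right)^{-1}H=H\left(I_M+\rho H^\dagger H\right)^{-1}$ (again by multiplying out) and substitute it directly into the left-hand side of the lemma.

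There is essentially no hard step here: the statement is purely algebraic and in fact holds for any matrix $H$ for which the relevant inverses exist, so the hypothesis that $H$ is square and full rank is used only to make the inverses appearing in the statement meaningful — and, given positive definiteness, that is automatic. The single point that genuinely needs a line of justification is the invertibility of $I_M+\rho HH^\dagger$ and $I_M+\rho H^\dagger H$, which I would dispatch by the positive-definiteness observation above.
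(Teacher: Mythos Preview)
Your proof is correct and essentially identical to the paper's: both verify the identity by multiplying through by $I_M+\rho H^\dagger H$ and reducing to the push-through relation $(I_M+\rho HH^\dagger)H=H(I_M+\rho H^\dagger H)$. The only cosmetic difference is that the paper multiplies on the left and you multiply on the right.
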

\begin{proof}
Let $B \triangleq I_{M}+{\rho}H^{\dagger }H$. Thus,
\begin{eqnarray}
I_{M}-{\rho}H^{\dagger}{(B^{\dagger})^{-1}}H=(B)^{-1}
\end{eqnarray}
Since $B$ is invertible, it is enough to show that
\begin{eqnarray}
B-{\rho}H^{\dagger}{(B^{\dagger})^{-1}}HB=I_M,
\end{eqnarray}
which is equivalent to showing
\begin{eqnarray}
{\rho}H^{\dagger}{(B^{\dagger})^{-1}}HB={\rho}H^{\dagger }H
\end{eqnarray}
So it is enough to prove ${(B^{\dagger})^{-1}}HB=H $. Or,
$HB= B^{\dagger}H$, which holds since $B = I_{M}+{\rho}H^{\dagger }H$.
\end{proof}

Now we find the high SNR limits of the bounds in \eqref{ro0eq1}-\eqref{ro0eq10} leading to Theorem 3.

\noindent {\bf \eqref{ro0eq1}$\rightarrow$\eqref{gdofeq1}:} Consider bound \eqref{ro0eq1} in $\mathcal{R}_o$, we have
\begin{eqnarray}
&&{\log  {\det  (I_{M}+{\rho}H_{11}H^{\dagger}_{11})}}+\min \{\log \det (I_{M}+{\rho }^{\alpha}H_{12}H^{\dagger }_{12}-{\rho }^{\alpha+1}H_{12}H^{\dagger }_{11} \nonumber\\
&&{(I_{M}+{\rho}H_{11}H^{\dagger}_{11})^{-1}}H_{11}H^{\dagger }_{12}),C_{21} \}\nonumber\\
&\stackrel{(a)}{=}&(M+
\min\{({\alpha}-1)^+M,\beta\}
){\log \mathsf{SNR}\ }+o({\log  \mathsf{SNR}\ })),
\end{eqnarray}
where $(a)$ follows from Lemma \ref{gg} and Lemma \ref{lemma-dof2}. Now, dividing both sides by $\log \mathsf{SNR}$, we obtain \eqref{gdofeq1}.

\noindent {\bf \eqref{ro0eq2}$\rightarrow$\eqref{gdofeq2}:} This is obtained similarly to the last bound by exchanging $1$ and $2$ in the indices.

\noindent {\bf \eqref{ro0eq3}$\rightarrow$\eqref{gdofeq3}:} Consider bound \eqref{ro0eq3} in $\mathcal{R}_o$, we have

\begin{eqnarray}
&&\log  \det (I_{M}+{\rho}H_{11}H^{\dagger}_{11}+
{\rho}^{\alpha}H_{21}H^{\dagger}_{21}-
{\rho}^{\alpha+1}H_{11}H^{\dagger}_{12}
{(I_{N_2}+{\rho}^{\alpha}H_{12}H^{\dagger}_{12})^{-1}}H_{12}H^{\dagger}_{11})+\nonumber\\
&&\log  \det (I_{M}+{\rho}H_{22}H^{\dagger}_{22}+
{\rho}^{\alpha}H_{12}H^{\dagger}_{12}-
{\rho}^{\alpha+1}H_{22}H^{\dagger}_{21}
{(I_{N_1}+{\rho}^{\alpha}H_{21}H^{\dagger}_{21})^{-1}}H_{21}H^{\dagger}_{22})+\nonumber\\
&&C_{12}+C_{21}\nonumber\\
&\stackrel{(a)}{=}&(
2M\max\{({1}-{\alpha})^+,{\alpha}\}+2\beta
){\log \mathsf{SNR}\ }+o({\log  \mathsf{SNR}\ })),
\end{eqnarray}
where $(a)$ follows from Lemma \ref{gg} and Lemma \ref{lemma-dof}. Now, dividing both sides by $\log \mathsf{SNR}$, we obtain \eqref{gdofeq3}.

\noindent {\bf \eqref{ro0eq4}$\rightarrow$\eqref{gdofeq4}:} Consider bound \eqref{ro0eq4} in $\mathcal{R}_o$, we have
\begin{eqnarray}
&&\log  \det (I_{M}+{\rho}H_{11}H^{\dagger}_{11}-
{\rho}^{\alpha+1}H_{11}H^{\dagger}_{12}
{(I_{M}+{\rho}^{\alpha}H_{12}H^{\dagger}_{12})^{-1}}H_{12}H^{\dagger}_{11})+\nonumber\\
&&\log  \det (I_{M}+{\rho}H_{22}H^{\dagger}_{22}+
{\rho}^{\alpha}H_{12}H^{\dagger}_{12})+C_{12}\nonumber\\
&\stackrel{(a)}{=}&(
({1}-{\alpha})^+M + M\max\{{1},{\alpha}\}+\beta
){\log \mathsf{SNR}\ }+o({\log  \mathsf{SNR}\ })),
\end{eqnarray}
where $(a)$ follows from Lemma \ref{gg} and Lemma \ref{lemma-dof2}. Now, dividing both sides by $\log \mathsf{SNR}$, we obtain \eqref{gdofeq4}.

\noindent {\bf \eqref{ro0eq5}$\rightarrow$\eqref{gdofeq4}:} This is obtained similarly to the last bound by exchanging $1$ and $2$ in the indices which gives the same bound as the last one.

\noindent {\bf \eqref{ro0eq6}$\rightarrow$\eqref{gdofeq6}:} Consider bound \eqref{ro0eq6} in $\mathcal{R}_o$, we have
\begin{eqnarray}
&&\log  \det \left(I_{2M}+
\left[\begin{array}{c}
\sqrt{\rho}H_{11} \\
\sqrt{\rho^{\alpha}}H_{12} \end{array}\right]
\left[\sqrt{\rho}H_{11}^{\dagger}\ \sqrt{\rho^{\alpha}}H_{12}^{\dagger}\right]+
\left[\begin{array}{c}
\sqrt{\rho^{\alpha}}H_{21} \\
\sqrt{\rho}H_{22} \end{array}\right]
\left[\sqrt{\rho^{\alpha}}H_{21}^{\dagger}\ \sqrt{\rho}H_{22}^{\dagger}\right]\right)\nonumber\\
&=& \log \det \left({\left[ \begin{array}{cc}
I_{M}+{\rho }H_{11}H^{\dagger }_{11}+{\rho }^{\alpha}H_{21}H^{\dagger }_{21} & {{\rho }^{\frac{\alpha+1}{2}}}H_{11}H^{\dagger}_{12}+{{\rho }^{\frac{\alpha+1}{2}}}H_{21}H^{\dagger}_{22} \\
{{\rho }^{\frac{\alpha+1}{2}}}H_{12}H^{\dagger}_{11}+{{\rho }^{\frac{\alpha+1}{2}}}H_{22}H^{\dagger}_{21} & I_{M}+{\rho }H_{22}H^{\dagger }_{22}+{\rho }^{\alpha}H_{12}H^{\dagger }_{12} \end{array}
\right]}\right)\nonumber\\
&\stackrel{(a)}{=}&
\log \det  (I_{M}+{\rho }H_{11}H^{\dagger }_{11}+{\rho }^{\alpha}H_{21}H^{\dagger }_{21})+
\log \det (I_{M}+{\rho }H_{22}H^{\dagger }_{22}+{\rho }^{\alpha}H_{12}H^{\dagger }_{12}-\nonumber\\
&&({{\rho }^{\frac{\alpha+1}{2}}}H_{12}H^{\dagger}_{11}+{{\rho }^{\frac{\alpha+1}{2}}}H_{22}H^{\dagger}_{21}){(I_{M}+{\rho }H_{11}H^{\dagger }_{11}+{\rho }^{\alpha}H_{21}H^{\dagger }_{21})^{-1}}\nonumber\\
&&({{\rho }^{\frac{\alpha+1}{2}}}H_{11}H^{\dagger}_{12}+{{\rho }^{\frac{\alpha+1}{2}}}H_{21}H^{\dagger}_{22}))\nonumber\\
&=&\log \det  (I_{M}+{\rho }H_{11}H^{\dagger }_{11}+{\rho }^{\alpha}H_{21}H^{\dagger }_{21})+
\log \det (I_{M}+{\rho }H_{22}H^{\dagger }_{22}+{\rho }^{\alpha}H_{12}H^{\dagger }_{12}-\nonumber\\
&&{{\rho }^{\alpha+1}}(H_{12}H^{\dagger}_{11}+H_{22}H^{\dagger}_{21}){(I_{M}+{\rho }H_{11}H^{\dagger }_{11}+{\rho }^{\alpha}H_{21}H^{\dagger }_{21})^{-1}}(H_{11}H^{\dagger}_{12}+H_{21}H^{\dagger}_{22}))\nonumber\\
&=& (2M\max\{1,\alpha\}){\log \mathsf{SNR}\ }+o({\log \mathsf{SNR}\ }),
\end{eqnarray}
where $(a)$ is obtained from Lemma \ref{lem_block}. Now, dividing both sides by $\log \mathsf{SNR}$, we obtain \eqref{gdofeq6}.

\noindent {\bf \eqref{ro0eq7}$\rightarrow$\eqref{gdofeq7}:} Consider bound \eqref{ro0eq7} in $\mathcal{R}_o$, we have
\begin{eqnarray}
&&\log \det  \log  \det (I_{M}+{\rho}H_{11}H^{\dagger}_{11}-
{\rho}^{\alpha+1}H_{11}H^{\dagger}_{12}
{(I_{N_2}+{\rho}^{\alpha}H_{12}H^{\dagger}_{12})^{-1}}H_{12}H^{\dagger}_{11})+\nonumber\\
&&\log  \det (I_{M}+{\rho}H_{22}H^{\dagger}_{22}+
{\rho}^{\alpha}H_{12}H^{\dagger}_{12}-
{\rho}^{\alpha+1}H_{22}H^{\dagger}_{21}
{(I_{N_1}+{\rho}^{\alpha}H_{21}H^{\dagger}_{21})^{-1}}H_{21}H^{\dagger}_{22})+\nonumber\\
&&\log  \det (I_{M}+{\rho}H_{11}H^{\dagger}_{11}+
{\rho}^{\alpha}H_{21}H^{\dagger}_{21})+C_{12}+C_{21} \nonumber\\
&\stackrel{(a)}{=}& (
M\max\{({1}-{\alpha})^+,{\alpha}\}+({1}-{\alpha})^+M
+M\max\{{1},{\alpha}\}+2\beta){\log \mathsf{SNR}\ }+o({\log \mathsf{SNR}\ }),
\end{eqnarray}
where $(a)$ is obtained from Lemma \ref{gg}, Lemma \ref{gg} and Lemma \ref{lemma-dof}. Now, dividing both sides by $\log \mathsf{SNR}$, we obtain \eqref{gdofeq7}.

\noindent {\bf \eqref{ro0eq8}$\rightarrow$\eqref{gdofeq8}:} This is obtained similarly to the last bound by exchanging $1$ and $2$ in the indices.

\noindent {\bf \eqref{ro0eq9}$\rightarrow$\eqref{gdofeq9}:} Consider bound \eqref{ro0eq9} in $\mathcal{R}_o$, we have

\begin{eqnarray}
&&\log  \det \left(I_{2M}+
\left[\begin{array}{c}
\sqrt{\rho}H_{22} \\
\sqrt{\rho^{\alpha}}H_{21} \end{array}\right]
\left(I_{M}-{\rho}^{\alpha}H^{\dagger}_{21}{\left(I_{N_1}+{\rho}^{\alpha}H_{21}H^{\dagger}_{21}\right)^{-1}}H_{21}\right)
\left[\sqrt{\rho}H_{22}^{\dagger}\ \sqrt{\rho^{\alpha}}H_{21}^{\dagger}\right]\right.\nonumber\\
&&\left.+\left[\begin{array}{c}
\sqrt{\rho^{\alpha}}H_{12} \\
\sqrt{\rho}H_{11} \end{array}\right]
\left[\sqrt{\rho^{\alpha}}H_{12}^{\dagger}\ \sqrt{\rho}H_{11}^{\dagger}\right]\right)+\log  \det \left(I_{M}+{\rho}H_{11}H^{\dagger}_{11}+
{\rho}^{\alpha}H_{21}H^{\dagger}_{12}\right)+C_{21}\nonumber\\
&\stackrel{(a)}{=}&
\log  \det \left(I_{2M}+
\left[\begin{array}{c}
\sqrt{\rho}H_{22} \\
\sqrt{\rho^{\alpha}}H_{21} \end{array}\right]
{\left(I_{M}+{\rho}^{\alpha}H^{\dagger }_{21}H_{21}\right)^{-1}}
\left[\sqrt{\rho}H_{22}^{\dagger}\ \sqrt{\rho^{\alpha}}H_{21}^{\dagger}\right]\right.\nonumber\\
&&\left.+\left[\begin{array}{c}
\sqrt{\rho^{\alpha}}H_{12} \\
\sqrt{\rho}H_{11} \end{array}\right]
\left[\sqrt{\rho^{\alpha}}H_{12}^{\dagger}\ \sqrt{\rho}H_{11}^{\dagger}\right]\right)+\log  \det \left(I_{M}+{\rho}H_{11}H^{\dagger}_{11}+
{\rho}^{\alpha}H_{21}H^{\dagger}_{12}\right)+C_{21}\nonumber\\
&\stackrel{(b)}{=}&
\left(M\max\{{1},{\alpha}\}+\beta\right){\log \mathsf{SNR}\ }+o\left({\log \mathsf{SNR}}\right) + \log \det \nonumber\\
&&\left[ \begin{array}{cc}
I_{M}+{\rho }H_{22}{(I_{M}+{\rho}^{\alpha}H^{\dagger }_{21}H_{21})^{-1}}H^{\dagger }_{22}+{\rho }^{\alpha}H_{12}H^{\dagger }_{12} & {{\rho }^{\frac{\alpha+1}{2}}}(H_{22}{(I_{M}+{\rho}^{\alpha}H^{\dagger }_{21}H_{21})^{-1}}H^{\dagger}_{21}+H_{12}H^{\dagger}_{11}) \\
{{\rho }^{\frac{\alpha+1}{2}}}(H_{21}{(I_{M}+{\rho}^{\alpha}H^{\dagger }_{21}H_{21})^{-1}}H^{\dagger}_{22}+H_{11}H^{\dagger}_{12}) & I_{M}+{\rho }H_{11}H^{\dagger }_{11}+{\rho }^{\alpha}H_{21}{(I_{M}+{\rho}^{\alpha}H^{\dagger }_{21}H_{21})^{-1}}H^{\dagger }_{21} \end{array}
\right]\nonumber\\
&\stackrel{(c)}{=}&
(M\max\{{1},{\alpha}\}+\beta){\log \mathsf{SNR}\ }+o({\log \mathsf{SNR}}) +\log \det  (I_{M}+{\rho }H_{11}H^{\dagger }_{11}+\nonumber\\
&&{\rho }^{\alpha}H_{21}{(I_{M}+{\rho}^{\alpha}H^{\dagger }_{21}H_{21})^{-1}}H^{\dagger }_{21})+
\log \det (I_{M}+{\rho }H_{22}{(I_{M}+{\rho}^{\alpha}H^{\dagger }_{21}H_{21})^{-1}}H^{\dagger }_{22}+{\rho }^{\alpha}H_{12}H^{\dagger }_{12}-\nonumber\\
&&({{\rho }^{\frac{\alpha+1}{2}}}(H_{22}{(I_{M}+{\rho}^{\alpha}H^{\dagger }_{21}H_{21})^{-1}}H^{\dagger}_{21}+H_{12}H^{\dagger}_{11}))
{(I_{M}+{\rho }H_{11}H^{\dagger }_{11}+{\rho }^{\alpha}H_{21}{(I_{M}+{\rho}^{\alpha}H^{\dagger }_{21}H_{21})^{-1}}H^{\dagger }_{21})^{-1}}\nonumber\\
&&({{\rho }^{\frac{\alpha+1}{2}}}(H_{21}{(I_{M}+{\rho}^{\alpha}H^{\dagger }_{21}H_{21})^{-1}}H^{\dagger}_{22}+H_{11}H^{\dagger}_{12})))\nonumber\\
&{=}&
(M + M\max\{{1},{\alpha}\}+\beta){\log \mathsf{SNR}}+ \log \det (I_{M}+{\rho }H_{22}{(I_{M}+{\rho}^{\alpha}H^{\dagger }_{21}H_{21})^{-1}}H^{\dagger }_{22}+{\rho }^{\alpha}H_{12}H^{\dagger }_{12}-\nonumber\\
&&{{\rho }^{\alpha+1}}(H_{22}{(I_{M}+{\rho}^{\alpha}H^{\dagger }_{21}H_{21})^{-1}}H^{\dagger}_{21}+H_{12}H^{\dagger}_{11})
{(I_{M}+{\rho }H_{11}H^{\dagger }_{11}+{\rho }^{\alpha}H_{21}{(I_{M}+{\rho}^{\alpha}H^{\dagger }_{21}H_{21})^{-1}}H^{\dagger }_{21})^{-1}}\nonumber\\
&&(H_{21}{(I_{M}+{\rho}^{\alpha}H^{\dagger }_{21}H_{21})^{-1}}H^{\dagger}_{22}+H_{11}H^{\dagger}_{12}))+o({\log \mathsf{SNR}}) \nonumber
\end{eqnarray}

\begin{eqnarray}
&{=}&
(M + M\max\{{1},{\alpha}\}+\beta){\log \mathsf{SNR}\ }+ \log \det (I_{M}+{\rho }H_{22}{(I_{M}+{\rho}^{\alpha}H^{\dagger }_{21}H_{21})^{-1}}H^{\dagger }_{22}+{\rho }^{\alpha}H_{12}H^{\dagger }_{12}\nonumber\\
&&-{\rho }^{\alpha+1}H_{12}H^{\dagger}_{11}{(I_{M}+{\rho }H_{11}H^{\dagger }_{11}+{\rho }^{\alpha}H_{21}{(I_{M}+{\rho}^{\alpha}H^{\dagger }_{21}H_{21})^{-1}}H^{\dagger }_{21})^{-1}}H_{11}H^{\dagger}_{12}\nonumber\\
&&-{\rho }^{\alpha+1}H_{22}{(I_{M}+{\rho}^{\alpha}H^{\dagger }_{21}H_{21})^{-1}}H^{\dagger}_{21}{(I_{M}+{\rho }H_{11}H^{\dagger }_{11}+{\rho }^{\alpha}H_{21}{(I_{M}+{\rho}^{\alpha}H^{\dagger }_{21}H_{21})^{-1}}H^{\dagger }_{21})^{-1}}H_{11}H^{\dagger}_{12}\nonumber\\
&&-{\rho }^{\alpha+1}H_{12}H^{\dagger}_{11}{(I_{M}+{\rho }H_{11}H^{\dagger }_{11}+{\rho }^{\alpha}H_{21}{(I_{M}+{\rho}^{\alpha}H^{\dagger }_{21}H_{21})^{-1}}H^{\dagger }_{21})^{-1}}H_{21}{(I_{M}+{\rho}^{\alpha}H^{\dagger }_{21}H_{21})^{-1}}H^{\dagger}_{22}\nonumber\\
&&-{\rho }^{\alpha+1}H_{22}{(I_{M}+{\rho}^{\alpha}H^{\dagger }_{21}H_{21})^{-1}}H^{\dagger}_{21}{(I_{M}+{\rho }H_{11}H^{\dagger }_{11}+{\rho }^{\alpha}H_{21}{(I_{M}+{\rho}^{\alpha}H^{\dagger }_{21}H_{21})^{-1}}H^{\dagger }_{21})^{-1}}H_{21}
\nonumber\\
&&{(I_{M}+{\rho}^{\alpha}H^{\dagger }_{21}H_{21})^{-1}}H^{\dagger}_{22})+o({\log \mathsf{SNR}})\nonumber\\
&\stackrel{(d)}{=}&
(M + M\max\{{1},{\alpha}\}+\beta){\log \mathsf{SNR}}+ \log \det (I_{M}+{\rho }H_{22}{(I_{M}+{\rho}^{\alpha}H^{\dagger }_{21}H_{21})^{-1}}H^{\dagger }_{22}+{\rho }^{\alpha}H_{12}H^{\dagger }_{12}\nonumber\\
&&-{\rho }^{\alpha+1}H_{12}H^{\dagger}_{11}{(I_{M}+{\rho }H_{11}H^{\dagger }_{11}+{\rho }^{\alpha}H_{21}{(I_{M}+{\rho}^{\alpha}H^{\dagger }_{21}H_{21})^{-1}}H^{\dagger }_{21})^{-1}}H_{11}H^{\dagger}_{12})+o({\log \mathsf{SNR}})\nonumber\\
&{=}&
(M + M\max\{{1},{\alpha}\}+\beta){\log \mathsf{SNR}\ }+ \log \det (I_{M}+{\rho }H_{22}{(I_{M}+{\rho}^{\alpha}H^{\dagger }_{21}H_{21})^{-1}}H^{\dagger }_{22}+{\rho }^{\alpha}H_{12}H^{\dagger }_{12}\nonumber\\
&&-{\rho }^{\alpha+1}H_{12}H^{\dagger}_{11}{(I_{M}+{\rho }H_{11}H^{\dagger }_{11}+
({\rho }^{-\alpha}{H^{\dagger}_{21}}^{-1}{(I_{M}+{\rho}^{\alpha}H^{\dagger }_{21}H_{21})}H^{-1}_{21}
)^{-1})^{-1}}H_{11}H^{\dagger}_{12})+o({\log \mathsf{SNR}}) \nonumber\\
&{=}&
(M + M\max\{{1},{\alpha}\}+\beta){\log \mathsf{SNR}\ }+ \log \det (I_{M}+{\rho }H_{22}{(I_{M}+{\rho}^{\alpha}H^{\dagger }_{21}H_{21})^{-1}}H^{\dagger }_{22}+{\rho }^{\alpha}H_{12}H^{\dagger }_{12}\nonumber\\
&&-{\rho }^{\alpha+1}H_{12}H^{\dagger}_{11}{(I_{M}+{\rho }H_{11}H^{\dagger }_{11}+
({\rho }^{-\alpha}{H^{\dagger}_{21}}^{-1}H^{-1}_{21}+I_{M}
)^{-1})^{-1}}H_{11}H^{\dagger}_{12})+o({\log \mathsf{SNR}}) \nonumber\\
&{=}&
(M + M\max\{{1},{\alpha}\}+\beta){\log \mathsf{SNR}\ }+ \log \det (I_{M}+{\rho }H_{22}{(I_{M}+{\rho}^{\alpha}H^{\dagger }_{21}H_{21})^{-1}}H^{\dagger }_{22}+\nonumber\\
&&{\rho }^{\alpha}H_{12}
(I_M-{\rho }H^{\dagger}_{11}{(I_{M}+{\rho }H_{11}H^{\dagger }_{11}+
({\rho }^{-\alpha}{H^{\dagger}_{21}}^{-1}H^{-1}_{21}+I_{M}
)^{-1})^{-1}}H_{11})
H^{\dagger}_{12})+o({\log \mathsf{SNR}}) \nonumber\\
&{=}&
(M + M\max\{{1},{\alpha}\}+\beta){\log \mathsf{SNR}\ }+ \log \det (I_{M}+{\rho }H_{22}{(I_{M}+{\rho}^{\alpha}H^{\dagger }_{21}H_{21})^{-1}}H^{\dagger }_{22}+\nonumber\\
&&{\rho }^{\alpha}H_{12}
(I_M-({\rho }^{-1}H^{-1}_{11}{(I_{M}+{\rho }H_{11}H^{\dagger }_{11}+
({\rho }^{-\alpha}{H^{\dagger}_{21}}^{-1}H^{-1}_{21}+I_{M}
)^{-1})}{H_{11}^{\dagger}}^{-1})^{-1})
H^{\dagger}_{12})+o({\log \mathsf{SNR}}) \nonumber\\
&\stackrel{(e)}{=}&
(M + M\max\{{1},{\alpha}\}+\beta){\log \mathsf{SNR}\ }+ \log \det (I_{M}+{\rho }H_{22}{(I_{M}+{\rho}^{\alpha}H^{\dagger }_{21}H_{21})^{-1}}H^{\dagger }_{22}+\nonumber\\
&&{\rho }^{\alpha}H_{12}
({\rho }^{-1}H^{-1}_{11}{(I_{M}+{\rho }H_{11}H^{\dagger }_{11}+
({\rho }^{-\alpha}{H^{\dagger}_{21}}^{-1}H^{-1}_{21}+I_{M}
)^{-1})}{H_{11}^{\dagger}}^{-1}-I_M)\nonumber\\
&&
({\rho }^{-1}H^{-1}_{11}{(I_{M}+{\rho }H_{11}H^{\dagger }_{11}+
({\rho }^{-\alpha}{H^{\dagger}_{21}}^{-1}H^{-1}_{21}+I_{M}
)^{-1})}{H_{11}^{\dagger}}^{-1})^{-1}
H^{\dagger}_{12})+o({\log \mathsf{SNR}}) \nonumber\\
&{=}&
(M + M\max\{{1},{\alpha}\}+\beta){\log \mathsf{SNR}\ }+ \log \det (I_{M}+{\rho }H_{22}{(I_{M}+{\rho}^{\alpha}H^{\dagger }_{21}H_{21})^{-1}}H^{\dagger }_{22}+\nonumber\\
&&{\rho }^{\alpha}H_{12}
({\rho }^{-1}H^{-1}_{11}{(I_{M}+{\rho }H_{11}H^{\dagger }_{11}+
({\rho }^{-\alpha}{H^{\dagger}_{21}}^{-1}H^{-1}_{21}+I_{M}
)^{-1})}{H_{11}^{\dagger}}^{-1}-{{\rho}^{-1}}H^{-1}_{11}(\rho H_{11}H^{\dagger}_{11})
{H^{\dagger}_{11}}^{-1})\nonumber\\
&&
({\rho }^{-1}H^{-1}_{11}{(I_{M}+{\rho }H_{11}H^{\dagger }_{11}+
({\rho }^{-\alpha}{H^{\dagger}_{21}}^{-1}H^{-1}_{21}+I_{M}
)^{-1})}{H_{11}^{\dagger}}^{-1})^{-1}
H^{\dagger}_{12})+o({\log \mathsf{SNR}}) \nonumber\\
&{=}&
(M + M\max\{{1},{\alpha}\}+\beta){\log \mathsf{SNR}\ }+ \log \det (I_{M}+{\rho }H_{22}{(I_{M}+{\rho}^{\alpha}H^{\dagger }_{21}H_{21})^{-1}}H^{\dagger }_{22}+\nonumber\\
&&{\rho }^{\alpha}H_{12}
({\rho }^{-1}H^{-1}_{11}{(I_{M}+
({\rho }^{-\alpha}{H^{\dagger}_{21}}^{-1}H^{-1}_{21}+I_{M}
)^{-1})}{H_{11}^{\dagger}}^{-1})\nonumber\\
&&
({\rho }^{-1}H^{-1}_{11}{(I_{M}+{\rho }H_{11}H^{\dagger }_{11}+
({\rho }^{-\alpha}{H^{\dagger}_{21}}^{-1}H^{-1}_{21}+I_{M}
)^{-1})}{H_{11}^{\dagger}}^{-1})^{-1}
H^{\dagger}_{12})+o({\log \mathsf{SNR}}) \nonumber
\end{eqnarray}

\begin{eqnarray}
&{=}&
(M + M\max\{{1},{\alpha}\}+\beta){\log \mathsf{SNR}\ }+
M\max\{({1}-{\alpha})^+,{\alpha-1}\}{\log \mathsf{SNR}\ }+o({\log \mathsf{SNR}}) \nonumber\\
&=& (M\max\{({2}-{\alpha})^+,{\alpha}\} + M\max\{{1},{\alpha}\}+\beta){\log \mathsf{SNR}\ }+o({\log \mathsf{SNR}}),
\end{eqnarray}
where $(a)$ is obtained from Lemma \ref{sym-lemma}, $(b)$ is obtained from Lemma \ref{gg}, $(c)$ is obtained from Lemma \ref{lem_block}, $(d)$ is because the three eliminated sentences have a constant upper bounds and $(e)$ follows from the fact that $I_M-X^{-1}=(X-I_M)X^{-1}$ where $X={\rho }^{-1}H^{-1}_{11}{(I_{M}+{\rho }H_{11}H^{\dagger }_{11}+
({\rho }^{-\alpha}{H^{\dagger}_{21}}^{-1}H^{-1}_{21}+I_{M}
)^{-1})}{H_{11}^{\dagger}}^{-1}$. Now, dividing both sides by $\log \mathsf{SNR}$, we obtain \eqref{gdofeq9}.

\noindent {\bf \eqref{ro0eq10}$\rightarrow$\eqref{gdofeq10}:} This is obtained similarly to the last bound by exchanging $1$ and $2$ in the indices.

Combining the above results we obtain Theorem \ref{thm_gdof} results.

\end{appendices}

\bibliographystyle{IEEETran}
\bibliography{bib}

\end{document}